\newif\ifshort
\newif\ifhighlight
\newtheorem{theorem}{Theorem}
\newtheorem{corollary}{Corollary}
\newtheorem{proposition}{Proposition}
\newtheorem{claim}{Claim}
\theoremstyle{definition}
\newtheorem{remark}{Remark}
\newenvironment{claimproof}[1][\proofname]
{\proof[#1]}
{\endproof}
\newcommand{\BibTeX}{B\kern-.05em{\sc i\kern-.025em b}\kern-.08em\TeX}
\crefname{table}{Table}{Tables}
\crefname{figure}{Figure}{Figures}
\crefname{theorem}{Theorem}{Theorems}
\crefname{corollary}{Corollary}{Corollaries}
\crefname{observation}{Observation}{Observations}
\crefname{lemma}{Lemma}{Lemmas}
\crefname{example}{Example}{Examples}
\crefname{reduction}{Reduction}{Reductions}
\crefname{construction}{Construction}{Constructions}
\crefname{subsection}{Subsection}{Subsections}
\crefname{section}{Section}{Sections}
\crefname{claim}{Claim}{Claims}
\crefname{proposition}{Proposition}{Propositions}
\crefname{algorithm}{Algorithm}{Algorithm}
\crefname{definition}{Definition}{Definitions}
\crefname{openquestion}{Open Question}{Open Questions}
\crefname{obs}{Observation}{Observations}
\crefname{remark}{Remark}{Remarks}
\newcommand{\finalutil}{\ensuremath{u^*}}
\newcommand{\sumutil}{\ensuremath{u_{\Sigma}}}
\newcommand{\sumreqs}{\ensuremath{r_{\Sigma}}}
\newcommand{\sumcaps}{\ensuremath{c_{\Sigma}}}
\newcommand{\maxcap}{\ensuremath{c_{\max}}}
\newcommand{\maxutilf}{\ensuremath{u_{\max}}}
\newcommand{\maxreq}{\ensuremath{r_{\max}}}
\newcommand{\maxcapl}{\ensuremath{m^{\mathcal{L}}_c}}
\newcommand{\sumcapmins}{\ensuremath{\underline{c}_{\Sigma}}}
\newcommand{\fa}{f}
\newcommand{\loc}{p}
\newcommand{\ser}{s}
\newcommand{\FA}{F}
\newcommand{\LOC}{P}
\newcommand{\SER}{S}
\newcommand{\noser}{t}
\newcommand{\NN}{\mathds{N}}
\newcommand{\ZZ}{\mathds{Z}}
\newcommand{\reqf}{\boldsymbol{r}}
\newcommand{\capl}{\bar{\boldsymbol{c}}}
\newcommand{\caplmin}{\underline{\boldsymbol{c}}}
\newcommand{\profit}{\boldsymbol{u}}
\newcommand{\tieno}{{{n_{\sim}}}}
\newcommand{\xx}{\boldsymbol{x}}
\newcommand{\yy}{\boldsymbol{y}}
\newcommand{\asg}{\sigma}
\newcommand{\blocksize}{\rho}
\newcommand{\blockbnd}{\rho}
\newcommand{\blockbig}{\rho_m^\star}
\newcommand{\superblock}{homogeneous $\blocksize$-block}
\newcommand{\mvin}{{\hspace{0.8pt}\textup{in}}}
\newcommand{\mvout}{{\hspace{0.8pt}\textup{out}}}
\newcommand{\curr}{\textup{curr}}
\newcommand{\nxt}{\textup{next}}
\newcommand{\wtD}{\widetilde{D}}
\newcommand{\mytilde}[1]{\stackrel{\textstyle\sim}{\smash{#1}\rule{0pt}{0.6ex}}}
\newcommand{\mysubtilde}[1]{\stackrel{\textstyle\sim}{\smash{#1}\rule{0pt}{0.2ex}}}
\newcommand\myprime{\mkern-3.5mu\raise0.6ex\hbox{$\scriptscriptstyle\prime$}}
\def\altm{\mytilde{m}}
\def\altsubm{\mysubtilde{m}}
\def\altsubmprime{\mysubtilde{m}\hspace{0.5pt}\myprime}
\def\true{\texttt{true}}
\def\false{\texttt{false}}
\def\R{\mathcal{R}}
\def\vecx{\mathbf{x}}
\def\vecy{\mathbf{y}}
\def\status{\boldsymbol{\ell}}
\def\statQ{\lambda}
\def\Q{\Lambda}
\def\chainpath{\mathcal{P}^\star}
\newcommand{\leqnomode}{\tagsleft@true\let\veqno\@@leqno}
\def\UU{\mathcal{U}}
\def\TT{\mathcal{T}}
\newcommand{\Ablock}{A^{\blocksize}}
\newcommand{\Ablockopt}{\bar{A}^{\blocksize}}
\newcommand{\Aconfig}[1]{A_r(#1)}
\newcommand{\altk}{k'}
\newcommand{\decprob}[3]{%

  \smallskip
  {  \centering
    \noindent\begin{minipage}{0.94\linewidth}%
      \textsc{#1}\\
      \textbf{Input:} #2\\
      \textbf{Question:} #3 
    \end{minipage}%
    \smallskip
    
    \par
  }
}
\newcommand{\dectask}[3]{%

  \smallskip   
  {  \centering
    \begin{minipage}{0.94\linewidth}%
      \textsc{#1}\\
      \textbf{Input:} #2\\
      \textbf{Task:} #3 
    \end{minipage}%
    \smallskip
    
    \par}
}
\newcommand{\RR}{\textsc{RR}}
\newcommand{\RefRes}{\textsc{Refugee Resettlement}}
\newcommand{\maxutil}{\textsc{MaxUtil}}
\newcommand{\pareto}{\textsc{Pareto}}
\newcommand{\feasible}{\textsc{Feasible}}
\newcommand{\feasibleRR}{\feasible-\RR}
\newcommand{\paretoRR}{\pareto-\RR}
\newcommand{\maxutilRR}{\maxutil-\RR}
\newcommand{\identical}{equal} 
\newcommand{\uniform}{equal} 
\newcommand{\binary}{binary}
\newcommand{\indifferent}{equal} 
\newcommand{\dichotomous}{dichotomous}
\newcommand{\acceptable}{acceptable}
\newcommand{\zerobound}{no lower quotas}
\DeclareMathOperator{\util}{util}
\DeclareMathOperator{\satur}{load}
\newcommand{\myemph}[1]{\emph{#1}}
\renewcommand{\myemph}[1]{{\color{green!40!black}\emph{#1}}}
\DeclareMathOperator{\res}{res}
\newcommand{\argmax}{\arg \max}
\newcommand{\argres}{\arg \res}
\newcommand{\config}{\ensuremath{c^{\FA}}}
\newcommand{\configs}{\ensuremath{C^{\FA}}}
\newcommand{\conref}[1]{Constraint~\eqref{#1}}
\newcommand{\loctype}{\ensuremath{\tau^{\LOC}}}
\newcommand{\loctypes}{\ensuremath{\mathcal{T}^{\LOC}}}
\newcommand{\NP}{NP}
\newcommand{\NPh}{\NP-hard}
\newcommand{\FPT}{FPT}
\newcommand{\PP}{P}
\newcommand{\XP}{XP}
\newcommand{\Wone}{W[1]}
\newcommand{\Woneh}{\Wone-hard}
\newcommand{\Wtwo}{W[2]}
\newcommand{\mypara}[1]{\smallskip
\noindent \textbf{#1}}
\newcommand{\todoS}[1]{\todo[linecolor=green!70!black, backgroundcolor=green!10]{S: #1}} 
\newcommand{\todoI}[1]{\todo[linecolor=blue!70!black, backgroundcolor=blue!10]{I: #1}} 
\newcommand{\todoHinline}[1]{
    \begin{mdframed}[backgroundcolor=orange!20]
    \scriptsize \color{orange!40!black}H: #1
  \end{mdframed}
}
\renewcommand{\todoHinline}[1]{
}
\newcommand{\appsymb}{$\star$}
\newcommand{\toappendix}[1]{%
  \gappto{\appendixtext}{
    {#1}
   }
}
\newcommand{\appendixproofwithstatement}[3]{%
  \gappto{\appendixtext}{
    \subsection{Proof of \cref{#1}}\label{proof:#1}
    #2
    \begin{proof}
    #3\end{proof}
  }
}
\newcommand{\appendixproof}[2]{%
  \gappto{\appendixtext}{
    \subsection{Proof of \cref{#1}}\label{proof:#1}
    #2
    }
}
\newcommand{\appendixcorrectnessproofwithstatement}[4]{%
  #1  
  \gappto{\appendixtext}{
    \subsection{Correctness of the Construction in the Proof
      of \cref{#2}}\label{proof:#2}
    {\normalfont\emph{#3}}

    #4
    }
}
\newcommand{\appendixsection}[1]{%
  \gappto{\appendixtext}{
    \section{Additional material for Section~\ref{#1}}
    \label{appsec:#1}
  }
}
\begin{document}


\begin{frontmatter}


\paperid{1227} 


\newcommand{\probORR}{
  Optimal Refugee Resettlement
}
\title{%
  Parameterized Algorithms for\\ \probORR
}


\author[A]{\fnms{Jiehua}~\snm{Chen}\orcid{0000-0002-8163-1327}\footnote{Equal contribution.}\thanks{Corresponding Author. Email: jiehua.chen@tuwien.ac.at.}}
\author[B,C]{\fnms{Ildik\'o}~\snm{Schlotter}\orcid{0000-0002-0114-8280}\footnotemark}
\author[A]{\fnms{Sofia}~\snm{Simola}\orcid{0000-0001-7941-0018}\footnotemark}

\address[A]{TU Wien, Austria}
\address[B]{HUN-REN Centre for Economic and Regional Studies, Hungary}
\address[C]{Budapest University of Technology and Economics, Hungary}


\begin{abstract}
We study variants of the \probORR\ problem where a set~$\FA$ of refugee families need to be allocated to a set~$\LOC$ of possible places of resettlement in a feasible and optimal way. 
Feasibility issues emerge from the assumption that each family requires certain services (such as accommodation, school seats, or medical assistance), while there is an upper and, possibly, a lower quota on the number of service units provided at a given place. 
Besides studying the problem of finding a feasible assignment, we also investigate two natural optimization variants. 
In the first one, we allow families to express preferences over~$P$, and we aim for a Pareto-optimal assignment. 
In a more general setting, families can attribute utilities to each place in~$P$, and the task is to find a feasible assignment with maximum total utilities.
We study the computational complexity of all three variants in a multivariate fashion using the framework of parameterized complexity. 
We provide fixed-parameter tractable algorithms for a handful of natural parameterizations, and complement these tractable cases with tight intractability results.
\end{abstract}

\end{frontmatter}


\section{Introduction}

At the 2023 Global Refugee Forum, the UN High Commissioner for Refugees reported that \emph{114 million} people are currently displaced due to persecution, human rights violations, violence, and wars, and made a direct appeal to everyone to join forces to help refugees find protection.\footnote{\url{https://www.unhcr.org/global-refugee-forum-2023}}
This immense number highlights the critical need for effective resettlement strategies that cater to diverse populations.

Refugee resettlement involves not just relocating individuals but also families, each with distinct needs and service requirements ranging from accommodation to education and medical assistance.
\citet{delacretaz2023matching} and \citet{AAMTT21RR} propose a \emph{multi-dimensional} and \emph{multiple knapsack} model to address these challenges.
Their model takes into account the specific needs of \emph{refugee families} who require a range of services, as well as the capacity constraints of potential hosting places that have specific upper and lower quotas on the services they can offer.
The goal is to determine 
a \myemph{feasible} assignment from the families to the places which satisfies the specific needs of the families while ensuring that no place is over- or under-subscribed according to its capacity constraints.
Additionally, the model may include a \myemph{utility score} for each family--place pair which estimates the ``profit'' that a family may contribute to a place; such profit could be for example the employment outcome.
Optimizing the assignment means finding a feasible assignment that yields a maximum total utility.

If we care about the welfare and choices of the refugee families,
we may allow them to express \emph{preferences} over places which they find acceptable~\cite{delacretaz2023matching}.
A standard optimality criterion in such a case is Pareto-optimality, which means that we aim for a feasible assignment for which no other feasible assignment can make one family better off without making another worse off.

Unfortunately, it is computationally intractable (i.e., \NPh) to determine whether a feasible assignment exists~\cite{AzizGaspersSunWalsh2019aamas}.
Similarly, it is \NPh\ to find a feasible assignment with maximum 
total utility or one that is Pareto-optimal, even if there are no lower quotas~\cite{aziz2018stability,gurski2019knapsack,AAMTT21RR}.
To tackle these complexities, we examine the parameterized complexity of the three computational problems for refugee resettlement that we study, \feasibleRR, \maxutilRR\, and \paretoRR, and provide parameterized algorithms for them.
We focus on canonical parameters such as the number of places ($m$), the number of refugee families ($n$), the number of services ($\noser$), and the desired utility ($\finalutil$).
We also consider additional parameters that are motivated from real-life scenarios, including the maximum number~$\maxreq$ of units required by a family per service and the maximum utility~$\maxutilf$ a family can contribute. 
The service units can reasonably be assumed to be small integers in practical situations when a family's requirements describe their need for housing (e.g., number of beds or bedrooms) or education (e.g., the number of school seats or kindergarten places). 
Our study provides new insights into the parameterized complexities of these problems, presenting fixed-parameter (FPT) algorithms for several natural parameterizations, and contrasting these with strong intractability results. See \cref{table:results} for an overview.
We summarize our main contributions as follows.

\mypara{Single service.} We develop an \FPT\ algorithm w.r.t.~$\maxreq$ for \feasibleRR; the algorithm also applies to \maxutilRR\ and \paretoRR\ when all families have the same utilities for all places (\emph{equal utilities}) or 
are indifferent between all of them (\emph{equal preferences}), respectively; see \cref{thm:onetboundedrmaxilp}.  
  The main idea is to group all families together that have the same requirements, and group all places together with the same lower and upper quotas.
  Then, we observe that either the upper quotas are small (i.e., bounded by a function in~$\maxreq$) so we can brute-force search all possible partitions of the families into different places, or 
  there is a so-called \myemph{\superblock} (see \cref{sec:singleservice} for the formal definition) that can be exchanged across the places, which enables us to replace the upper quota of each place with a value bounded by a function of~$\maxreq$.
  In this way, we bound the number of groups of families and places, and can use integer linear programming (ILP) to obtain an FPT algorithm for~$\maxreq$.

We also propose an FPT algorithm for the combined parameter~$m+\maxreq$ for the general case when families may have different utilities or preferences (\cref{thm:onetboundedrmax+m}).
 The generalized algorithm additionally uses the idea that \myemph{\superblock s} can be exchanged across places, and combines dynamic programming  with color-coding~\cite{AYZ95} to find an optimal solution in FPT time.

\mypara{Multiple services.} 
In \cref{thm:fptservmaxreq}, we extend the FPT algorithm of \cref{thm:onetboundedrmaxilp} for the setting of equal preferences or utilities to multiple service types 
by combining the parameters~$\maxreq$ and $\noser$, the number of services; we use the technique of $N$-fold integer programming~\cite{HOR-2013}. 
We present a more general FPT algorithm for \pareto-\RR\ with  parameter $\maxreq+\noser+m$ which also solves \maxutil-\RR\ if the number of different utility values is bounded (\cref{thm:fptservmaxcaploc}); this result relies on Lenstra's result on solving ILPs with bounded dimension~\cite{lenstra1983integer}.
Contrasting our algorithmic results, we prove that \pareto- and \maxutil-\RR\ are both \NPh\ already for three places, even if there are no lower quotas, all  upper quotas are~$1$, and families have equal preferences or utilities, respectively; see \cref{thm:pareto_loc_maxcap}.
\todoHinline{Please fill in the citations.}

\mypara{Related work.}
The model we study is the same as that of Ahani et al.~\cite{AAMTT21RR}. They formulate \maxutil-\RR\ via Integer Linear Programming (ILP) and study its performance.
The same model without lower quotas has attracted previous study:
It was introduced in a working paper by Delacr{\'e}taz et al.~\cite{delacretaz2016refugee} (
see also~\cite{delacretaz2023matching}).
The paper provides an algorithm for finding a Pareto-efficient matching when the preferences are strict, and also studies other stability concepts.
Aziz et al.~\cite{aziz2018stability} show that \emph{finding} a Pareto-optimal assignment is NP-hard even when the families are indifferent between places, and study a few other stability concepts.
Nguyen et al.~\cite{nguyen2021stability} use fractional matchings to find group-stable assignments which violate the quotas only a little.
None of the works above focuses on the parameterized complexity of the problems.

As already mentioned by \citet{AAMTT21RR}, the \maxutilRR\ problem is a generalization of the  \textsc{Multiple/Multidimensional Knapsack} problem~\cite{AAMTT21RR}.
The parameterized complexity of the latter has been studied by \citet{gurski2019knapsack}, and several of our hardness-results are obtained either directly from them or from modifications of their reductions.
\textsc{Multiple/Multidimensional Knapsack} however has neither lower quotas nor different profits for items depending on which knapsack they are placed in. They also assume the sizes and profits are encoded in binary, whereas we assume they are encoded in unary.
Hence, their parameterized algorithms are not directly applicable to our problems.
\feasibleRR\ generalizes \textsc{Bin Packing}~\cite{JKMS-binpacking-2013} and hence \textsc{Simple Multidimensional Partitioned Subset Sum}~\cite{ganian23group}; note that the latter two problems are equivalent.
Since \textsc{Bin Packing} is \Wone-hard w.r.t.\ the number of bins and the bins correspond to the places in our setting, \Wone-hardness for \feasibleRR\ follows; see \cref{prop:wtbinpacking}.



The problem can be seen as an extension of different classical matching problems.
We can model \textsc{Matching with diversity constraints}~\cite{BiroFleinerIrvingManlove2010,HamadaIwamaMiyazaki2016algorithmica,CGH2021DiverSM,AzizGaspersSunWalsh2019aamas,abdulkadirouglu2005college,kurata2017controlled} by using services as types. 
In the case where we have a single service, the problem can be seen as a variant of \textsc{Matching with Sizes}~\cite{biro-mcdermid-matching-sizes,mcdermid2010keeping}, where the service requirements correspond to the sizes.

Refugee resettlement has also been studied in the literature under other types of models: 
Online setting~\cite{andersson2018dynamic,ahani2023dynamic,bansak2024outcome},
one-to-one housing~\cite{andersson2020assigning},
preferences based on weighted vectors~\cite{xepapadeas2022refugee}, hedonic games~\cite{kuckuck2019refugee}, and
placing refugees on a graph~\cite{knop2023host,lisowski2023swap,schierreich2023anonymous}.

\mypara{Paper structure.} In \cref{sec:prelim}, we formally define \RR.
We investigate the case when there is only one service and when there are multiple services in \cref{sec:singleservice} and \cref{sec:moreservices}, respectively.
In \cref{sec:moreservices}, we first look at the \feasibleRR\ problem, followed by \paretoRR\ the problem, and finally the \maxutilRR\ problem.
We conclude with a discussion on potential areas for future research in \cref{sec:conclusion}.
\ifshort
Additional results and the proofs for the statements marked with \appsymb{} are deferred to the full version of the paper~\cite{fullversion}.\fi

\section{Preliminaries}\label{sec:prelim}
\appendixsection{sec:prelim}
For an integer~$z$, we use $[z]$ to denote the set~$\{1,2,\dots,z\}$.
Given two vectors~$\xx$ and $\yy$ of same length, we write \myemph{$\xx \le \yy$} if for each coordinate~$i$ it holds that $\xx[i]\le \yy[j]$.

%
An \myemph{instance} of \RR\ is a tuple~$(\FA, \LOC, \SER, (\reqf_i)_{\fa_i\in \FA}, (\caplmin_j,\capl_j)_{\loc_j\in \LOC})$ with the following information.
\begin{compactitem}[--]
  \item $\FA$ denotes a set of $n$ refugee \myemph{families} with $\FA=\{\fa_1, \dots, \fa_{n}\}$,
  \item $\LOC$ denotes a set of $m$ \myemph{places} with $\LOC = \{\loc_1, \dots, \loc_{m}\}$, and
  \item $\SER$ denotes a set of $\noser$ \myemph{services} $\SER = \{\ser_1, \dots, \ser_{\noser}\}$, such that 
  \item each family~$\fa_i\in \FA$ has a \myemph{requirement} vector~$\reqf_i \in \NN^\noser$ where, for every $\ser_k \in \SER$, the value $\reqf_i[k]$ determines how many units of service~$\ser_k$ the family~$\fa_i$ requires, and
  \item each place $\loc_j \in \LOC$ has two vectors~$\caplmin_j, \capl_j\in \NN^{\noser}$, denoted as \myemph{lower quota} and \myemph{upper quota} which indicate  
  for every service~$\ser_k \in \SER$, the minimum and maximum number of units place~$\loc_j$ can provide. 
Non-zero lower quotas for places may for example follow from an obligation for a place to house at least a certain number of refugees. If the lower quota of every place is a zero-vector, then we say that the instance has \myemph{\zerobound}.
\end{compactitem}

\mypara{Assignments.} Given an instance of \RefRes, an \myemph{assignment} is a function $\asg \colon \FA \to \LOC \cup \{\bot\}$; we say that
$\fa_i \in \FA$ is \myemph{assigned} to a place $\loc_j \in \LOC$ if $\asg(\fa_i) = \loc_j$, and
$\fa_i$ is \myemph{unassigned} if $\asg(\fa_i) = \bot$.
We define the \myemph{load} vector of a place~$\loc_j \in \LOC$ under~$\asg$ as~$\satur(\loc_j,\asg)\coloneqq \sum_{\fa_i \in \asg^{-1}(\ell_j)}  \reqf_i[k]$; for each service~$\ser_k\in \SER$, $\satur(\loc_j, \asg)[k]$ denotes the number of units that are required by the refugees that are assigned to~$\loc_j$. 
An assignment is \myemph{complete} if it does not leave any families unassigned.
An assignment is \myemph{feasible} if for every place~$\loc_j \in \LOC$ the load vector is within the lower and upper quota, i.e., $ \caplmin_j \le \satur(\loc_j,\asg)\le \capl_j$.
Place $\loc_j$ can \myemph{accommodate} a set of families~$\FA' \subseteq \FA$
if $\sum_{\fa_{i'} \in \FA'}  \reqf_{i'}[k] \leq \capl_j[k]$ for each~$\ser_k \in \SER$.

\mypara{Utilities.} Each family may contribute a certain utility to each place. 
To model this, each family $\fa_i \in \FA$ expresses an integral \myemph{utility} vector~${\profit_i \in \ZZ^m}$, where for every~$\loc_j \in \LOC$, the value~$\profit_i[j]$ indicates the utility of family~$\fa_i$ if assigned to~$\loc_j$.
Note that we also allow negative utilities, but it will be evident that all hardness results hold even if the utilities are non-negative.
Given an assignment $\asg$, we define the \myemph{(total) utility} of the assignment as the sum of all utilities obtained by the families, i.e.,~\myemph{$\util(\asg)$} $=\sum_{\loc_j \in \LOC} \sum_{\fa_i \in \asg^{-1}(\loc_j)} \profit_i[j]$. 
We consider two special kinds of utility vectors.
We say that the families have \myemph{\identical\ utilities} if all utility values~$\profit_i[j]$ are equal and positive over all families~$\fa_i \in \FA$ and places~$\loc_j \in \LOC$,
and families have \myemph{\binary} utilities if each utility value is either zero or one. 

\mypara{Preferences and Pareto-optimal assignments.} 
Each family $\fa_i \in \FA$ may only find a subset of places \myemph{acceptable} and may have a \myemph{preference list~$\succeq_i$} over the acceptable places, i.e., a 
weak order over a subset of $\LOC$. For a family~$\fa_i$ and two places~$\loc$ and~$\loc'$ in its preference list, $\loc \succeq_i \loc'$ means that $\fa_i$ \myemph{weakly prefers}~$\loc$ to~$\loc'$. 
If $\loc \succeq_i \loc'$  and $\loc' \succeq_i \loc$, then we write $\loc \sim_i \loc'$ and say that $\fa_i$ is \myemph{indifferent} between~$\loc$ and~$\loc'$. 
We write 
$\loc \succ_i \loc'$ to denote that $\fa_i$ \myemph{(strictly) prefers} $\loc$ to~$\loc'$, meaning that $\loc \succeq_i \loc'$ but $\loc \not\succeq_i \loc'$.
%
If the preference list of~$\fa_i$ contains~$\loc$, then~$\fa_i$ finds $\loc$ \myemph{acceptable}.
We assume that each family~$\fa_i$ prefers being assigned to some place in his preference list over being unassigned; accordingly, we write \myemph{$\loc \succ_i \bot$}.
An assignment is \myemph{\acceptable} if every family is either unassigned or assigned to a place it finds acceptable.

We also define \myemph{\indifferent} and \myemph{\dichotomous} preferences:
If every family finds every place acceptable and is additionally indifferent between them, the preferences are \indifferent. If every family is indifferent between every place it finds acceptable, the preferences are \dichotomous.


A feasible and \acceptable\ assignment $\asg$ is \myemph{Pareto-optimal} if it admits no \myemph{Pareto-improvement}, that is, a feasible and \acceptable\ assignment $\asg'$ such that $\asg'(\fa_i) \succeq_i \asg(\fa_i)$ for every $\fa_i \in \FA$ and there exists at least one family $\fa_{i'} \in \FA$ such that $\asg'(\fa_{i'}) \succ_{i'} \asg(\fa_{i'})$. %

\ifshort
We present an example of our model in the full version~\cite{fullversion}.
\fi

\toappendix{
\subsection{Example for \RefRes}
\label{sec:app:example}

Consider an instance of \RefRes\	 with four families, $f_1,f_2,f_3,$ and $f_4$, two places $\loc_1$ and $\loc_2$. Families may require two kinds of services: housing, which is determined directly by the number of people in the family, and school seats for their children. 

The requirement of the families is as follows.

{\centering
\smallskip
\begin{tabular}{l@{\hspace{3pt}}l}
$f_1$: & $\reqf_1=(4,2)$, i.e., a family of four with two school-age children;  \\
$f_2$: & $\reqf_2=(2,0)$, i.e., a family of two, no children; \\
$f_3$: & $\reqf_3=(6,2)$, i.e., a family of six with two school-age children;  \\
$f_4$: & $\reqf_4=(3,1)$, i.e., a family of three with one school-age child.  
\end{tabular}
\smallskip
}

Place~$\loc_1$ offers housing for at most 10 people, while place~$\loc_2$ can house at most 8 people. 
To promote diversity, both places aim to enroll at least two refugee children in their schools, and both have three school seats to offer. 
Hence, the lower and upper quotas are as follows:

\smallskip
\begin{tabular}{l@{\hspace{3pt}}l}
$\loc_1$: & $\caplmin_1=(0,2)$ and $ \capl_1=(10,3)$; \\
$\loc_2$: & $\caplmin_2=(0,2)$ and $ \capl_2=(8,3)$.
\end{tabular}
\smallskip

Both places are acceptable to each refugee  family, and the preferences of the families are defined as

\smallskip
\begin{tabular}{l@{\hspace{3pt}}l}
$f_1$: & $\loc_1 \succ \loc_2$; \\
$f_2$: &  $\loc_2 \succ \loc_1$; \\
$f_3$: & $\loc_2 \succ \loc_1$; \\
$f_4$: & $\loc_2 \succ \loc_1$.
\end{tabular}
\smallskip

Alternatively, families may have the following utility vectors:

\smallskip
\begin{tabular}{l@{\hspace{3pt}}l}
$f_1$: & $\profit_1=(2,1)$; \\
$f_2$: & $\profit_2=(1,2)$; \\
$f_3$: & $\profit_3=(1,2)$; \\
$f_4$: & $\profit_4=(1,2)$.
\end{tabular}
\smallskip

Consider the assignment $\asg$ that assigns families $f_2$ and $f_3$ to~$\loc_1$, and assigns families~$f_1$ and~$f_4$ to~$\loc_2$. It is clear that 
\begin{linenomath*}
\begin{align*}
(0,2)=\caplmin_1 & \leq \sum_{\fa_i \in \asg^{-1}(\loc_1)} \reqf_i = (8,2) \leq \capl_1=(10,3) \text{ and} \\
(0,2)=\caplmin_2 & \leq \sum_{\fa_i \in \asg^{-1}(\loc_1)} \reqf_i = (7,3) \leq \capl_2=(8,3), 
\end{align*}
\end{linenomath*}
implying that $\asg$ is a feasible assignment.

It is also not hard to see that $\asg$ is Pareto-optimal as well. To see this, consider first family~$\fa_4$ who is assigned to its favorite place. Notice that once we decide to assign~$\fa_4$ to~$\loc_2$, neither~$\fa_2$ nor~$\fa_3$ may be additionally assigned to~$\loc_2$: the former would violate the upper quota of~$\loc_2$ to house at most~$8$ people, while the latter would violate the lower quota of~$\loc_2$ to enroll at least two children in its school (unless some further family is added, violating the upper quota).
Notice also that once $\fa_2$ and~$\fa_3$ are both assigned to~$\loc_1$, it is not possible to accommodate~$\fa_1$ alongside them  at~$\loc_1$, as that would violate the upper quota of~$\loc_1$ to house at most~$10$ people. 
Therefore, without disimproving the situation of~$\fa_4$, we cannot improve the situation of any other family in any feasible assignment, showing the Pareto-optimality of~$\asg$.

It is also clear that in the setting with utilities, the total utility of~$\asg$ is $\util(\asg)=3\cdot 1 + 2=5$.

Consider now the assignment $\asg'$ 
that assigns families $f_1$ and $f_4$ to~$\loc_1$, and assigns families~$f_2$ and~$f_3$ to~$\loc_2$. It is clear that 
\begin{linenomath*}
\begin{align*}
(0,2)=\caplmin_1 & \leq \sum_{\fa_i \in \asg'^{-1}(\loc_1)} \reqf_i = (7,3) \leq \capl_1=(10,3) \text{ and } \\
(0,2)=\caplmin_2 & \leq \sum_{\fa_i \in \asg'^{-1}(\loc_1)} \reqf_i = (8,2) \leq \capl_2=(8,3), 
\end{align*}
\end{linenomath*}
implying that $\asg'$ is a feasible assignment.

We claim that $\asg'$ is a maximum-utility feasible assignment. Clearly, its utility is $\util(\asg')=3 \cdot 2 + 1=7$. Thus, any assignment whose utility exceeds $\util(\asg')$ must assign all families to the place for which they have utility~$2$. However, such an assignment is not feasible, as assigning each  of $\fa_2, \fa_3$, and $\fa_4$ to~$\loc_2$ would violate its upper quota to house at most 8 people.

}

\mypara{Central problems.} We are now ready to define our problems.

\decprob{\feasible-\RR}{An instance $I$ of \RefRes.}{Is there a feasible assignment~$\asg$ for~$I$?}

\decprob{\maxutil-\RR}{An instance $I$ of refugee resettlement, a utility vector $\profit_i \in \mathbb{Z}^m$ for each family $\fa_i \in \FA$, and an integer bound~$\finalutil$.}{Is there a feasible assignment~$\asg$ for~$I$ such that $\util(\asg) \geq \finalutil$?}

\dectask{\pareto-\RR}{An instance $I$ of refugee resettlement and a preference order~$\succeq_i$ for every $\fa_i \in \FA$.}{\emph{Find} a feasible and \acceptable\ Pareto-optimal assignment~$\asg$ for $I$ or report that none exists.}
\smallskip
We remark that there is a straightforward way to reduce \pareto-\RR\ to the optimization variant of \maxutil-\RR\ in the following sense.
Suppose that an algorithm~$\mathcal{A}$ finds a maximum-utility feasible assignment for each instance of \maxutil-\RR\ that admits a feasible assignment.
Such an algorithm 
can be used to solve an instance~$I$ the \pareto-\RR\ problem as follows.
\begin{restatable}[\appsymb]{obs}{obsparmax}
\label{rem:maxutil-vs-pareto}
Given an instance~$I$ of \pareto-\RR, construct an instance~$I'$ of \maxutil-\RR\ as follows.
For each family~$\fa_i \in \FA$: 
\begin{compactitem}
\item for every place~$\loc_j$ that $\fa_i$ finds acceptable, set $\profit_i[j] = |\{\loc_{j'} \in \LOC \mid \loc_j \succeq_i \loc_{j'}\}|$;
 \item  for each place~$\loc_j$ that $\fa_i$ finds unacceptable, set
$\profit_i[j]=-m\cdot n$. 
\end{compactitem}
Let $\asg$ be a maximum-utility feasible assignment for~$I'$. If ${\util(\asg)>0}$, then $\asg$ is a feasible, \acceptable, and Pareto-optimal assignment for~$I$; otherwise 
there is no feasible and \acceptable\ assignment for~$I$.
\end{restatable}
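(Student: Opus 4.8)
First I would note that $I$ and $I'$ are built on the same \RefRes\ instance, so an assignment is feasible for~$I$ if and only if it is feasible for~$I'$; thus feasibility is untouched by the reduction and it remains only to understand the utilities. The crux is a \emph{monotonicity} property of the constructed weights. For a family~$\fa_i$ write $w_i(\loc_j):=\profit_i[j]=|\{\loc_{j'}\in\LOC\mid\loc_j\succeq_i\loc_{j'}\}|$ for each place~$\loc_j$ acceptable to~$\fa_i$, and $w_i(\bot):=0$, so that $\util(\asg)=\sum_{\fa_i\in\FA}w_i(\asg(\fa_i))$ for every \acceptable\ assignment~$\asg$. Since $\loc_j\succeq_i\loc_j$ we have $w_i(\loc_j)\ge 1$ for acceptable~$\loc_j$; using transitivity of~$\succeq_i$ together with the convention $\loc\succ_i\bot$, one checks: (i) if $a,b\in\LOC\cup\{\bot\}$ are each either acceptable to~$\fa_i$ or equal to~$\bot$, and $b\succeq_i a$, then $w_i(b)\ge w_i(a)$; (ii) if moreover $b\succ_i a$, then $w_i(b)>w_i(a)$; and (iii) if $b\succeq_i a$ with~$a$ acceptable, then $b\ne\bot$ (because $\bot\not\succeq_i a$).

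Given (i)--(iii), a feasible assignment~$\asg$ of~$I'$ satisfies $\util(\asg)\ge 0$ exactly when $\asg$ is \acceptable, and $\util(\asg)>0$ exactly when $\asg$ is \acceptable\ and assigns at least one family: if $\asg$ is \acceptable\ then every summand is non-negative and those for assigned families are at least~$1$, whereas if some family is sent to an unacceptable place it contributes $-mn$, which the remaining at most $n-1$ families---each contributing at most~$m$---cannot offset. Now let~$\asg$ be a maximum-utility feasible assignment of~$I'$ with $\util(\asg)>0$; then $\asg$ is feasible and \acceptable\ for~$I$. If $\asg$ admitted a Pareto-improvement~$\asg'$, i.e.\ a feasible and \acceptable\ assignment with $\asg'(\fa_i)\succeq_i\asg(\fa_i)$ for all~$i$ and $\asg'(\fa_{i^*})\succ_{i^*}\asg(\fa_{i^*})$ for some~$i^*$, then---using that $\asg$ and $\asg'$ are \acceptable---(i) and (iii) would give $w_i(\asg'(\fa_i))\ge w_i(\asg(\fa_i))$ for every~$i$, and (ii) and (iii) would give $w_{i^*}(\asg'(\fa_{i^*}))>w_{i^*}(\asg(\fa_{i^*}))$, so $\util(\asg')>\util(\asg)$; since $\asg'$ is feasible for~$I'$, this contradicts the maximality of~$\asg$. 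Hence $\asg$ is Pareto-optimal for~$I$.

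If instead $\util(\asg)\le 0$, then no feasible and \acceptable\ assignment of~$I$ can assign any family (such an assignment would be feasible for~$I'$ with utility at least~$1>\util(\asg)$), so the only possible feasible and \acceptable\ assignment is the empty one; the empty assignment is feasible precisely when all lower quotas vanish---an easily detected corner case in which it is vacuously Pareto-optimal---and otherwise~$I$ has no feasible and \acceptable\ assignment. The only step that requires genuine care is establishing (i)--(iii) for weak orders with ties and with the unassigned option~$\bot$; everything else is routine bookkeeping.
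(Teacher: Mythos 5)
Your proof is correct and follows essentially the same route as the paper's: feasibility transfers verbatim between $I$ and $I'$, any assignment sending a family to an unacceptable place has utility at most $-mn+(n-1)\cdot m<0$, and a Pareto-improvement strictly increases total utility, contradicting the maximality of~$\asg$; your monotonicity properties (i)--(iii) are exactly what the paper's one-line claim that ``a Pareto-improvement would imply an assignment with greater utility value'' tacitly relies on. The one place you go beyond the paper is the boundary case $\util(\asg)=0$: you correctly note that when all lower quotas vanish the empty assignment is feasible, \acceptable, and vacuously Pareto-optimal, so the clause ``otherwise there is no feasible and \acceptable\ assignment'' is, read literally, only accurate for $\util(\asg)<0$ --- and indeed the paper's own proof argues only that case. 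This is a (minor) imprecision in the stated observation rather than a gap in your argument.
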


\appendixproofwithstatement{rem:maxutil-vs-pareto}{\obsparmax*}{
Clearly, $\asg$ has positive utility if $\asg$ is an \acceptable\ assignment for~$I$, and it has negative utility if it is unacceptable for~$I$ (since already one utility value of~$-m \cdot n$ guarantees $\util(\asg)<0$).
Hence, if $\util(\asg)<0$, then 
 no feasible assignment exists for~$I$. Otherwise, we know that $\asg$ is feasible for~$I$. Since $\asg$ has maximum utility among all assignments, it is necessarily Pareto-optimal, because a Pareto-improvement would imply an assignment with greater utility value. 
}

\mypara{Parameterization.}
We study the following parameters:
\begin{compactitem}
\item number of places ($m=|\LOC|$),
\item number of refugee families ($n=|\FA|$), 
\item number of services ($\noser = |\SER|$),
\item maximum number of units required for all services and by all families~($\maxreq = \max \{\reqf_i[k] : \fa_i \in \FA, \ser_k \in \SER\}$).
\end{compactitem}

We also study the following parameters for \maxutil-RR: the total utility bound~$\finalutil$ and the maximum utility brought by a family
$\maxutilf = \max \{\profit_i[j] :\fa_i \in \FA, \loc_j \in \LOC\}$. 


\todoHinline{Turn the following into a remark?}

Additionally, we consider the maximum length of the ties in preference lists. However, this parameter is upper-bounded by $m$, and most problems are already hard w.r.t.\ $m$.
We also consider the highest upper quota any place has for a service $\maxcap = \max_{\loc_j \in \LOC, \ser_k \in \SER}\capl_j[k]$, but discover that this parameter behaves very similarly to the smaller and better-motivated parameter~$\maxreq$. Note that we may assume that for each family there is at least one place that can accommodate it, otherwise we can remove the family from our instance; this implies that we can assume $\maxcap \geq \maxreq$.

We also obtain \FPT\ results w.r.t.\ the sum of the capacities of the places, that is, $\sumcaps = \sum_{\loc_j \in \LOC, \ser_k \in \SER}\capl_j[k]$, and the sum of the requirements of the families $\sumreqs = \sum_{\fa_i \in \FA, \ser_k \in \SER}\reqf_i[k]$.
If there are \zerobound, we also have an \FPT\ result w.r.t.\ the sum of utilities $\sumutil = \sum_{\fa_i \in \FA, \loc_j \in \SER}\profit_i[j]$. If the instance has non-zero lower quotas, then the problem is hard even when all utilities are zero, and this parameter is not helpful.
We also study the complexity w.r.t.\ the number of agents who have ties in their preference lists $\tieno$.
\ifshort
Discussion on parameters $\sumcaps$, $\sumreqs$, $\sumutil$, and $\tieno$ are deferred to the full version~\cite{fullversion}.
\else Discussion on parameters $\sumcaps$, $\sumreqs$, $\sumutil$, and $\tieno$ are deferred to Appendix~\ref{sec:moreparameters}.
\fi


\newcommand{\OQ}{\textbf{\color{orange!70!black}?}}

\newcommand{\unwritten}[1]{#1} 
\newcommand{\our}[1]{\textbf{#1}}
\newcommand{\missing}[1]{\textit{ #1?}}

\newcommand{\tNPh}{{\color{red!60!black}NPh}}

\newcommand{\tid}{$^{\circ}$}
\newcommand{\eqpref}{$^{=}$}
\newcommand{\tidalg}{$^{\smallblackdiamond}$}
\newcommand{\eqprefalg}{$^{\star}$}
\newcommand{\dicalg}{$^{\varheartsuit}$}

\newcommand{\lbres}{$^\dag$}
\newcommand{\lbalg}{$^\ddag$}

\newcommand{\tFPT}{{\color{green!40!black} \FPT}}
\newcommand{\tPP}{{\color{green!60!black}\PP}}
\newcommand{\tXP}{{\color{blue!60!black}\XP}}
\newcommand{\tWoneh}{{\color{red!50!blue}W1h}}
\newcommand{\tWtwoh}{{\color{red!70!blue}\Wtwo-h}}

\newcommand{\refpropfeashardm}{\unwritten{[P\ref{prop_feas_hard_m}]}}
\newcommand{\refthmparetoPm}{[P\ref{thm:paretoPm1}]}
\newcommand{\refthmfeasmthard}{\unwritten{[T\ref{thm:feas:mt1hard}]}}
\newcommand{\refxpmt}{\unwritten{[P\ref{xp:mt}]}}
\newcommand{\reffptnmodel}{\unwritten{[P\ref{prop:fpt-n}]}}
\newcommand{\refthmsumcapsemikernel}{\unwritten{[T\ref{thm:sumcapsemikernel}]}}
\newcommand{\reffptrequirementsmodel}{\unwritten{[P\ref{fpt:requirements:model1}]}}
\newcommand{\refxpfinalutilmodel}{\unwritten{[P\ref{xp:finalutil}]}}

\newcommand{\reffptfinalutiltone}{\unwritten{[T\ref{prop:fpt-finalutilt1}]}}

\newcommand{\reffptfinalutiltonenph}{\unwritten{[T\ref{prop:fpt-finalutilt1},P\ref{prop:wtbinpacking}]}}

\newcommand{\refpropfeashardt}{\unwritten{[T\ref{prop_feas_hard_t}]}}
\newcommand{\refpropwtbinpacking}{\unwritten{[P\ref{prop:wtbinpacking}]}}
\newcommand{\refthmsatwithties}{\unwritten{[T\ref{thm:satwithties}]}}
\newcommand{\refthmonetboundedrmaxilp}{\unwritten{[T\ref{thm:onetboundedrmaxilp}]}}
\newcommand{\reffptsumutilsmodel}{\unwritten{[P\ref{fpt:sumutils:model}]}}
\newcommand{\refpropfpttieno}{\unwritten{[P\ref{prop:fpt_tieno}]}}

\newcommand{\refpropparetotwolocs}{\unwritten{[P\ref{prop:pareto_two_locs}]}}
\newcommand{\refpropparetotwocap}{\unwritten{[P\ref{prop:pareto_two_cap}]}}

\newcommand{\refthmparetolocmaxcap}{\unwritten{[T\ref{thm:pareto_loc_maxcap}]}}
\newcommand{\refthmfptservmaxreq}{\unwritten{[T\ref{thm:fptservmaxreq}]}}
\newcommand{\refthmfptservmaxcaploc}{\unwritten{[T\ref{thm:fptservmaxcaploc}]}}

\newcommand{\refthmonetboundedrmaxm}{\unwritten{[T\ref{thm:onetboundedrmax+m}]}}

\newcommand{\nolower}{LQ$=$0}
\newcommand{\withlower}{LQ$\neq$0}
\newcommand{\doublefpt}{\our{\tFPT}\,/\,\our{\tFPT}}
\newcommand{\doublexp}{\our{\tXP}\,/\,\our{\tXP}}
\newcommand{\doublenph}{\our{\tNPh}\,/\,\our{\tNPh}}
\newcommand{\equtilpref}{\;\;\scriptsize eq.\ util./pref.}
\newcommand{\dichopref}{\;\;\scriptsize dicho.\ pref.}
\newcommand{\binaryutil}{\;\;\scriptsize binary util.}
\begin{table}[t!]
  \centering
  \extrarowheight=.5\aboverulesep
  \aboverulesep=1pt
  \belowrulesep=1pt
  \addtolength{\extrarowheight}{\belowrulesep}
 \resizebox{\columnwidth}{!}
  {
    \begin{tabular}{@{}l@{\,} | @{\,} c@{\,}l @{\,}c@{\,} c@{\,}l @{}c@{\,} c@{}l @{}} 
\toprule
      {\small Parameter} &  \multicolumn{2}{@{\;}c@{\;}}{\feasible} & & \multicolumn{2}{@{\;}c@{\;}}{\maxutil} &  & \multicolumn{2}{@{\;}c@{\;}}{\pareto} \\
      \cline{5-6} \cline{8-9}  
& & & & \multicolumn{1}{c}{\nolower~/~\withlower}& & & \multicolumn{1}{c}{\nolower~/~\withlower} & \\
      \midrule
      $m$ &  \our{\tWoneh} & \refpropwtbinpacking &
             & \our{\tWoneh\tid}/\our{\tWoneh\tid} & \refpropwtbinpacking &
             & \our{\tWoneh\eqpref}/\our{\tWoneh\eqpref} & \refpropwtbinpacking\\[-.8ex]
             & \our{\tXP} & \refxpmt & & \doublexp &  \refxpmt &
      &  \doublexp &  \refxpmt\\\hline
      $\maxreq $ & \our{\tFPT} & \refthmonetboundedrmaxilp & & \doublenph
              &\refthmsatwithties & & \doublenph  & \refthmsatwithties\\[-.8ex]
    \equtilpref &  -- & -- & & \doublefpt 
          & \refthmonetboundedrmaxilp & & \doublefpt& \refthmonetboundedrmaxilp \\\hline
      $m + \maxreq$ & \our{\tFPT} & \refthmonetboundedrmaxm && \doublefpt
        & \refthmonetboundedrmaxm & & \doublefpt & \refthmonetboundedrmaxm \\\hline
      $\maxutilf$ & -- & -- &
                                                                                             & \tNPh\tid/\tNPh\tid & ~\cite{gurski2019knapsack} 
                                                          & & -- & -- \\  
      \finalutil  & -- & -- &&  \our{\tFPT}/\our{\tNPh}\tid & \reffptfinalutiltone/\refpropwtbinpacking & & -- & --  \\ 
      \bottomrule\multicolumn{9}{c}{}\\[-2ex]  \toprule
      $m+\maxreq$ &  \our{\tNPh} & \refpropfeashardm && \tNPh\tid/\tNPh\tid & 
                                                                              \cite{gurski2019knapsack} && \our{\tNPh\eqpref}/\our{\tNPh\eqpref} & \refthmparetolocmaxcap/\refpropfeashardm \\ \hline
      $\noser$ &  \our{\tNPh} & \refpropwtbinpacking & & \tNPh\tid/\tNPh\tid & \cite{gurski2019knapsack} &&  \tNPh\eqpref/\tNPh\eqpref  & \cite{aziz2018stability} \\ \hline
         $n$ & \our{\tFPT} & \reffptnmodel & & \doublefpt & \reffptnmodel & & \doublefpt & \reffptnmodel \\ \hline

      $m + \noser$ &  \our{\tWoneh} & \refpropwtbinpacking &
       & \our{\tWoneh\tid}/\our{\tWoneh\tid} & \refpropwtbinpacking
      && \our{\tWoneh\eqpref}/\our{\tWoneh\eqpref} & \refpropwtbinpacking  \\[-.8ex]  
       & \our{\tXP}& \refxpmt && \doublexp & \refxpmt  
       && \doublexp & \refxpmt  \\  \hline
      $\noser + \maxreq$ & \our{\tFPT} & \refthmfptservmaxreq &
      & \doublenph & \refthmsatwithties && \doublenph & \refthmsatwithties  \\
     \equtilpref & -- & -- &
     & \our{\tFPT}/\our{\tFPT} & \refthmfptservmaxreq && \doublefpt & \refthmfptservmaxreq  \\ \hline
      {$m + \noser+ \maxreq$} &  \our{\tFPT} & \refthmfptservmaxreq
      && \doublexp,\OQ & \refxpmt  && \doublefpt & \refthmfptservmaxcaploc\\
     \binaryutil  &  -- & -- &&
      \our{\tFPT} & \refthmfptservmaxcaploc & & -- & --\\  \hline
      \finalutil& -- & -- && 
      \tWoneh\tid/\our{\tNPh}\tid & \cite{gurski2019knapsack}/\refpropfeashardm  & & -- & -- \\[-.8ex]
                         & -- & -- && \our{\tXP}/\our{\tNPh}\tid & \refxpfinalutilmodel/\refpropfeashardm && -- & --  \\  \bottomrule
     \multicolumn{8}{c}{} \\[-2ex]
    \end{tabular}}
  \caption{All three problems are NP-hard in general; see \cite{AzizGaspersSunWalsh2019aamas}, \cite[T32]{gurski2019knapsack},\cite[P7.1]{aziz2018stability}.
Above: Results for the single-service case ($t = 1$).
We skip the parameterization by $n$ since for this case since it is FPT for the more general case. 
Below: Results for the general case. Here, we skip the parameterization by~$\maxutilf$ since it is already \NP-hard for the single-service case.
Bold faced results are obtained in this paper. \nolower\ (resp.\ \withlower) refers to the case when lower quotas are zero (resp.\ may be positive).
\tNPh\ means that the problem remains NP-hard even if the corresponding parameter 
is constant.
All hardness results hold for \dichotomous\ preferences or \ \binary\ utilities.
Additionally, \tid\ (resp.\ \eqpref) means hardness results hold even for equal utilities (resp.\ preferences).
\ifshort
The results for the remaining parameter combinations are deferred to the full version~\cite{fullversion}.
\else The results for the remaining parameter combinations are deferred to Appendix~\ref{sec:moreparameters}, \cref{table:utilparams}.
\fi \label{table:results}}
\end{table}

\section{Single service}
\label{sec:singleservice}
\appendixsection{sec:singleservice}

Let us assume that there is only a single service in our input instance. Thus, we will simply refer to~$\reqf_i[1]$ as the \myemph{requirement} of a family~$\fa_i \in \FA$, and we will write $\reqf_i=\reqf_i[1]$ accordingly. 
Observe that we may assume w.l.o.g.\ that each family has a positive requirement.
Similarly, we will refer to~$\capl_j[1]$ and~$\caplmin_j[1]$ as the  \myemph{upper} and the \myemph{lower quota} of a place~$\loc_j \in \LOC$, writing also $\capl_j=\capl_j[1]$ and~$\caplmin_j=\caplmin_j[1]$.

\medskip 
The reader may observe that when our sole concern is feasibility, then the problem can be seen as a multidimensional variant of the classic \textsc{Bin Packing} or \textsc{Knapsack} problems. 
On one hand, it is not hard to show that the parameterized hardness of~\textsc{Bin Packing} w.r.t.\ the number of bins as parameter translates to parameterized hardness of \feasible-\RR\ w.r.t.\ the number of places; see \cref{prop:wtbinpacking}. 
On the other hand, the textbook dynamic programming technique for \textsc{Knapsack} 
was used by Gurski et al.~\cite[Proposition~34]{gurski2019knapsack} to solve the so-called \textsc{Max Multiple Knapsack} problem which in our model coincides with the \maxutil-\RR\ problem without lower quotas.
This approach
can be adapted in a straightforward way to solve the \maxutil-\RR\ problem even for the case when there are multiple services and lower quotas;
in \cref{xp:mt} we present an algorithm running in $O((\maxcap)^{mt}nm)$ time.

\begin{restatable}[\appsymb]{proposition}{propwtbinpacking}
\label{prop:wtbinpacking}
The following problems are \Woneh\ w.r.t.~$m$  for $\noser=1$: 
\begin{compactitem}
\item \feasible-\RR;
\item \pareto-\RR\ with no lower quotas and \indifferent\ preferences;
\item \pareto-\RR\ when all families have strict preferences;
\item \maxutil-\RR\ with no lower quotas and \uniform\ utilities;
\item \maxutil-\RR\ with $\finalutil=0$.
\end{compactitem}
\end{restatable}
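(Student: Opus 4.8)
The natural route is a reduction from \textsc{Bin Packing}, which is known to be \Wone-hard parameterized by the number of bins (this is stated in the excerpt and is folklore, going back to Jansen et al.~\cite{JKMS-binpacking-2013}). Recall an instance of \textsc{Bin Packing} consists of item sizes $a_1,\dots,a_N \in \NN$, a bin capacity $B$, and a number $k$ of bins; the question is whether the items can be partitioned into $k$ groups each summing to at most $B$. The plan is to map such an instance to a \feasible-\RR\ instance with $\noser=1$, one family $\fa_i$ per item with requirement $\reqf_i = a_i$, and one place $\loc_j$ per bin with upper quota $\capl_j = B$ and lower quota $\caplmin_j = 0$. A complete feasible assignment then corresponds exactly to a valid packing. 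One subtlety: \feasible-\RR\ allows families to be left unassigned, whereas \textsc{Bin Packing} requires all items to be packed. This is handled by noting that WLOG $\sum_i a_i = kB$ (pad with unit items, or observe the standard hardness reduction already has this property), so every feasible assignment must in fact be complete; alternatively one adds a ``dummy'' place that cannot fit anything, or rescales. With $m=k$, \Wone-hardness transfers.

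\textbf{The utility and preference variants.} For \maxutilRR\ with \uniform\ utilities and no lower quotas: take the same construction, set every utility $\profit_i[j] = 1$, and ask whether a feasible assignment of total utility $\geq N$ exists — this forces all families to be assigned, recovering the \textsc{Bin Packing} question. For \maxutilRR\ with $\finalutil = 0$: here the idea is different; we want feasibility itself to be the bottleneck. Give every family utility $0$ at every place it may be assigned to (so any feasible assignment has utility exactly $0 \geq \finalutil$), making the ``$\util(\asg)\geq 0$'' condition vacuous and leaving only feasibility — so this reduces from \feasible-\RR\ directly, but we must be careful: with arbitrary lower quotas this is fine, and the first bullet already gives \Wone-hardness of \feasible-\RR, so \maxutilRR\ with $\finalutil=0$ inherits it. For \paretoRR\ with no lower quotas and \indifferent\ preferences: every family finds every place acceptable and is indifferent among all of them. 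Then a Pareto-optimal assignment must be one that assigns a maximum number of families (since moving an unassigned family to any place — if some place has room — is a Pareto-improvement, as $\loc \succ_i \bot$). One shows that in the \textsc{Bin Packing}-derived instance with $\sum a_i = kB$, a Pareto-optimal assignment is complete iff the packing exists; care is needed because Pareto-optimality only guarantees local non-improvability, not that a complete assignment is reached — so the argument must show that if \emph{any} feasible complete assignment exists then every Pareto-optimal one is complete, which follows because an incomplete feasible assignment with slack somewhere admits the single-family improvement. For strict preferences: give each family the \emph{same} strict ranking $\loc_1 \succ \loc_2 \succ \dots \succ \loc_k$; one argues a Pareto-optimal assignment again encodes a packing, though the bookkeeping (ensuring no rotation-type improvement exists) is more delicate, and it may be cleaner to pad the instance so that the only Pareto-optimal assignments are the "balanced" ones.

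\textbf{Main obstacle.} The genuinely fiddly part is the \pareto\ cases, where Pareto-optimality is weaker than "maximum cardinality" or "maximum utility," so the equivalence "packing exists $\iff$ the Pareto-optimal assignment we find certifies it" needs an argument that rules out Pareto-optimal assignments that are feasible but \emph{incomplete} while a complete one exists. The clean fix is the padding trick: arrange (e.g., via the structure of a specific \Wone-hard \textsc{Bin Packing} instance, or by adding items) that $\sum_i a_i = kB$ exactly, so that \emph{completeness} and \emph{feasibility with no slack} coincide; then any incomplete feasible assignment necessarily has a place with residual capacity at least the minimum item size, hence admits a one-step Pareto-improvement, contradiction — so every Pareto-optimal assignment is complete, and complete feasible assignments are exactly the valid packings. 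For the strict-preference variant one additionally needs that no "shift one family up its list" improvement exists in a complete assignment, which holds automatically once the assignment is complete and feasible (every family is already placed, and any place swap would have to free up and re-fill capacity consistently — here one restricts to the special instances where all bin capacities are equal so that the configuration is essentially symmetric). I would carry out the \textsc{Bin Packing} reduction for \feasible-\RR\ first, then the two \maxutil\ corollaries (immediate), then the \indifferent\ \pareto\ case via the completeness argument, and finally the strict-preference case last as it requires the most care.
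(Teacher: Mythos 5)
There is a genuine gap, and it sits at the very base of your construction: you set every lower quota to zero, but then the \feasible-\RR\ instance you build is \emph{trivially} a yes-instance, because the assignment that leaves every family unassigned gives every place load $0$, which lies within $[0,B]$. Your claim that ``every feasible assignment must in fact be complete'' once $\sum_i a_i = kB$ is false under zero lower quotas, and neither of your proposed patches (a dummy place that fits nothing, or rescaling) forces any family to be assigned. The paper's reduction avoids this by setting $\caplmin_j=\capl_j=B$ for every place, so that feasibility itself forces each place to be filled exactly and hence (by the total-sum argument) forces completeness; the lower quotas are not optional here, they are the entire source of hardness for \feasible-\RR. Once you have that, the $\finalutil=0$ bullet follows exactly as you say (zero utilities on the lower-quota instance), and the \uniform-utility \maxutil\ bullet and the \indifferent-preference \pareto\ bullet work on the zero-lower-quota copy by demanding completeness — though for the latter your local justification is also wrong: an incomplete feasible assignment can have residual capacity at least the minimum item size at some place without any single unassigned family fitting there (e.g.\ items $2,2,3,3$ with two bins of size $5$, packing $\{2,2\}$ and $\{3\}$). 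The correct argument is global and simpler: under \indifferent\ preferences a feasible \emph{complete} assignment is itself a Pareto-improvement over any incomplete one, so if one exists every Pareto-optimal assignment is complete (this is the paper's \cref{obs:Pareto-vs-completeness}).

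Your plan for the strict-preference case is also headed in a direction that cannot succeed: if you keep zero lower quotas, \pareto-\RR\ with all-strict preferences is solvable in polynomial time by serial dictatorship (cf.\ \cref{prop:fpt_tieno}), so no \Woneh{}ness can be extracted there, and trying to characterize which complete assignments are Pareto-optimal under a common strict ranking is both delicate and unnecessary. The paper instead gives a Turing reduction from the (lower-quota) \feasible-\RR\ instance: equip each family with an arbitrary complete strict order over the places; every assignment is then acceptable, feasibility is unchanged, and an \FPT\ algorithm that either outputs a feasible acceptable Pareto-optimal assignment or reports that none exists decides feasibility of the original instance. You should restructure the strict-preference bullet along those lines rather than via padding and ``balanced'' configurations.
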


\appendixproofwithstatement{prop:wtbinpacking}{\propwtbinpacking*}{
We give a reduction from the following variant of \textsc{Bin Packing}. The input instance~$I_{\textup{BP}}$ contains item sizes $a_1,\dots,a_n$ and an integer~$k$, and the question is whether these items can be allocated into $k$ bins such that the total size of items allocated to each bin is exactly $B=(\sum_{i=1}^n a_i) /k$, the bin size. This problem is known to be \Woneh\ w.r.t.~$k$~\cite{JKMS-binpacking-2013}. 

Let us construct an instance~$I_{\textup{FRR}}$ of \feasible-\RR\ with $\noser=1$. We define families~$\fa_1,\dots,\fa_n$ and set the requirement of family~$\fa_i$ as~$a_i$ for each $i \in [n]$. We also define places $\loc_1,\dots,\loc_k$, each with its lower and upper quota both set to~$B$. Then $I_{\textup{FRR}}$ admits a feasible assignment if and only if our \textsc{Bin Packing} instance~$I_{\textup{BP}}$ is solvable; this proves the result for \feasible-\RR.

To obtain a \pareto-\RR\ instance~$I_{\textup{PRR}}$, we can reset the lower quotas to zero, 
and set \indifferent\ preferences for each family. Then $I_{\textup{PRR}}$ admits a feasible and complete assignment if and only if $I_{\textup{BP}}$ admits a solution, 
because the total requirement of the families is $kB$ which equals the total upper quota of the places. It remains to recall that by  \cref{obs:Pareto-vs-completeness}, we can decide whether a feasible and complete assignment exists for~$I_{\textup{PRR}}$ by solving \pareto-\RR\ on~$I_{\textup{PRR}}$. 
This proves the first result for \pareto-\RR.

Next we show that \pareto-\RR\ is \Wone-hard even when all families have strict preferences. Assume, towards a contradiction, that there is an algorithm that can find a Pareto-optimal feasible and acceptable assignment in \FPT\ time w.r.t.\ $m$. 
We show that such an algorithm could be used to solve \feasible-\RR\ in \FPT\ time w.r.t.\ $m$, contradicting the first result unless \FPT\ $=$ \Wone.
Let $I$ be an instance of \feasible-\RR. We reduce it to an instance $I^{\succ}$ of \pareto-\RR\ by giving every family arbitrary complete strict preferences over the places. The instances are otherwise identical. 
Observe that any assignment of $I^{\succ}$ is acceptable, and an assignment $\asg$ is feasible on $I^{\succ}$ if and only if it is feasible on $I$.
If an algorithm for \pareto-\RR\ finds a Pareto-optimal feasible and acceptable assignment $\asg$ for $I^{\succ}$, then $\asg$ is also a feasible assignment of $I$, and we can report $I$ is a yes-instance.
Correspondingly, if an algorithm for \pareto-\RR\ reports there is no feasible and acceptable Pareto-optimal assignment~$\asg$ of~$I^{\succ}$, then there is no feasible and acceptable assignment of $I^{\succ}$, and thus no feasible assignment of $I$, and we can report $I$ is a no-instance.

It is straightforward to check  that by setting unit utilities instead of \indifferent\ preferences we obtain an instance~$I_{\textup{MURR}}$ of \maxutil-\RR\ that is equivalent with $I_{\textup{PRR}}$ in the sense that $I_{\textup{MURR}}$ admits a feasible assignment with utility at least~$kB$ if and only if $I_{\textup{PRR}}$ admits a complete and feasible assignment, proving the fourth statement of the theorem.
Finally, the hardness results for \feasible-\RR\ also hold for \maxutil-\RR\ even if we set all utilities as zero and $\finalutil=0$, proving the last statement. }

In spite of the strong connection between \feasible-\RR\ and \textsc{Bin Packing} (or between \maxutil-\RR\ and \textsc{Knapsack}), the
context of \RefRes\ motivates parameterizations that have not been studied for these two classical problems.
One such parameter is~$\maxreq$, the maximum units of a service that any refugee family may require.
\cref{thm:onetboundedrmaxilp} presents an efficient algorithm for \feasible-\RR\ for the case when $\maxreq$ is small; the proposed algorithm can be used to solve \pareto- and \maxutil-\RR\ as well, assuming \uniform\  preferences or utilities, 
when the task is to assign as many refugee families as possible. 
%
%

Let us introduce an important notion used in our algorithms. 
Let $\blocksize$ denote the least common multiple of all integers in the set $\{1,\dots,\maxreq\}$; then $\blocksize \leq (\maxreq)!$ is clear. 
We say that a set~$\FA ' \subseteq \FA$ of families is a \myemph{\superblock}, if all families in~$\FA '$ have the same requirement, and their total requirement is exactly~$\blocksize$.

\begin{restatable}[\appsymb]{obs}{obssuperblock}
\label{obs:superblock}
Suppose that the number of services is $\noser=1$. 
If $\FA ' \subseteq \FA$ is a set of families such that $\sum_{\fa_i \in \FA'}\reqf_i > \maxreq (\blocksize-1)$, then $\FA'$ contains a \superblock.
\end{restatable}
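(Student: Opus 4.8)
The plan is to argue by a pigeonhole-style counting argument on the multiplicities of the requirement values appearing in $\FA'$. Recall that $\blocksize = \operatorname{lcm}(1,2,\dots,\maxreq)$, so for every possible requirement value $r \in \{1,\dots,\maxreq\}$ the quantity $\blocksize/r$ is a positive integer; in other words, $\blocksize/r$ families all having requirement exactly $r$ form a \superblock. So it suffices to show that some requirement value $r$ occurs on at least $\blocksize/r$ families of $\FA'$.

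First I would partition $\FA'$ according to the requirement value: for each $r \in \{1,\dots,\maxreq\}$ let $\FA'_r = \{\fa_i \in \FA' : \reqf_i = r\}$ and let $n_r = |\FA'_r|$. Suppose, for contradiction, that no $\FA'_r$ contains a \superblock; since a \superblock\ consisting of families of requirement $r$ needs exactly $\blocksize/r$ of them, this means $n_r \le \blocksize/r - 1$ for every $r$. Then the total requirement of $\FA'$ is
\begin{align*}
\sum_{\fa_i \in \FA'} \reqf_i = \sum_{r=1}^{\maxreq} r \cdot n_r \le \sum_{r=1}^{\maxreq} r\left(\frac{\blocksize}{r} - 1\right) = \sum_{r=1}^{\maxreq} (\blocksize - r) \le \maxreq\cdot\blocksize - \maxreq \le \maxreq(\blocksize - 1),
\end{align*}
contradicting the hypothesis $\sum_{\fa_i \in \FA'}\reqf_i > \maxreq(\blocksize-1)$. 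Hence some $\FA'_r$ must contain $\blocksize/r$ families of requirement $r$, and any such subset is a \superblock\ contained in $\FA'$.

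There is no real obstacle here — the only point requiring a moment's care is the observation that $\blocksize/r \in \NN$ for each relevant $r$, which is exactly why $\blocksize$ was defined as the least common multiple of $\{1,\dots,\maxreq\}$ rather than, say, $\maxreq!$; this is what guarantees that a set of $\blocksize/r$ families of common requirement $r$ has total requirement exactly $\blocksize$ and therefore genuinely is a \superblock. The bound $\blocksize \le (\maxreq)!$ is not needed for this proof. One should also note in passing that if $\maxreq = 0$ the statement is vacuous (we assumed all requirements are positive, so in fact $\maxreq \ge 1$ whenever $\FA' \ne \emptyset$), so the summation above is over a nonempty range.
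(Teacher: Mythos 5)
Your proof is correct and follows essentially the same argument as the paper's: partition $\FA'$ by requirement value, observe that a group with total requirement at least $\blocksize$ yields a \superblock\ (since $\blocksize$ is divisible by the common requirement), and otherwise bound the total by $\maxreq(\blocksize-1)$. The only cosmetic difference is that you count families per group while the paper bounds each group's total requirement directly by $\blocksize-1$.
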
 

\appendixproofwithstatement{obs:superblock}{\obssuperblock*}{
Group families in~$\FA'$ according to their requirements so that families with the same requirement are contained in one group. Clearly, if there is a group  whose total requirement is at least~$\blocksize$, then this group contains a \superblock, because $\blocksize$ is divisible by the (common) requirement of the families in the group. If each group has total requirement at most~$\blocksize-1$, then the total requirement of all families in~$\FA'$ is at most~$\maxreq (\blocksize-1)$, as required.
}


In the case where $\maxreq$ is a constant and the families are indifferent between the places, we can use \cref{obs:superblock} to bound the number of different possible places.
The idea is to observe that while the location capacities are unbounded, we can bound the ``relevant part'' of them by a function of $\maxreq$. If the total requirement of families assigned to a place is more than $\blocksize (\maxreq - 1)$, there must be a \superblock\ among them. We can treat these \superblock s separately from the places they originate from, and thus bound the upper quotas by a function of $\maxreq$. 
The family requirements are also trivially bounded by $\maxreq$.

Since we have now bounded both the maximum requirements of the families and the capacities of the locations, we can enumerate all the different ways families may be matched to places.
We can create an ILP that has a variable for each such way and additionally variables for the \superblock s. As the number of variables and constraints is bounded above by a function of $\maxreq$, we can solve this ILP in \FPT\ time w.r.t.\ $\maxreq$~\cite{lenstra1983integer}.

\begin{theorem}\label{thm:onetboundedrmaxilp}
\pareto- and \maxutil-\RR\ are \FPT\ w.r.t.\ $\maxreq$ when $t = 1$ and families have \indifferent\ preferences or utilities, respectively. 
\end{theorem}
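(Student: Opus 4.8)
The plan is to observe that, under equal preferences or equal utilities, both problems reduce to the single task of finding a feasible assignment that assigns as many families as possible, and then to solve this task with an integer linear program (ILP) whose dimension is bounded by a function of $\maxreq$, invoking Lenstra's algorithm~\cite{lenstra1983integer}.

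For the reduction: when every family is indifferent between all places, an assignment $\asg'$ Pareto-dominates $\asg$ if and only if the set of families assigned by $\asg'$ strictly contains the set assigned by $\asg$; hence a feasible assignment with the maximum number of assigned families is automatically Pareto-optimal, and a Pareto-optimal assignment exists exactly when some feasible assignment does. When all utility values equal the same positive number $\mu$, the total utility of an assignment is $\mu$ times the number of assigned families, so maximising utility again coincides with maximising the number of assigned families. Thus it suffices to compute, in \FPT\ time w.r.t.\ $\maxreq$, a feasible assignment maximising the number of assigned families (or to report that no feasible assignment exists).

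Next I would group the families by their scalar requirement, obtaining at most $\maxreq$ groups with sizes $n_1,\dots,n_{\maxreq}$, where $n_q$ counts the families of requirement $q$. Let $\blocksize=\operatorname{lcm}\{1,\dots,\maxreq\}\le(\maxreq)!$ and put $B:=\maxreq(\blocksize-1)$. By \cref{obs:superblock}, the families assigned to any place $\loc_j$ split into a union of \superblock s --- each a set of equal-requirement families with total requirement exactly $\blocksize$ --- plus a \emph{residual} set of total requirement at most $B$; since a \superblock\ is determined by the common requirement of its members, there are only $\maxreq$ types of \superblock. Using this, I would replace every large upper quota by one bounded in $\maxreq$: a place with $\capl_j\le B$ keeps its (already bounded) quotas, while a place with $\capl_j> B$ is treated as holding a residual part of total requirement at most $B$ together with an arbitrary number of \superblock s drawn from a global pool, subject only to its lower quota, the point being that \superblock s are fully interchangeable between such places. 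After this normalisation the places fall into finitely many groups --- determined by bounded upper/lower quotas, by being an unbounded ``sink'', and by the residual configuration --- with the number of groups bounded by a function of $\maxreq$. Now I would write an ILP with one variable per (family group, residual--\superblock\ pattern) pair and one per (place group, amount of the \superblock\ pool it receives), with constraints saying that no family group is over-used, that every place meets its lower quota and respects its upper quota, and that the \superblock\ pool is balanced; the objective maximises the total number of assigned families. All index sets being bounded by functions of $\maxreq$, the ILP has $f(\maxreq)$ variables and constraints and is solved in \FPT\ time w.r.t.\ $\maxreq$~\cite{lenstra1983integer}; an optimal solution yields the desired assignment, and infeasibility of the ILP means the instance admits no feasible assignment.

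I expect the main obstacle to be the structural argument underpinning the capacity normalisation: precisely formulating and proving that \superblock s may be exchanged across places so that any upper quota exceeding $B$ can be replaced by a bounded value without changing the maximum number of assignable families, and that this still holds once places may have positive lower quotas (so that a place with a large lower quota must draw part of its load from the shared \superblock\ pool). Once this is nailed down, writing the ILP and bounding its dimension, and then appealing to~\cite{lenstra1983integer}, are routine.
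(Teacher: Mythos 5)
Your proposal follows essentially the same route as the paper's proof: reduce both problems (under equal preferences/utilities) to maximising the number of assigned families, use the lcm-based \superblock{} notion and \cref{obs:superblock} to bound the ``relevant'' part of each place's quotas by a function of~$\maxreq$, type the places by their residual lower/upper quotas, treat \superblock{}s as an interchangeable global pool (split into those forced by lower quotas and those that are optional), and solve the resulting bounded-dimension ILP with Lenstra's algorithm. The obstacle you flag—formalising the exchange of \superblock{}s across places in the presence of positive lower quotas—is exactly what the paper resolves via its ``compulsory'' versus ``optional'' \superblock{} counts $\argres(\caplmin_j)$ and $\argres(\capl_j-\caplmin_j-\maxreq+1)$ and the $\maxreq-1$ slack in the place types, so your sketch is on the right track and complete in outline.
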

\newcommand{\bigblocksize}{\hat{\blocksize}}
\begin{proof}

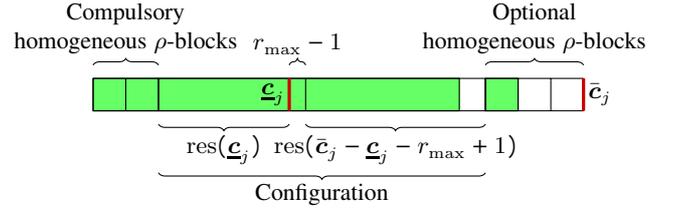
\begin{figure}[t!]
    \centering
    \begin{tikzpicture}[scale=0.43]
    \draw[fill = green!60!white, opacity=0.5] (6,0) -- (17.2,0) -- (17.2,1) -- (6,1)  -- cycle;
    \draw[fill = green!60!white, opacity=0.5] (18,0) -- (19,0) -- (19,1) -- (18,1)  -- cycle;
   		\draw[] (6,1) grid (8,0);
      \draw[] (8,0) -- (18,0) -- (18,1) -- (8,1)  -- cycle;
      \draw[color=red!80!black, very thick] (12, 0) -- (12,1);
      \draw[] (12.5,0) -- (12.5,1);
      \draw[] (18,1) grid (21,0);
     \draw[color=red!80!black, very thick] (21,0) -- (21,1);

     \node[] at (11.5, 0.5) {$\caplmin_j$};
     \node[] at (21.5, 0.5) {$\capl_j$};
     
     \draw[decoration={brace,mirror,raise=5pt},decorate]
  (12.5,0) -- node[below=6pt] {$\res(\capl_j - \caplmin_j - \maxreq + 1)$} (18,0);
  
       \draw[decoration={brace, mirror, raise=5pt},decorate]
  (8,0) -- node[below=6pt] {$\res(\caplmin_j)$} (12,0);
  
       \draw[decoration={brace,raise=5pt},decorate]
  (12.0,1) -- node[above=6pt] {$\maxreq - 1$} (12.5,1);
  
         \draw[decoration={brace,raise=5pt},decorate]
  (6.0,1) -- node[above=6pt, align=center] {Compulsory \\ \superblock s} (8,1);
  
           \draw[decoration={brace,raise=5pt},decorate]
  (18.0,1) -- node[above=6pt, align=center] {Optional \\ \superblock s} (21,1);
  
         \draw[decoration={brace, mirror, raise=5pt},decorate]
  (8,-1.5) -- node[below=6pt] {Configuration} (18,-1.5);

    \end{tikzpicture}
  \caption{Illustration for the proof of \cref{thm:onetboundedrmaxilp}. The bar shows the upper and lower quotas of a place, and the green area represents the requirements of the families matched to it.}     \label{fig:superblock_intuition}
\end{figure}
We start by showing that we can find an assignment that matches the maximum number of families in \FPT\ time w.r.t.\ $\maxreq$. Under \indifferent\ preferences (resp.\ utilities) this assignment must maximize utility (resp.\ be Pareto-optimal).

We start by defining two functions which will be useful for typing places by their service quotas:
the function $\argres \colon \NN \to \NN$
and 
the function $\res \colon \NN \to \{0, \dots, \maxreq ( \blocksize - 1)\}$: 
\begin{linenomath*}
\begin{align}
\notag
\res(x) &= \max_{\alpha \in \mathbb{N}} \{x - \alpha \cdot \blocksize : x - \alpha \cdot \blocksize \leq \maxreq ( \blocksize - 1) \} 
\\
\notag
\argres(x) &= \argmax_{\alpha \in \mathbb{N}} \{x - \alpha \cdot \blocksize : x - \alpha \cdot \blocksize \leq \maxreq ( \blocksize - 1) \}.
\\
\textrm{Observe that }\phantom{m} & 
\forall x \in \mathbb{N}, x = \blocksize \cdot \argres(x) + \res(x).
\label{eq:residue}
\end{align}
\end{linenomath*}
%

Using the $\res$ function we can associate each place with a type~\loctype. Let $\loctype(\loc_j) = (\res(\caplmin_j), x)$, where 
\begin{linenomath*}
\begin{equation*}
x \! = \!
\left\{ \!\!
\begin{array}{l@{\hspace{3pt}}l}
\res( \capl_j - \caplmin_j - \maxreq + \! 1) + \maxreq \! - 1, & \text{if } \capl_j - \caplmin_j \geq \maxreq \! - 1,\\
\capl_j - \caplmin_j, & \text{otherwise.}\end{array}
\right.
\end{equation*}
\end{linenomath*}

The first element of the type tells us the lower quota of the place after we have discounted all the \superblock s that are used to satisfy the lower quota.
The second element tells us the size of the ``optional'' quota $\capl_j - \caplmin_j$ when we have again discounted all the \superblock s that this part may contain.
We however only compute the residue on the part of $\capl_j - \caplmin_j$ that is larger than $\maxreq - 1$. This is because the total requirements of the families that are used to satisfy $\caplmin_j$ may be slightly greater than $\caplmin_j$, and thus there may be a family whose requirement is partially counted for $\capl_j - \caplmin_j$, however it may have been used for a \superblock.
See \cref{fig:superblock_intuition} for intuition of how the lower and upper quotas of a place are divided into the configuration and \superblock s.

Let $\loctypes = \{0, \dots, \maxreq (\blocksize - 1)\} \times \{0, \dots, \maxreq(\blocksize - 1) + \maxreq - 1\}$ be the set of possible place types.
It is clear that their number is bounded above by $\maxreq^2 \blocksize^2$, which is a function of $\maxreq$.
We additionally know that every place must have $\argres(\caplmin_j)$ many \superblock s assigned to it. We call these \emph{compulsory} \superblock s.
Similarly, we may assign at most $\argres(\capl_j - \caplmin_j - \maxreq + 1)$ additional \superblock s to~$\loc_j$, which we call \emph{optional} \superblock s.
Because the families are indifferent between the places, we do not need to keep track of the place the compulsory and optional \superblock s belong to, and we only enforce that the families assigned to compulsory \superblock s create exactly the number of compulsory \superblock s needed, and the families assigned to optional \superblock s create at most the number of optional \superblock s.

Now with a bound on the upper and lower quotas of the places discounting the \superblock s, we can enumerate all possible ways to satisfy these quotas.
Let $\bigblocksize \coloneqq  2\maxreq ( \blocksize  - 1) + \maxreq - 1$. This is the maximum sum of requirements that may be assigned to any place and that are not part of a \superblock. 
We create configurations $\config \in \{0, \dots, \bigblocksize\}^{\maxreq}$ that tell us 
the number of
families of each requirement type. 
Let us denote the set of possible configurations $\configs \coloneqq \{0, \dots, \bigblocksize\}^{\maxreq}$. It is clear that the number of possible configurations is bounded above by $(\bigblocksize + 1)^{\maxreq}$, which is a function of $\maxreq$.

We say that a place type $\loctype$ is \myemph{suitable} for a configuration $\config$ if $\loctype[1] \leq  \sum_{r \in [\maxreq]} \config[r] \cdot r \leq \loctype[1] + \loctype[2]$. 
This means that if the families are assigned to a place according to the configuration, its upper and lower quotas are satisfied.

We create an ILP with the following variables and constants:
\begin{compactitem}
\item non-negative integer variables $\underline{b}_r$ and $\bar{b}_r$ for each $r \in [\maxreq]$, representing the number of 
compulsory or optional, respectively, \superblock s filled with families of requirement~$r$. 
\item non-negative integer variable $x(\loctype, \config)$ for every $\loctype \in \loctypes$, $\config \in \configs$ such that $\loctype$ is suitable for $\config$,
counting the number of places of type $\loctype$ that are assigned families according to configuration~$\config$.
%
\item $m_{\loctype}$ is the number of places of type $\loctype$ 
for each $\loctype \in \loctypes$;
\item $\underline{B}=\sum_{\loc_j \in \LOC} \argres(\caplmin_j)$ (resp.\ $\overline{B}=\sum_{\loc_j \in \LOC} \argres(\max(\capl_j - \caplmin_j - \maxreq + 1), 0)$) is the number of compulsory (resp.\ optional) \superblock s; 
\item $n_r$ is the number of families $\fa_{i} \in \FA$ such that $\reqf_{i} = r$, for each $r \in [\maxreq]$.
\end{compactitem}

We create the following ILP:
\begin{linenomath*}
\begin{equation}
\leqnomode
\tag{ILP$_1$} 
\label{LP1}
\max 
\sum_{r \in [\maxreq]}
\left(\frac{\blocksize}{r}(\underline{b}_r + \bar{b}_r) + 
\sum_{\substack{\loctype \in \loctypes \\ \config \in  \configs}}  
\config[r] x(\loctype, \config)\right) 
 \quad \textrm{s.t.} 
\end{equation}
\begin{align}
\notag \\[-22pt]
& \forall \loctype \in \loctypes \colon
&& \!\!\! \sum_{\substack{\config \in \configs \\  \config \text{ is suitable for } \loctype}}  x(\loctype, \config) = m_{\loctype} \label{const:sbilp1}\\
& && \!\!\! \sum_{r \in [\maxreq]}\underline{b}_r = \underline{B} \quad \text{ and } \sum_{r \in [\maxreq]}\bar{b}_r \leq \overline{B} \label{const:sbilp3}\\
& \forall r \in [\maxreq] \colon && \!\!\!
\frac{\blocksize}{r}(\underline{b}_r + \bar{b}_r) + \sum_{\substack{\loctype \in \loctypes \\ \config \in  \configs}} \config[r] x(\loctype, \config) \leq n_r   \label{const:sbilp4}
\end{align}
\end{linenomath*}

\conref{const:sbilp1} enforces that every place has families matched to it according to some suitable configuration.
\conref{const:sbilp3} enforces that every compulsory \superblock\ is filled with refugee families and that no non-existing optional \superblock s are filled with refugee families.
\conref{const:sbilp4} enforces that for each service-requirement, only available number of refugees are used. 
The objective function formulates the total number of families assigned.

It is clear that the number of variables is bounded above by $2 \maxreq + (\bigblocksize + 1)^{\maxreq} \maxreq^2 \blocksize^2$, and the number of constraints by $3 \maxreq^2 \blocksize^2 + \maxreq$, which are functions of $\maxreq$. Thus the problem can be solved in \FPT\ time w.r.t.\ $\maxreq$~\cite{lenstra1983integer}. 
The correctness of this approach follows from \cref{clm:ILP-rmax-iff}.


\begin{restatable}[\appsymb]{claim}{clmILPrmaxiff}
\label{clm:ILP-rmax-iff}
\ref{LP1} admits a solution
 with value~$\finalutil$ if and only if 
there is an assignment of families with utility~\finalutil. \qedhere
\end{restatable}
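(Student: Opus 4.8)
The plan is to prove both directions by exhibiting an explicit correspondence between feasible assignments and feasible ILP solutions that preserves the objective value. For the ``only if'' direction, suppose \ref{LP1} has a solution of value~$\finalutil$. I would first take the places of each type~$\loctype$ and partition them into groups according to the configurations dictated by the variables $x(\loctype,\config)$: \conref{const:sbilp1} guarantees this is a valid partition since $\sum_\config x(\loctype,\config)=m_\loctype$. For a place~$\loc_j$ of type~$\loctype$ assigned configuration~$\config$, I assign $\config[r]$ families of requirement~$r$ to $\loc_j$ for each $r\in[\maxreq]$, plus exactly $\argres(\caplmin_j)$ compulsory \superblock s and up to $\argres(\capl_j-\caplmin_j-\maxreq+1)$ optional \superblock s; the extra families needed to fill the $\underline{b}_r+\bar{b}_r$ \superblock s are drawn from the pool of families of requirement~$r$. \conref{const:sbilp4} ensures there are enough families of each requirement type to do this, and \conref{const:sbilp3} ensures the compulsory \superblock s are exactly accounted for and no phantom optional ones are created. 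The key point is that suitability of~$\loctype$ for~$\config$ plus \cref{eq:residue} (which writes $\caplmin_j=\blocksize\cdot\argres(\caplmin_j)+\res(\caplmin_j)$ and similarly for the optional part) guarantees $\caplmin_j\le\satur(\loc_j,\asg)\le\capl_j$, i.e.\ the assignment is feasible. The number of families assigned equals $\sum_r\bigl(\tfrac{\blocksize}{r}(\underline b_r+\bar b_r)+\sum_{\loctype,\config}\config[r]x(\loctype,\config)\bigr)=\finalutil$ since a \superblock\ of families with requirement~$r$ contains $\blocksize/r$ families.

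For the ``if'' direction, suppose there is a feasible assignment~$\asg$ matching $\finalutil$ families. I would process each place~$\loc_j$ separately. Because $\asg$ is feasible, the total requirement of families at $\loc_j$ is at least~$\caplmin_j$, so by \cref{obs:superblock} (applied to the families at~$\loc_j$, whose total requirement $\ge\caplmin_j$) we can repeatedly peel off \superblock s: first peel off $\argres(\caplmin_j)$ of them and designate these as compulsory, then continue peeling optional \superblock s from the remaining families as long as the leftover total requirement exceeds~$\bigblocksize$; the point of the threshold $\bigblocksize=2\maxreq(\blocksize-1)+\maxreq-1$ is precisely that \cref{obs:superblock} still applies until we drop below it, and that the residual configuration~$\config$ satisfies $\config\in\{0,\dots,\bigblocksize\}^\maxreq$. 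One must check that after removing exactly $\argres(\caplmin_j)$ compulsory and $\le\argres(\capl_j-\caplmin_j-\maxreq+1)$ optional \superblock s the residual load lies in $[\loctype[1],\loctype[1]+\loctype[2]]$, i.e.\ $\loctype$ is suitable for the resulting~$\config$; here the subtlety flagged in the proof text—that the families satisfying $\caplmin_j$ may overshoot it by up to $\maxreq-1$, which is why $\res$ is only taken on the part of $\capl_j-\caplmin_j$ above $\maxreq-1$—is exactly what makes the bookkeeping go through. Then I set $x(\loctype,\config)$ to the number of places handled this way, set $\underline b_r$ and $\bar b_r$ to the number of compulsory resp.\ optional \superblock s built from requirement-$r$ families; \conref{const:sbilp1}, \conref{const:sbilp3}, \conref{const:sbilp4} follow directly, and the objective equals $\finalutil$ by the same family-count bookkeeping.

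The main obstacle I expect is the second, ``if'' direction, specifically verifying the suitability condition and that the peeling procedure lands the residual load in the correct window $[\res(\caplmin_j),\,\res(\caplmin_j)+x]$ where $x$ is the second type coordinate. The compulsory \superblock s may absorb slightly more requirement than $\caplmin_j$ (up to $\maxreq-1$ extra), so one has to be careful that what is ``left over'' after the compulsory \superblock s is still consistent with the definition of $\loctype$, which deliberately carves out a $\maxreq-1$ buffer. I would handle this by a careful case analysis on whether $\capl_j-\caplmin_j\ge\maxreq-1$ (the two cases in the definition of~$x$ in~$\loctype$), tracking at each step how much requirement has been consumed versus how much \superblock\ mass has been removed, and invoking \cref{eq:residue} to convert between the raw quotas and their residues. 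Everything else—the partition of places, the counting of families in \superblock s, matching up the three constraints with the three structural requirements—is routine bookkeeping once this window computation is pinned down.
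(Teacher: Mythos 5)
Your proposal follows the same two-directional, constructive route as the paper (which splits the claim into two sub-claims, one per direction), and your forward direction --- building the assignment from an ILP solution by first placing compulsory \superblock s, then optional ones, then families according to configurations, and verifying feasibility via \cref{eq:residue} --- matches the paper's argument step for step. In the backward direction your outline is right but the specific peeling procedure you describe would not quite work as stated. The paper's key organizational device is to first extract from $\asg^{-1}(\loc_j)$ a \emph{minimal} subset $\FA^1_j$ whose total requirement covers $\caplmin_j$ (minimal in the sense that dropping any family falls below the lower quota), so that the overshoot $d=\sum_{\fa_i\in\FA^1_j}\reqf_i-\caplmin_j$ satisfies $0\le d\le\maxreq-1$; compulsory \superblock s are then peeled only from $\FA^1_j$ (stopping when its total minus $d$ reaches $\maxreq(\blocksize-1)$, which forces exactly $\argres(\caplmin_j)$ peels and a residual of exactly $\res(\caplmin_j)=\loctype_j[1]$), while optional \superblock s are peeled only from the complement, with a stopping threshold shifted by $\maxreq-d-1$. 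Your rule of peeling from the pooled set ``as long as the leftover exceeds $\bigblocksize$'' is not sufficient: $\bigblocksize$ only bounds the entries of a configuration vector, whereas suitability requires the residual to land in the (generally strictly smaller) window $[\loctype_j[1],\loctype_j[1]+\loctype_j[2]]$, so stopping at $\bigblocksize$ can leave a residual that is too large for any suitable configuration. You correctly identify this window computation as the crux and propose the right tools (the case split on $\capl_j-\caplmin_j\ge\maxreq-1$ and \cref{eq:residue}), so the gap is one of execution --- the choice of what to peel from and when to stop --- rather than of approach.
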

\end{proof}

\appendixproofwithstatement{clm:ILP-rmax-iff}{\clmILPrmaxiff*}{
We prove the two directions of the claim separately.

\begin{claim}
\label{clm:ILP-rmax-sufficient}
If \ref{LP1}  admits a solution with value~$\finalutil$, then there is an assignment of families with utility~$\finalutil$.
\end{claim}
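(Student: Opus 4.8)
The plan is to read a feasible solution of \ref{LP1}---the values $\underline{b}_r,\bar{b}_r$ for $r\in[\maxreq]$ and $x(\loctype,\config)$ for suitable pairs---as a packing recipe and assemble a feasible assignment from it. First, for each requirement value $r\in[\maxreq]$ I would reserve three pairwise disjoint sets of families, all of requirement~$r$: one of size $\frac{\blocksize}{r}\underline{b}_r$ to fill the compulsory \superblock s of ``flavour''~$r$, one of size $\frac{\blocksize}{r}\bar{b}_r$ to fill the optional \superblock s of flavour~$r$, and one of size $\sum_{\loctype\in\loctypes,\,\config\in\configs}\config[r]\,x(\loctype,\config)$ to supply configurations. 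This is possible exactly because \conref{const:sbilp4} bounds the total number of requirement-$r$ families used by~$n_r$; any family not reserved stays unassigned. Note that a \superblock\ of any flavour contributes exactly~$\blocksize$ to the load of whatever place it is placed in, so its flavour is irrelevant when checking quotas.

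Next I would distribute the packing units among the places. By \conref{const:sbilp1}, assign to each of the $m_{\loctype}$ places of type~$\loctype$ one configuration so that exactly $x(\loctype,\config)$ of them get~$\config$, and split the ``configuration'' family pools accordingly, so that a place assigned~$\config$ receives $\config[r]$ of the reserved requirement-$r$ families for each~$r$. By the equality in \conref{const:sbilp3}, and since $\sum_{\loc_j\in\LOC}\argres(\caplmin_j)=\underline{B}=\sum_r\underline{b}_r$, the compulsory \superblock s can be dealt out so that place~$\loc_j$ gets exactly $\argres(\caplmin_j)$ of them; by the inequality in \conref{const:sbilp3}, and since $\sum_r\bar{b}_r\le\overline{B}=\sum_{\loc_j\in\LOC}\argres(\max(\capl_j-\caplmin_j-\maxreq+1,0))$, the optional \superblock s can be dealt out greedily so that $\loc_j$ gets at most $k_j\coloneqq\argres(\max(\capl_j-\caplmin_j-\maxreq+1,0))$ of them. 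Assigning every reserved family to the place holding the unit containing it defines an assignment~$\asg$; by construction the number of families assigned by~$\asg$ equals the objective value of the solution, namely~$\finalutil$, which under \identical\ utilities is, up to the common utility value, $\util(\asg)$.

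It then remains to check that $\asg$ is feasible, i.e.\ $\caplmin_j\le\satur(\loc_j,\asg)\le\capl_j$ for every $\loc_j\in\LOC$. If $\loc_j$ has type~$\loctype$ and configuration~$\config$ and receives $\ell_j\le k_j$ optional \superblock s, then $\satur(\loc_j,\asg)=\argres(\caplmin_j)\cdot\blocksize+\sum_r\config[r]\cdot r+\ell_j\cdot\blocksize$. Since $\config$ is suitable for~$\loctype$, we have $\res(\caplmin_j)=\loctype[1]\le\sum_r\config[r]\cdot r\le\loctype[1]+\loctype[2]$, and \eqref{eq:residue} gives $\argres(\caplmin_j)\cdot\blocksize+\res(\caplmin_j)=\caplmin_j$, whence $\satur(\loc_j,\asg)\ge\caplmin_j$ immediately. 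For the upper bound I would split on the definition of~$\loctype[2]$: if $\capl_j-\caplmin_j\ge\maxreq-1$, expanding $\loctype[2]=\res(\capl_j-\caplmin_j-\maxreq+1)+\maxreq-1$ with \eqref{eq:residue} yields $\loctype[2]=\capl_j-\caplmin_j-\blocksize\cdot\argres(\capl_j-\caplmin_j-\maxreq+1)$, so $\satur(\loc_j,\asg)\le\caplmin_j+\loctype[2]+\ell_j\cdot\blocksize=\capl_j+\blocksize\bigl(\ell_j-\argres(\capl_j-\caplmin_j-\maxreq+1)\bigr)\le\capl_j$ because $\ell_j\le k_j=\argres(\capl_j-\caplmin_j-\maxreq+1)$; if $\capl_j-\caplmin_j<\maxreq-1$, then $\loctype[2]=\capl_j-\caplmin_j$ and $k_j=\argres(0)=0$, forcing $\ell_j=0$ and $\satur(\loc_j,\asg)\le\caplmin_j+(\capl_j-\caplmin_j)=\capl_j$.

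The main obstacle is precisely this upper-quota computation: one has to see why the $-\maxreq+1$ slack built into the second coordinate of~$\loctype$ is exactly right---it absorbs the overshoot of up to $\maxreq-1$ that integer-valued families may cause when meeting the lower quota, so that taking the residue on $\capl_j-\caplmin_j$ still leaves enough room to fit every optional \superblock\ below~$\capl_j$. Everything else is term-by-term bookkeeping over disjoint family pools, justified directly by \conref{const:sbilp1}--\conref{const:sbilp4} together with identity~\eqref{eq:residue}; leftover (unassigned) families never threaten feasibility, and the value of the assignment being~$\finalutil$ is immediate from the shape of the objective of \ref{LP1}.
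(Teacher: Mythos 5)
Your proof is correct and takes essentially the same route as the paper's: both read the ILP solution as a recipe, deal out compulsory and optional \superblock s to places using the two parts of \conref{const:sbilp3}, attach one suitable configuration per place via \conref{const:sbilp1}, draw disjoint family pools guaranteed by \conref{const:sbilp4}, and verify the quotas through identity~\eqref{eq:residue} with the same case split on whether $\capl_j-\caplmin_j\ge\maxreq-1$. The only difference is presentational (you reserve pools up front where the paper runs an iterative counter-based construction), and your explicit expansion of $\loctype[2]$ in the upper-quota check is a clean equivalent of the paper's calculation.
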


\begin{claimproof}
Assume there is a solution to the constructed ILP-instance with value~$\finalutil$. 
We construct an assignment $\asg$. Start with $\asg(\fa_i) = \bot$ for every $\fa_i \in \FA$.

Let us create a counter $\beta_r$ for each $r \in [\maxreq]$ and initialize it by setting $\beta_r \coloneqq \underline{b}_r$. Similarly, let us create a counter $\psi_j$ for each place $\loc_j \in \LOC$ and initialize it by setting $\psi_j \coloneqq \argres(\caplmin_j)$.

\paragraph{Step 1.}
We build $\asg$ as follows: While there exists some $r \in [\maxreq]$ such that $\beta_r > 0$,
\begin{compactenum}[(i)]
\item choose $\loc_j \in \LOC$ such that $\psi_j > 0$,
\item assign a \superblock\ with requirement~$r$ to~$\loc_j$, and
\item update $\beta_r \coloneqq \beta_r - 1$ and $\psi_j \coloneqq \psi_j - 1$.
\end{compactenum}

Since the solution is valid, by \conref{const:sbilp3} we know  that $\sum_{r \in [\maxreq]} \underline{b}_r = \underline{B}= \sum_{\loc_j \in \LOC} \argres(\caplmin_j)$. Therefore at the end of the process, $\sum_{r \in [\maxreq]} \beta_r = \sum_{\loc_j \in \LOC} \psi_j = 0$.

\paragraph{Step 2.}
Next we assign the families that are matched in optional \superblock s.
Let us create a counter $\bar{\beta}_r$ for each $r \in [\maxreq]$ and initialize $\bar{\beta}_r \coloneqq \bar{b}_r$. Similarly, let us create a counter $\bar{\psi}_j$ for each $\loc_j \in \LOC$ and initialize $\bar{\psi}_j \coloneqq \argres(\max(\caplmin - \caplmin_j - \maxreq + 1, 0))$.

We build $\asg$ as follows: While there exists some $r \in [\maxreq]$ such that $\bar{\beta}_r > 0$,
\begin{compactenum}[(i)]
\item choose $\loc_j \in \LOC$ such that $\bar{\psi}_j > 0$,\label{ilp:sol:opt1}
\item assign a \superblock\ with requirement $r$ to $\loc_j$, and
\item update $\bar{\beta}_r \coloneqq \bar{\beta}_r - 1$ and $\bar{\psi}_j \coloneqq \bar{\psi}_j - 1$.
\end{compactenum}

Since the solution is valid, by \conref{const:sbilp3} we know  that  $\sum_{r \in [\maxreq]} \bar{b}_r \leq \overline{B}= \sum_{\loc_j \in \LOC} \argres(\max(\caplmin - \caplmin_j - \maxreq + 1, 0))$. Therefore while $\sum_{r \in [\maxreq]} \bar{\beta}_r \geq 0$, we can always find $\loc_j \in \LOC$ such that $\psi_j > 0$.

\paragraph{Step 3.}
Finally, we look at every place type $\loctype \in \loctypes$. We have  
\begin{linenomath*}
\begin{equation}
\label{eq:step3}
\sum_{\substack{\config \in \configs \\  \config \text{ is suitable for } \loctype}}  x(\loctype, \config) = m_{\loctype}
\end{equation}
\end{linenomath*}
 due to \conref{const:sbilp1}, where $m_{\loctype}$ is the number of places of type~$\loctype$. 
 For each place type~$\loctype$ do the following: for each $\config \in \configs$, assign $x(\loctype, \config)$ many families to places of type~$\loctype$ according to configuration~$\config$:
for each place $\loc_j$ of type~$\loctype$, we assign a set of families having configuration~$\config$ to~$\loc_j$, i.e., for each $r \in [\maxreq]$, we assign $\config[r]$ many families with requirement $r$ to the place~$\loc_j$. By the equality in~(\ref{eq:step3}), every place of type~$\loctype$ will be treated exactly once in this step.

\paragraph{Correctness.}
This concludes the description of the assignment. It remains to show that this assignment is feasible and matches $\finalutil$ many agents.
We start by showing that the quotas of the places are respected. Let $\loc_j \in \LOC$ be an arbitrary place and let $\loctype_j$ be the type of~$\loc_j$. 
In Step~1, we assign $\argres(\caplmin_j)$ many \superblock s to~$\loc_j$. In Step~2, we assign at most $\argres(\max(\caplmin - \caplmin_j - \maxreq + 1, 0))$ many \superblock s to~$\loc_j$.
In Step~3, we choose some $\config \in \configs$ that is suitable for $\loctype_j$ and assign families accordingly. Thus, $\loctype_j[1] \leq R \leq \loctype_j[1] + \loctype_j[2]$ holds for the total requirement~$R$ of the families assigned to $\loc_j$ in Step~3. Thus $R \geq \res(\caplmin_j)$. 

Therefore, the total requirement of the families assigned to $\loc_j$ is at least $\blocksize \cdot \argres(\caplmin_j) + \res(\caplmin_j) \stackrel{\eqref{eq:residue}}{=} \caplmin_j$, as required.

If $\capl - \caplmin_j \geq \maxreq - 1$, then we know 
\begin{linenomath*}
\begin{equation*}
R \leq \res(\caplmin_j) + \res(\capl_j - \caplmin_j - \maxreq + 1) + \maxreq - 1.
\end{equation*}
\end{linenomath*}
 Otherwise, $R \leq \res(\caplmin_j) + \capl_j - \caplmin_j$.

If $\capl - \caplmin_j \geq \maxreq - 1$, then the total requirement of the families assigned to $\loc_j$ is at most 
\begin{linenomath*}
\begin{align*}
\blocksize \cdot \argres(\caplmin_j) &+ \blocksize \cdot \argres(\capl_j - \caplmin_j - \maxreq + 1) +\res(\caplmin_j) \\
& + \res(\capl_j - \caplmin_j - \maxreq + 1) 
+ \maxreq - 1 \\
&\stackrel{\eqref{eq:residue}}{=} \caplmin_j + \capl_j - \caplmin_j - \maxreq + 1 + \maxreq - 1 = \capl_j,
\end{align*}
\end{linenomath*}
 as required.

Otherwise, the total requirement of the families assigned to $\loc_j$ is at most $\blocksize \cdot \argres(\caplmin_j) + 0 + \res(\caplmin_j) + \capl_j - \caplmin_j \stackrel{\eqref{eq:residue}}{=} \caplmin_j + \capl_j - \caplmin_j = \capl_j$, as required.

Next, we show that all the families that are assigned in $\asg$ exist. Let $r \in [\maxreq]$ be an arbitrary service requirement. Whenever we match a \superblock\ of requirement~$r$, we match $\frac{\blocksize}{r}$ many families of this requirement. Therefore in Step~1 we match $\frac{\blocksize}{r}\cdot \underline{b}_r$ families of requirement $i$. Similarly, in Step~2 we match $\frac{\blocksize}{r}\cdot \bar{b}_r$ families of requirement~$r$.
In Step~3 we assign $\config[r] \cdot x(\loctype, \config)$ many families for each $\loctype \in \loctypes, \config \in \configs$.
Thus the total number of families of requirement~$r$ matched is 
\begin{linenomath*}
\begin{align*}
\frac{\blocksize}{r}(\underline{b}_r + \bar{b}_r) + \sum_{\substack{\loctype \in \loctypes \\ \config \in  \configs}} \config[r] x(\loctype, \config).
\end{align*}
\end{linenomath*}
 By \conref{const:sbilp4} this is at most $n_r$, which is the number of families of requirement~$r$.

With the same reasoning, the total number of families matched is 
\begin{linenomath*}
\begin{align*}
\sum_{r \in [\maxreq]}\frac{\blocksize}{r}(\underline{b}_r + \bar{b}_r) + \sum_{\substack{\loctype \in \loctypes \\ \config \in  \configs}} \config[r] x(\loctype, \config),
\end{align*}
\end{linenomath*} 
which is exactly the value of our solution to~\ref{LP1}, and thus equals~$\finalutil$.
\end{claimproof}

\begin{claim}
\label{clm:ILP-rmax-necessary}
If there is a feasible assignment that matches $\finalutil$ many agents, then \ref{LP1}  admits a solution of value~$\finalutil$.
\end{claim}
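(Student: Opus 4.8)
I would build a feasible solution of~\ref{LP1} of value~$\finalutil$ directly from the given feasible assignment~$\asg$, by ``recognizing'' inside~$\asg$ the structure encoded by the ILP. For each place~$\loc_j$ consider the multiset $\FA_j\coloneqq \asg^{-1}(\loc_j)$; by feasibility its total requirement~$R_j$ satisfies $\caplmin_j\le R_j\le \capl_j$. The heart of the argument is to split~$\FA_j$ into a disjoint collection of \superblock s together with a \emph{remainder} configuration~$\config_j$ so that (i)~$\config_j$ is suitable for~$\loctype(\loc_j)$, i.e.\ the total requirement of~$\config_j$ lies in $[\loctype(\loc_j)[1],\loctype(\loc_j)[1]+\loctype(\loc_j)[2]]$, and (ii)~the number~$k_j$ of extracted \superblock s lies in $[\argres(\caplmin_j),\,\argres(\caplmin_j)+\argres(\max(\capl_j-\caplmin_j-\maxreq+1,0))]$. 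Given such a decomposition for every~$\loc_j$, I set the ILP variables as follows: globally designate exactly $\underline B=\sum_j\argres(\caplmin_j)$ of the extracted \superblock s as compulsory and the remaining $\sum_j k_j-\underline B\le \overline B$ as optional (possible by~(ii)); for each $r\in[\maxreq]$ let $\underline b_r$ (resp.\ $\bar b_r$) count the compulsory (resp.\ optional) \superblock s of requirement~$r$, which is consistent with using $\tfrac{\blocksize}{r}(\underline b_r+\bar b_r)$ families of requirement~$r$ inside \superblock s since a \superblock of requirement~$r$ consists of $\blocksize/r$ families; and for each pair $(\loctype,\config)$ let $x(\loctype,\config)$ be the number of places~$\loc_j$ with $\loctype(\loc_j)=\loctype$ and $\config_j=\config$.

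\textbf{Verifying feasibility and the value.} \conref{const:sbilp1} holds because every place of type~$\loctype$ contributes~$1$ to $x(\loctype,\config_j)$ and~$\config_j$ is suitable for~$\loctype$ by~(i), so the sum over the $m_\loctype$ places of type~$\loctype$ equals~$m_\loctype$. \conref{const:sbilp3} holds by the choice of the compulsory/optional split together with~(ii). For \conref{const:sbilp4}, the left-hand side for a fixed~$r$ is exactly the number of families of requirement~$r$ that~$\asg$ assigns to some place (those sitting in \superblock s plus those sitting in configurations), which is at most~$n_r$. Finally, summing the objective over~$r$ counts exactly the families assigned by~$\asg$, namely~$\finalutil$. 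Hence~\ref{LP1} admits a solution of value~$\finalutil$, completing the proof together with \cref{clm:ILP-rmax-sufficient}.

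\textbf{Main obstacle.} The delicate step is the per-place decomposition, i.e.\ achieving (i) and~(ii) simultaneously. The key enabling facts are that all family requirements are at most~$\maxreq$ and $\blocksize=\operatorname{lcm}(1,\dots,\maxreq)$, so by \cref{obs:superblock} one can keep extracting \superblock s from any sub-multiset of~$\FA_j$ while its total requirement exceeds $\maxreq(\blocksize-1)$; together with the identity $x=\blocksize\cdot\argres(x)+\res(x)$ from~\eqref{eq:residue}, applied both to~$\caplmin_j$ and to $\capl_j-\caplmin_j-\maxreq+1$, this already forces the number of extractable \superblock s to be at least $\argres(\caplmin_j)$, so the compulsory \superblock s can always be found. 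One then performs a case distinction according to whether $\capl_j-\caplmin_j\ge\maxreq-1$ and chooses how many further \superblock s to extract so that the requirement of the remainder lands inside the window $[\loctype(\loc_j)[1],\loctype(\loc_j)[1]+\loctype(\loc_j)[2]]$; here the slack term $\maxreq-1$ hard-wired into~$\loctype(\loc_j)[2]$ is precisely what absorbs the over-shoot of up to $\maxreq-1$ units incurred when the families used to cover~$\caplmin_j$ exceed it. Bounding this over-shoot carefully, and (if necessary) balancing the total number $\sum_j k_j$ of \superblock s across places so that the global bounds~$\underline B$ and $\underline B+\overline B$ are respected, is the part that requires real care; everything after the decomposition is routine bookkeeping.
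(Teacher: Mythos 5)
Your proposal is correct and follows essentially the same route as the paper's proof: for each place, extract \superblock{}s via \cref{obs:superblock} from a minimal family set covering the lower quota (overshoot $d\le\maxreq-1$) and from the leftover families, let the residue form a configuration suitable for $\loctype(\loc_j)$ using the identity~\eqref{eq:residue}, and then read off the ILP variables — the case distinction on $\capl_j-\caplmin_j\ge\maxreq-1$ that you defer is exactly what the paper carries out in detail. Your only deviation, designating compulsory versus optional \superblock{}s globally rather than per place, is immaterial: since each $k_j$ lies in $[\argres(\caplmin_j),\,\argres(\caplmin_j)+\argres(\max(\capl_j-\caplmin_j-\maxreq+1,0))]$, the totals respect the bounds $\underline{B}$ and $\overline{B}$, which is all that \conref{const:sbilp3} constrains.
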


\begin{claimproof}
Let $\asg$ be a feasible assignment that matches $\finalutil$ many agents.

We start by initializing 
$\underline{b}_r \coloneqq 0$ and $\bar{b}_r \coloneqq 0$ for every $r \in [\maxreq]$. 
For each $r \in [\maxreq]$, we also construct sets $\Ablock_r$, $\Ablockopt_r$ and $\Aconfig{\config}$ for each $\config \in \configs$; each of these sets will contain families with requirement~$r$.  The set $\Ablock_r$ will contain families that are matched to compulsory \superblock s, $\Ablockopt_r$ the families matched to optional \superblock s, and $\Aconfig{\config}$ will contain
families 
that are part of the configurations.

For each place $\loc_j \in \LOC$, choose an arbitrary subset of families $\FA^1_j \subseteq \asg^{-1}(\loc_j)$ such that $\sum_{\fa_{i} \in \FA^1_j}\reqf_{i} \geq \caplmin_j$ and moreover, for each family $\fa_{i^*} \in \FA^1_j $ we have $\sum_{\fa_{i} \in \FA^1_j \setminus \{\fa_{i^*}\} } \reqf_{i} < \caplmin_j$; that is, the families in $\FA^1_j$  satisfy the lower quota of $\loc_j$, and removing any family from~$\FA^1_j$ means the lower quota is no longer satisfied. Since $\asg$ is feasible, such a set must exist.
Let $d \coloneqq \sum_{\fa_{i} \in \FA^1_j}\reqf_{i} - \caplmin_j$; this is the amount by which the total requirement of the families in $\FA_j^1$ exceeds~$\caplmin_j$. Observe that by construction, $0 \leq d \leq \maxreq - 1$.

If $\sum_{\fa_i \in \FA^1_j} \reqf_i - d > \maxreq (\blocksize - 1)$, then we start an iteration as follows. Clearly, $\sum_{\fa_i \in \FA^1_j} \reqf_i > \maxreq (\blocksize - 1)$  holds, so by \cref{obs:superblock}, there must be a \superblock\ $\FA'_j$ contained in~$\FA^1_j$. Let the requirement of the families in $\FA'_j$ be $r \in [\maxreq]$. We create a new subset $\FA^2_j \coloneqq \FA^1_j \setminus \FA'_j$ and increment the value of the variable~$\underline{b}_{r}$ by one.
We also add the families  
in $\FA'_j$ to $\Ablock_{r}$. Observe that there are precisely 
$\frac{\blocksize}{r}$
 families in~$\FA'_j$. We repeat this process until we reach $k$ such that $\sum_{\fa_i \in \FA^k_j} \reqf_i - d \leq \maxreq (\blocksize - 1)$.

Since $\sum_{\fa_i \in \FA^1_j} \reqf_i - d = \caplmin_j$ by construction, we stop when $\sum_{\fa_i \in \FA^k_j} \reqf_i - d \leq \maxreq (\blocksize - 1)$, and in each iteration we remove~$\blocksize$ from the total requirement, we observe that the number~$k$ of iterations  is precisely $\argres(\caplmin_j)$, and $\sum_{\fa_i \in \FA^k_j} \reqf_i - d = \res(\caplmin_j)$. Therefore we increment the variables $\{\underline{b}_r :r \in [\maxreq]\}$ precisely $\sum_{\loc_j \in \LOC}\argres(\caplmin_j)$ times. This shows the first part of \conref{const:sbilp3}.

Similarly, consider $\bar{\FA}^1_j \coloneqq \asg^{-1}(\loc_j) \setminus \FA^1_j$. Intuitively, these families are assigned to $\loc_j$ but they are not necessary for satisfying the lower quota of $\loc_j$.
 
If $\sum_{\fa_i \in \bar{\FA}^1_j} \reqf_i - \maxreq + d + 1> \maxreq (\blocksize - 1)$, then we start a second iteration as follows. Clearly, $\sum_{\fa_i \in \bar{\FA}^1_j} \reqf_i > \maxreq (\blocksize - 1)$ holds, so  by \cref{obs:superblock}, there must be a \superblock\ $\FA'_j$ contained in~$\bar{\FA}^1_j$. Let the requirement of the families in $\FA'_j$ be $r \in [\maxreq]$. We create a new subset $\bar{\FA}^2_j \coloneqq \bar{\FA}_j^1 \setminus \FA'_j$ and increment the value of the variable~$\bar{b}_{r}$ by one. We repeat this process until we reach $\altk$ such that $\sum_{\fa_i \in \bar{\FA}^{\altk}_j} \reqf_i - \maxreq + d + 1\leq \maxreq (\blocksize - 1)$.


Observe that by the definition of~$d$,
\begin{linenomath*}
\begin{align*}
\sum_{\fa_i \in \bar{\FA}^1_j} \reqf_i - \maxreq + d + 1 &= \sum_{\fa_i \in \bar{\FA}^1_j} \reqf_i + \sum_{\fa_i \in \FA^1_j} \reqf_i - \caplmin_j - \maxreq + 1 \\
&\leq \capl_j - \caplmin_j - \maxreq + 1.
\end{align*}
\end{linenomath*}
If $\capl_j - \caplmin_j \geq \maxreq - 1$, then 
\begin{linenomath*}
\begin{align*}
\altk - 1\leq \argres(\capl_j - \caplmin_j - \maxreq + 1), 
\end{align*}
\end{linenomath*}
and thus 
\begin{linenomath*}
\begin{align*}
\sum_{\fa_i \in \bar{\FA}^k_j}\reqf_i - \maxreq + d + 1 \leq \res(\capl_j - \caplmin_j - \maxreq + 1).
\end{align*}
\end{linenomath*}
By contrast, if $\capl_j - \caplmin_j < \maxreq - 1$, then we never remove \superblock s from~$\loc_j$, and $\bar{\FA}^{\altk}_j = \bar{\FA}^1_j$.
Thus, we increment the variables $\{\bar{b}_r :r \in [\maxreq]\}$ throughout the whole process performed for each place $\loc_j \in \LOC$ at most 
\begin{linenomath*}
\begin{align*}
\sum_{\loc_j \in \LOC}\argres(\max(\caplmin_j - \capl_j - \maxreq + 1, 0 ))
\end{align*}
\end{linenomath*}
 times, 
which shows the second part of \conref{const:sbilp3}.

After finishing the two iterations for finding $\FA^k_j$ and $\bar{\FA}^{\altk}_j$, we identify 
the configuration~$\config_j \in \configs$  whose $r$-th coordinate satisfies
\begin{linenomath*}
\begin{align*}
\config_j[r] =|\{\fa_h \in \FA^k_j \cup \bar{\FA}^{\altk}_j : \reqf_h = r \}|
\end{align*}
\end{linenomath*}
 for each $r \in [\maxreq]$.
We then add to $\Aconfig{\config_j}$ the $\config_j[r]$ families in the set $\{\fa_h \in \FA^k_j \cup \bar{\FA}^{\altk}_j : \reqf_h = r \}$.
By construction, $\sum_{\fa_i \in \FA^k_j \cup \bar{\FA}^{\altk}_j} \reqf_i = \sum_{r \in [\maxreq]} \config_j[r] \cdot r$.
Let $\loctype_j$ be the type of $\loc_j$.
We increment the variable $x(\loctype_j, \config_j)$ by one.
We now show that $\loctype_j$ is suitable for $\config_j$.

Recall that $\sum_{\fa_i \in \FA^k_j} \reqf_i - d = \res(\caplmin_j) = \loctype_j[1]$ for each $\loc_j \in \LOC$.
Therefore, 
\begin{linenomath*}
\begin{align*}
\loctype_j[1] = \sum_{\fa_i \in \FA^k_j \cup \bar{\FA}^{\altk}_j} \reqf_i - d \leq \sum_{\fa_i \in \FA^k_j \cup \bar{\FA}^{\altk}_j} \reqf_i \leq  \sum_{r \in [\maxreq]} \config[r] \cdot r,
\end{align*}
\end{linenomath*}
 as required. 
It remains to show that 
\begin{linenomath*}
\begin{equation*}
\sum_{r \in [\maxreq]} \config_j[r] \cdot r \leq \loctype[1] + \loctype[2].
\end{equation*}
\end{linenomath*}

We distinguish between two cases.

\paragraph*{Case 1: $\capl_j - \caplmin_j \geq \maxreq - 1$.}  Then 
\begin{linenomath*}
\begin{equation*}
\sum_{\fa_i \in \bar{\FA}^{\altk}_j}\reqf_i + d  \leq \res(\capl_j - \caplmin_j - \maxreq + 1) + \maxreq - 1 = \loctype[2].
\end{equation*}
\end{linenomath*}
Therefore, 
\begin{linenomath*}
\begin{align*}
\sum_{r \in [\maxreq]} \config[r] \cdot r &= \sum_{\fa_i \in \FA^k_j \cup \bar{\FA}^{\altk}_j} \reqf_i =  \sum_{\fa_i \in \FA^k_j} \reqf_i - d + \sum_{\fa_i \in \bar{\FA}^{\altk}_j} \reqf_i + d  \\
&\leq \res(\caplmin_j) + \res(\capl - \caplmin_j - \maxreq + 1) + \maxreq - 1 
\\
&= \loctype_j[1] + \loctype_j[2],
\end{align*}
\end{linenomath*}
 as required.

\paragraph*{Case 2:  $\capl_j - \caplmin_j < \maxreq - 1$.}
In this case, $\loctype_j[2] = \capl_j - \caplmin_j$. 
Recall that $\asg$ is a valid assignment,  so $\sum_{\fa_i \in \FA^1_j} \reqf_i + \sum_{\fa_i \in \bar{\FA}^1_j} \reqf_i \leq \capl_j$. Therefore we obtain that 
\begin{linenomath*}
\begin{align*}
\sum_{\fa_i \in \bar{\FA}^{\altk}_j} \reqf_i  &= \sum_{\fa_i \in \bar{\FA}^1_j} \reqf_i  \leq \capl_j - \sum_{\fa_i \in \FA^1_j} \reqf_i = \capl_j - \caplmin_j - d  = \loctype_j[2] - d
\end{align*}
\end{linenomath*}
which is equivalent to
\begin{linenomath*}
\begin{equation*}
 \sum_{\fa_i \in \bar{\FA}^{\altk}_j} \reqf_i + d \leq \loctype_j[2].
\end{equation*}
\end{linenomath*} 
 Thus 
\begin{linenomath*}
\begin{align*}
\sum_{r \in [\maxreq]} \config[r] \cdot r & = \sum_{\fa_i \in \FA^k_j \cup \bar{\FA}^{\altk}_j} \reqf_i = \sum_{\fa_i \in \FA^k_j} \reqf_i - d + \sum_{\fa_i \in \bar{\FA}^{\altk}_j} \reqf_i + d 
\\
&\leq \loctype_j[1] + \loctype_j[2],
\end{align*}
\end{linenomath*}
 as required.

For each place type $\loctype$, every place of type~$\loctype$ increments a single variable in $\{x(\loctype, \config) : \config \in \configs, \config \text{ is suitable for } \loctype\}$. Thus $\sum  \{ x(\loctype, \config): \config \in \configs,  \config \text{ is suitable for } \loctype\} = m_{\loctype}$, which proves \conref{const:sbilp1}.

Next, let us prove \conref{const:sbilp4}. Let $r \in [\maxreq]$ be a service requirement. Observe that the sets $\Ablock_r, \Ablockopt_r$, and $ \Aconfig{\config}$ for $\config \in \configs$ are all subsets of $\{\fa_h \in \FA :\reqf_h = r\}$ as only families of requirement~$r$ are added to these sets. Moreover, they are pairwise disjoint, as any family is added to at most one set. Finally, the set of families matched to some place under~$\asg$ is precisely $\Ablock_r \cup \Ablockopt_r \cup \bigcup_{\config \in \configs} \Aconfig{\config}$, as every family that is matched to a place is added to one of these sets.

Each time we incremented variable $\underline{b}_r$ by one, we added $\frac{\blocksize}{r}$ families to $\Ablock_r$. Thus $|\Ablock_r| = \frac{\blocksize}{r} \underline{b}_r$.
Similarly, each time we incremented variable $\bar{b}_r$ by one, we added $\frac{\blocksize}{r}$ families to $\Ablockopt_r$. Thus $|\Ablockopt_r| = \frac{\blocksize}{r} \bar{b}_r$.
Finally, for every $\config \in \configs$, anytime we incremented a variable in~$\{x(\loctype, \config) : \loctype \in \loctypes\}$ by one, we added $\config[r]$ families to~$\Aconfig{\config}$.
Thus $|\Aconfig{\config}| = \sum_{\loctype \in \loctypes}\config[r]x(\loctype, \config)$.

Therefore we can conclude that the number of families of requirement $r$ matched under $\asg$ is precisely 
\begin{linenomath*}
\begin{align*}
|\Ablock_r| + |\Ablockopt_r| & + \sum_{\config \in \configs}|\Aconfig{\config}| \\
&=\frac{\blocksize}{r} \underline{b}_r + \frac{\blocksize}{r} \bar{b}_r + \sum_{\config \in \configs}\sum_{\loctype \in \loctypes}\config[r] x(\loctype, \config) \\
&= \frac{\blocksize}{r}(\underline{b}_r + \bar{b}_r) + \sum_{\substack{\loctype \in \loctypes \\ \config \in  \configs}} \config[r] x(\loctype, \config).
\end{align*}
\end{linenomath*}
Since there are $n_r$ families of requirement~$r$, we get 
\begin{linenomath*}
\begin{equation*}
\frac{\blocksize}{r}(\underline{b}_r + \bar{b}_r) + \sum_{\substack{\loctype \in \loctypes \\ \config \in  \configs}} \config[r] x(\loctype, \config) \leq n_r.
\end{equation*}
\end{linenomath*} 
This proves \conref{const:sbilp4}.

We deduce that the number of families matched under $\asg$ is 
\begin{linenomath*}
\begin{equation*}
\sum_{r \in [\maxreq]}\frac{\blocksize}{r}(\underline{b}_r + \bar{b}_r) + \sum_{\substack{\loctype \in \loctypes \\ \config \in  \configs}} \config[r] x(\loctype, \config).
\end{equation*}
\end{linenomath*} 
Because $\asg$ matches $\finalutil$ many families, 
the value of the constructed solution for~\ref{LP1} is exactly~$\finalutil$.
\end{claimproof}

\cref{clm:ILP-rmax-sufficient,clm:ILP-rmax-necessary} together prove the correctness of our ILP.
}

When preferences or utilities are not \identical,  parameterization by~$\maxreq$ alone does not yield fixed-parameter tractability: as established by \cref{thm:satwithties}, the case $\maxreq=\maxcap=2$ is \NPh\ even in a very restricted case, when there are no lower quotas, families have dichotomous preferences (or binary utilities), and each family finds at most two places acceptable (or of positive utility).

Let us remark that a slightly weaker result (the statement without the condition that each family finds exactly two places acceptable, or has positive utility for exactly two paces) follows via a fairly straightforward reduction from the \textsc{Matching with Couples} problem~\cite{biro-mcdermid-matching-sizes,glass-kellerer-scheduling}.

\begin{restatable}[\appsymb]{theorem}{thmsatwithties}
\label{thm:satwithties}
\pareto-\RR\ and \maxutil-\RR\ for $ \noser = 1$ are \NPh\ even when $\maxreq = \maxcap = 2$ and there are no lower quotas.
The result holds for \pareto-\RR\ even if all families have \dichotomous\ preferences and find exactly two places acceptable,
and for \maxutil-\RR\ even if utilities are binary and each family has positive utilities for exactly two places.
\end{restatable}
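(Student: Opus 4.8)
The plan is to prove both hardness results with a single construction, after observing that under the stated restrictions \pareto-\RR\ and \maxutil-\RR\ ask essentially the same combinatorial question. For \maxutil-\RR\ take binary utilities in which every family has utility~$1$ for exactly two places and utility~$0$ elsewhere, and set $\finalutil = n$; then a feasible assignment has utility at least~$n$ exactly when it is complete and assigns every family to one of its two utility-$1$ places. For \pareto-\RR\ let every family be acceptable to exactly two places and indifferent between them (dichotomous preferences); by \cref{obs:Pareto-vs-completeness} a Pareto-optimal assignment is complete iff a complete feasible acceptable assignment exists, so a polynomial algorithm for \pareto-\RR\ would decide whether such an assignment exists. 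Hence it suffices to show that the following problem is \NPh: given an \RR\ instance with $\noser = 1$, no lower quotas, all requirements and all upper quotas in $\{1,2\}$, and every family acceptable to exactly two places, decide whether a complete feasible acceptable assignment exists. Note that if every family had requirement~$1$, or if acceptability lists were allowed to be arbitrarily long, this would be a polynomially solvable flow / $b$-matching problem; so the hardness must come from requirement-$2$ families, each of which must occupy an entire capacity-$2$ place on its own.

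For this core problem I would reduce from a formula $\phi$ of \textsc{3-SAT} (a bounded-occurrence variant makes the bookkeeping cleaner but is not needed). The clause side already has lists of length exactly two: for a clause $C = \ell_1 \vee \ell_2 \vee \ell_3$ create a \emph{collector} place $M_C$ with upper quota~$2$ and three requirement-$1$ families $e_1, e_2, e_3$ with $e_k$ acceptable to exactly $\{M_C, L_k\}$, where $L_k$ is the ``literal place'' of $\ell_k$; since $M_C$ absorbs at most two of $e_1, e_2, e_3$, in any complete feasible assignment some $e_k$ is placed at $L_k$, which is possible only if $L_k$ still has spare capacity, i.e.\ literal $\ell_k$ is ``satisfied'' by the variable gadget. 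The variable gadget for $x$ must (a) for each truth value of $x$ admit a feasible configuration in which every literal place of the correspondingly-signed occurrences of $x$ has spare capacity, and (b) be \emph{rigid}: in no complete feasible assignment may it leave spare capacity simultaneously at a positive and at a negative literal place of $x$. Given (a) and (b), the usual argument goes through: from a satisfying assignment, put each variable gadget into the matching configuration and route each clause's collector so that the satisfied literal's $e_k$ takes $L_k$; conversely, from a complete feasible assignment read off $x$ as true iff some positive literal place of $x$ has spare capacity — by (b) this is consistent, and every collector being placed forces one of its literal places to have spare capacity, so every clause is satisfied.

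The variable gadget is the crux. The naive ``cycle of requirement-$2$ toggles over capacity-$2$ places'' is rigid but \emph{tight}: both of its two global orientations fill every place, so it offers no spare capacity to signal anything; loosening it (fewer or lighter edges) immediately creates many unintended configurations. The fix is to build it from carefully interleaved requirement-$1$ and requirement-$2$ families over capacity-$2$ places, engineered to have exactly two ``clean'' states — in one, every positive literal place receives one unit (one free) while every negative literal place receives two (full); in the other, vice versa. Alternatively, reducing from \textsc{Matching with Couples} / matching with sizes $\{1,2\}$~\cite{biro-mcdermid-matching-sizes,glass-kellerer-scheduling} yields the \emph{weaker} statement (families acceptable to \emph{at most} two places) essentially for free, and one then threads a degree-reduction gadget through that construction — replacing each family of list length $\ge 3$ by a short chain of requirement-$1$/requirement-$2$ families over fresh capacity-$2$ junction places, every such family acceptable to exactly two places — while preprocessing away families of list length $\le 1$ (these are forced; deleting one and decrementing the upper quota of its place keeps all quotas in $\{0,1,2\}$ and lower quotas at $0$). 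Either way one verifies, routinely, that requirements and upper quotas remain in $\{1,2\}$, lower quotas remain $0$, every family ends up acceptable to exactly two places (so the preferences are dichotomous and, on the \maxutil\ side, the utilities are binary with exactly two ones), and that completeness of the constructed assignment corresponds to satisfiability of $\phi$.

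The main obstacle is precisely this rigid-yet-slack gadget: a constant-size gadget that realizes every intended choice but no inconsistent one, within the extremely tight budget forced by $\maxreq = \maxcap = 2$, no lower quotas, and acceptability lists of length exactly two. A secondary, more routine difficulty is checking the ``exactly two acceptable places'' invariant for every family after all the wiring — and, if one routes through \textsc{Matching with Couples}, ensuring (or forcing by an extra capacity-reduction gadget) that the ambient hardness already holds with all capacities at most~$2$.
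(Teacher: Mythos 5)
Your reduction framework is right (binary utilities with $\finalutil=n$ for \maxutil-\RR, dichotomous length-two lists plus \cref{obs:Pareto-vs-completeness} for \pareto-\RR, and clause places of quota~$2$ receiving requirement-$1$ ``occurrence'' families), and your collector families $e_k$ acceptable to $\{M_C,L_k\}$ are exactly the occurrence families the paper uses. But the proof is not complete: the variable gadget, which you yourself identify as ``the crux'' and ``the main obstacle,'' is never constructed. You state the two properties it should have and gesture at two possible routes (an interleaved requirement-$1$/requirement-$2$ gadget with ``exactly two clean states,'' or a degree-reduction chain threaded through \textsc{Matching with Couples}), but neither is exhibited or verified, so the central step of the hardness argument is missing. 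Since the whole difficulty of the statement under $\maxreq=\maxcap=2$ with lists of length exactly two lives in this gadget, the proposal as written does not establish the theorem.

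The gap closes much more easily than your sketch suggests, and your remark that a bounded-occurrence variant is ``not needed'' points in the wrong direction: the paper reduces from the 3-SAT variant in which every literal appears \emph{exactly twice} (each clause having three distinct literals), and this is precisely what makes the variable gadget trivial. One introduces, for each variable~$x$, two literal places $\loc_x,\loc_{\overline{x}}$ of quota~$2$ and a single requirement-$2$ family $\fa_x$ acceptable to exactly these two places. In any complete acceptable assignment $\fa_x$ fully blocks one literal place, which encodes the truth value; the unblocked literal place has capacity~$2$, which suffices for \emph{both} occurrence families of that literal because the literal occurs only twice in the whole formula. A false literal's occurrence families are forced to their clause places, and the clause quota of~$2$ then rules out three false literals per clause; conversely, a satisfying assignment yields a complete feasible assignment by sending true-literal occurrence families to their literal place and false-literal ones to their clause places. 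No ``rigid-yet-slack'' machinery and no strong rigidity condition of your form (b) are needed; consistency of the decoded truth assignment follows simply from $\fa_x$ being assigned to exactly one of $\loc_x,\loc_{\overline{x}}$. With unbounded literal occurrences, by contrast, a quota-$2$ literal place cannot absorb all occurrence families of a true literal, which is exactly why your general-3-SAT route forced you into the gadget you could not build.
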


\appendixproofwithstatement{thm:satwithties}{\thmsatwithties*}{

Here we present a reduction from the variant of 3-SAT where each literal appears exactly twice. Let our input be a formula $\varphi=\bigwedge_{C \in \mathcal{C}} C$ over a set~$X$ of variables where both literals $x$ and $\overline{x}$ appear exactly twice in~$\varphi$ for each $x \in X$, 
and each clause~$C \in \mathcal{C}$ contains exactly three distinct literals.
This problem is \NPh\
\cite{BKS-2bal3sat-2003}.
We construct an instance~$I$ of \pareto-\RR\ with $\noser=1$ and without lower quotas as follows.

We set $\{\fa_x,x^1,x^2,\overline{x}^1,\overline{x}^2:x \in X\}$ as the set of families; we set the requirement of all families in $\FA_X:=\{\fa_x:x \in X\}$ as two, and the requirement of all remaining families as one. 
We introduce two places, $\loc_{x}$ and~$\loc_{\overline{x}}$, for each variable $x \in X$ 
and a place~$\loc_C$ for each clause~$C \in \mathcal{C}$. 
We set the upper quota of every place as two.
Note that $\maxreq=\maxcap=2$, as promised.

We next define the set of acceptable places for each family. 
First, for each literal~$\ell$ (that is, for each $\ell \in \{x,\overline{x}:x \in X\}$), the family $\ell^h$ for some $h \in [2]$ corresponds to the $h$-th occurrence of the literal~$\ell$, and hence finds two places acceptable:	$\loc_\ell$ and $\loc_{C(\ell,h)}$ where $C(\ell,h)$ is the clause containing the $h$-th occurrence of literal~$\ell$. 
Second, for each variable~$x \in X$, the family~$\fa_x$ finds $\loc_x$ and $\loc_{\overline{x}}$ acceptable. 
We set \dichotomous\ preferences for all families, so each family is indifferent between the two places they find acceptable. 

\begin{claim}
\label{clm:3sat-equivalence}
$I$ admits a feasible, acceptable and complete assignment 
if and only if $\varphi$ is satisfiable. 
\end{claim}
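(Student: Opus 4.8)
The plan is to establish both directions via the natural correspondence between truth assignments of $\varphi$ and the placements of the ``variable families'' $\fa_x$. The crucial structural fact is that, since $\fa_x$ has requirement~$2$ and finds only $\loc_x$ and $\loc_{\overline{x}}$ acceptable, in any feasible and complete assignment $\asg$ the family $\fa_x$ occupies exactly one of these two places and fills it to its upper quota~$2$; the other place is then free to host the (at most two) occurrence families that point to it. Accordingly, from $\asg$ I would read off the truth assignment $\alpha$ that declares a literal $\ell$ \emph{true} exactly when $\fa_x$ is \emph{not} placed on $\loc_\ell$ (so that $\loc_\ell$ is ``open''); this is well-defined and consistent because $\fa_x$ sits on precisely one of $\loc_x, \loc_{\overline{x}}$, so exactly one of $x, \overline{x}$ is declared true.

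For the direction ``$\varphi$ satisfiable $\Rightarrow$ good assignment exists'', given a satisfying assignment $\alpha$ I would place $\fa_x$ on the place of the literal of $x$ falsified by $\alpha$; route both occurrence families of each true literal $\ell$ to its now-empty home place $\loc_\ell$ (this fits exactly, since $\ell$ occurs twice and each occurrence family has requirement~$1$, matching the upper quota~$2$); and route the two occurrence families of each false literal to their respective clause places. The only nontrivial check is at a clause place $\loc_C$: it receives precisely the occurrence families of literals that are false in $C$, and since $\alpha$ satisfies $C$ at least one literal of $C$ is true, so $\loc_C$ gets at most two requirement-$1$ families, within its upper quota. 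Feasibility, completeness and acceptability are then immediate from the construction.

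For the converse, let $\asg$ be a feasible, acceptable and complete assignment and define $\alpha$ as above. Fix any clause $C$ with its three distinct literals and consider the three occurrence families associated with $C$; each has requirement~$1$ while $\loc_C$ has upper quota~$2$, so at least one of them, say $\ell^h$, must instead be assigned to its only alternative $\loc_\ell$. Feasibility at $\loc_\ell$ now forbids $\fa_x$ (where $\ell \in \{x,\overline{x}\}$) from also being placed there, so $\fa_x$ is on the place of $\overline{\ell}$; hence $\loc_\ell$ is open and $\ell$ is true under $\alpha$, so $C$ is satisfied. As $C$ was arbitrary, $\alpha$ satisfies $\varphi$. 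The arguments are mostly capacity bookkeeping; the point to be careful about is keeping the direction of the encoding straight ($\fa_x$ on $\loc_{\overline{\ell}}$ $\leftrightarrow$ $\ell$ true) and verifying in the forward direction that sending every true-literal occurrence home fills, but never overfills, the home places --- which works out exactly because each literal occurs twice and each home place has capacity~$2$. I do not anticipate a genuine obstacle beyond this.
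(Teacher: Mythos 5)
Your proof is correct and follows essentially the same route as the paper's: the same encoding ($\ell$ is false exactly when $\fa_x$ occupies $\loc_\ell$), the same forward construction, and the same capacity argument at $\loc_C$ in the converse, which you phrase directly where the paper argues by contradiction. No gaps.
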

\begin{claimproof}
Suppose first that $\varphi$ admits a satisfying truth assignment. 
We define an assignment~$\asg$ for~$I$ as follows.
For each variable~$x \in X$ set to \true, 
the family~$\fa_x$ is assigned to the place~$\loc_{\overline{x}}$, 
the two families $\overline{x}^1$ and~$\overline{x}^2$ that correspond to false literals are assigned to the places~$\loc_{C(\overline{x},1)}$ and~$\loc_{C(\overline{x},2)}$, respectively, 
while the two families $x^1$ and~$x^2$ that correspond to true literals are assigned to~$\loc_x$.
Similarly, 
For each variable~$y \in X$ set to \false, 
the family~$\fa_y$ is assigned to the place~$\loc_y$, 
the two families $y^1$ and~$y^2$ that correspond to false literals are assigned to the places~$\loc_{C(y,1)}$ and~$\loc_{C(y,2)}$, respectively, 
while the two families $\overline{y}^1$ and~$\overline{y}^2$ that correspond to true literals are assigned to~$\loc_{\overline{y}}$.
Observe that $\asg$ never assigns families with a total requirement more than two to a place,
because each clause contains at most two literals that evaluate to \false, and thus each place~$\loc_C$, $C \in \mathcal{C}$ accommodates at most two families under~$\asg$, both with requirement one.
Therefore, $\asg$ is feasible, and it is straightforward to verify that it is complete and acceptable as well. 

Suppose now that $\asg$ is an assignment for~$I$ that is feasible, acceptable, and complete. 
We set each variable~$x \in X$ to \true\ if and only if $\asg(\fa_x)=\loc_{\overline{x}}$.
In other words, we set a literal~$\ell$ to \false\ exactly if the place~$\loc_\ell$ is occupied by a family in~$\FA_X$. To show that each clause is satisfied, let us assume for the sake of contradiction that we set all three literals in some clause $C$ to \false. This means that for each literal~$\ell$ appearing in~$C$, 
the place~$\loc_\ell$ accommodates a family in~$\FA_X$. Since each such family has requirement two, which equals the upper quota of~$\loc_\ell$, we get that $\asg$ must assign the families 
$\ell^1$ and $\ell^2$ elsewhere due to its feasibility. Since $\asg$ respects acceptability and is complete, it must assign $\ell^1$ and~$\ell^2$ to the clauses $C(\ell,1)$ and $C(\ell,2)$, respectively. 
In particular, the three families that correspond to the literals appearing in~$C$ (that is, the families $\ell_1^{h_1},\ell_2^{h_2},$ and $\ell_3^{h_3}$ for which $C(\ell_j,h_j)=C$ for $j \in [3]$) 
can only be assigned to~$\loc_C$, contradicting the feasibility of~$\asg$.
This shows that the constructed truth assignment satisfies~$\varphi$, and hence, the claim holds.
\end{claimproof}
The result for \pareto-\RR\ follows from \cref{clm:3sat-equivalence} due to \cref{obs:Pareto-vs-completeness}, and from that, the \NPh{}ness for \maxutil-\RR\ follows from \cref{rem:maxutil-vs-pareto}.
}

To tackle the computational intractability of \cref{thm:satwithties}, we focus on the parameter~$m+\maxreq$ and propose an FPT algorithm with this parameterization in \cref{thm:onetboundedrmax+m}.

To prove \cref{thm:onetboundedrmax+m},
we are going to present an algorithm for \maxutil-\RR\ that constructs a feasible assignment with maximum utility, or concludes that no feasible assignment exists.
By \cref{rem:maxutil-vs-pareto}, such an algorithm can be used to solve the \pareto-\RR\ problem as well.
Let $I$ denote our input instance of \maxutil-\RR.

We use a two-phase dynamic programming approach based on the following key idea:
once we have obtained an optimal assignment $\asg$ for a partial instance $\mathcal{J}$, then a small modifications to this partial instance results in an instance~$\mathcal{J}'$ that admits an optimal assignment~$\asg'$ that is ``close'' to~$\asg$. By guessing  how~$\asg'$ differs from~$\asg$, we can compute $\asg'$ efficiently. Let us give a high-level view of our algorithm.

 In the first phase, we disregard lower quotas, and starting from an instance with only a single family, we add families one by one.  
For each $i \in [n]$, let $\FA_i=\{\fa_1,\dots,\fa_i\}$ denote the set of the first~$i$ families, and $I_i$ the instance obtained by restricting~$I$ to~$F_i$ and setting all lower quotas as zero. Starting from a maximum-utility feasible assignment~$\asg_1$ for~$I_1$, we construct a maximum-utility feasible assignment~$\asg_i$ for $i=2,\dots,n$ by slightly modifying the assignment~$\asg_{i-1}$.

In the second phase, starting from the instance~$\hat I_0=I_n$ without lower quotas, we define a sequence $\hat I_1,\dots,\hat I_{\sumcapmins}$ of instances where each instance is obtained from the previous one by raising the lower quota of a single place by one in an arbitrary way so long as 
the lower quotas for~$I$ are not exceeded; notice that this implies $I=\hat I_{\sumcapmins}$ where $\sumcapmins=\sum_{\loc_j \in \LOC} \caplmin_j$. 
Then starting from~${\hat\asg_0:=\asg_n}$ we compute a maximum-utility feasible assignment~$\hat\asg_q$ for $q=1,\dots,\sumcapmins$ from the assignment~$\hat\asg_{q-1}$ by applying small modifications.

\begin{theorem}
\label{thm:onetboundedrmax+m}
\maxutil- and \pareto-\RR\ for $ \noser = 1$ are \FPT\ w.r.t.~$m+\maxreq$. 
\end{theorem}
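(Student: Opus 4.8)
The plan is to implement the two-phase dynamic programming scheme sketched just above the statement, using \cref{rem:maxutil-vs-pareto} to reduce \pareto-\RR\ to \maxutil-\RR, so it suffices to compute a maximum-utility feasible assignment (or report infeasibility). Throughout I work with $\noser=1$, so each family $\fa_i$ has a single requirement $\reqf_i\in[\maxreq]$ and each place $\loc_j$ has an upper quota $\capl_j$ and lower quota $\caplmin_j$. The guiding principle is: if $\asg$ is a maximum-utility feasible assignment for a ``partial'' instance $\mathcal J$, and $\mathcal J'$ is obtained from $\mathcal J$ by a single small change (adding one family, or raising one lower quota by one), then there is a maximum-utility feasible assignment $\asg'$ for $\mathcal J'$ that differs from $\asg$ only on a set of families whose total requirement, on each place, is bounded by a function of $m$ and $\maxreq$. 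The key structural fact enabling this is \cref{obs:superblock} together with the notion of \superblock: whenever the symmetric difference between $\asg$ and $\asg'$ restricted to a single place is large, it must contain a \superblock\ (a set of equal-requirement families totalling exactly $\blocksize=\operatorname{lcm}(1,\dots,\maxreq)$), and such a block can be ``cancelled'' against a matching block elsewhere in the difference without changing feasibility — shrinking the difference while only improving or preserving utility. Iterating this exchange argument bounds the size of a minimal witnessing difference.

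\textbf{Phase 1 (building up the family set, no lower quotas).} For $i=1,\dots,n$ let $\FA_i=\{\fa_1,\dots,\fa_i\}$ and let $I_i$ be $I$ restricted to $\FA_i$ with all lower quotas set to $0$. I would maintain, as the DP state after step $i$, a maximum-utility feasible assignment $\asg_i$ for $I_i$ together with its load vector $(\satur(\loc_1,\asg_i),\dots,\satur(\loc_m,\asg_i))$. To go from $\asg_{i-1}$ to $\asg_i$: the new family $\fa_i$ is either left unassigned or placed on some $\loc_j$; in the latter case some families already present may need to be moved. I claim there is an optimal $\asg_i$ obtained from $\asg_{i-1}$ by choosing $\asg_i(\fa_i)\in\LOC\cup\{\bot\}$ and then redistributing a bounded-total-requirement set of families among the places. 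Formally: let $D$ be the set of families on which $\asg_i$ and $\asg_{i-1}$ disagree, chosen so that $|D|$ (equivalently $\sum_{\fa\in D}\reqf_\fa$) is minimum among all optimal $\asg_i$; if on some place the ``incoming'' part of $D$ had total requirement exceeding $\maxreq(\blocksize-1)$ it would contain a \superblock\ by \cref{obs:superblock}, and since totals are preserved on every place except through $\fa_i$, an equal-requirement \superblock\ would also appear in the ``outgoing'' part on some place, and cancelling the two yields another optimal assignment with smaller $D$ — contradiction. Hence $D$ has total requirement at most $O(m\cdot\maxreq\cdot\blocksize)$, i.e.\ $|D|$ is bounded by a function of $m+\maxreq$. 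The algorithm therefore guesses $\asg_i(\fa_i)$, guesses the multiset of $(\text{family-requirement},\text{source place},\text{target place})$ moves within this bounded budget, and for each guess checks feasibility of $I_i$ and records the best total utility; since $\asg_{i-1}$ was recorded with its load vector, each check is $\mathrm{poly}(m)$ time, and the number of guesses is $f(m,\maxreq)$. Here color-coding (as mentioned in the paper) is used to actually pick which concrete families of each requirement realize the guessed moves in a way that is consistent with also being moved in later steps; alternatively one can argue a greedy/exchange normal form and avoid it, but color-coding gives a clean $f(m,\maxreq)\cdot n^{O(1)}$ bound. At the end of Phase 1 we have $\asg_n=:\hat\asg_0$, a maximum-utility feasible assignment for $\hat I_0:=I_n$ (which equals $I$ but with zero lower quotas).

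\textbf{Phase 2 (turning on the lower quotas).} Let $\sumcapmins=\sum_{\loc_j\in\LOC}\caplmin_j$. Fix any sequence of instances $\hat I_0,\hat I_1,\dots,\hat I_{\sumcapmins}=I$ where $\hat I_q$ is obtained from $\hat I_{q-1}$ by increasing the lower quota of one place by one, never exceeding the true lower quota of $I$. Given a maximum-utility feasible assignment $\hat\asg_{q-1}$ for $\hat I_{q-1}$, I compute $\hat\asg_q$ for $\hat I_q$ by the same bounded-difference argument: if $\hat I_q$ admits a feasible assignment at all, it admits one whose symmetric difference with $\hat\asg_{q-1}$ has total requirement $O(m\cdot\maxreq\cdot\blocksize)$ on each place — again because a large difference on any place contains a \superblock, and since exactly one place's lower quota rose by one (changing required totals by at most $\maxreq$ worth of ``slack''), any such block can be matched and cancelled against the difference on another place without destroying feasibility, while not decreasing utility if we started from an optimum. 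So the algorithm enumerates $f(m,\maxreq)$ bounded moves applied to $\hat\asg_{q-1}$, checks feasibility against $\hat I_q$ in $\mathrm{poly}(m)$ time, and keeps the best; if no guess is feasible, it reports that $I$ has no feasible assignment (correctness of this ``no'' follows because an optimal feasible $\hat\asg_q$, if it existed, would be reachable by a bounded difference). After step $\sumcapmins$ we have output a maximum-utility feasible assignment for $I=\hat I_{\sumcapmins}$. Total running time is $n^{O(1)}+\sumcapmins\cdot f(m,\maxreq)\cdot\mathrm{poly}(m) = f(m,\maxreq)\cdot\mathrm{poly}(|I|)$, where $\sumcapmins\le n\cdot\maxreq$ is polynomial in the (unary) input, so the algorithm is FPT w.r.t.\ $m+\maxreq$. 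Finally, plugging this \maxutil-\RR\ algorithm into the reduction of \cref{rem:maxutil-vs-pareto} solves \pareto-\RR\ in the same time bound.

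\textbf{Main obstacle.} The delicate point is the bounded-difference claim in each phase: one must show that a \emph{minimal} optimal modification never needs to touch more than $O(m\cdot\maxreq\cdot\blocksize)$ units of requirement per place, and the \superblock-cancellation exchange has to be carried out carefully so that it simultaneously (i) preserves all upper-quota constraints, (ii) preserves (in Phase 2) the already-satisfied lower-quota constraints, and (iii) does not decrease total utility — the last being the reason the cancellation must move a whole \superblock\ of \emph{equal-requirement} families between exactly two places rather than an arbitrary set. Making this exchange argument fully rigorous, and organizing the guessing (via color-coding) so that the guessed "abstract" moves can always be instantiated by concrete families that are still available at that stage, is where the real work of the proof lies.
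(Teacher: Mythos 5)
Your overall architecture matches the paper's: reduce \pareto-\RR\ to \maxutil-\RR, run the two-phase dynamic program (add families one by one with lower quotas zeroed, then raise lower quotas one by one), prove a bounded-difference lemma via \superblock{}s, and realize each step by guessing the multiset of (requirement, source, target) moves and instantiating them with color-coding. However, the bounded-difference lemma --- which is the entire substance of the proof --- is argued incorrectly. You claim that if the incoming part of the difference at some place $\loc_j$ is large, it contains a \superblock, the outgoing part at $\loc_j$ then also contains a \superblock, and ``cancelling the two yields another optimal assignment with smaller $D$.'' This cancellation does \emph{not} preserve feasibility: the incoming \superblock\ at $\loc_j$ originates at some other place $\loc_a$, and undoing its move increases the load of $\loc_a$ under the modified $\asg_{i+1}$, which may violate $\capl_a$; symmetrically, retaining the outgoing \superblock\ at $\loc_j$ decreases the load at its destination $\loc_b$, which in Phase~2 may violate the lower quota of $\loc_b$. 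The two effects occur at \emph{different} places and do not cancel. Only the load at $\loc_j$ itself is preserved.

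The paper's proof has to work much harder precisely because of this. It builds an auxiliary digraph $G_\blocksize$ on $\LOC$ whose arcs are the place pairs with large pairwise flow, shows $G_\blocksize$ is acyclic (cancelling \superblock{}s along a directed \emph{cycle} does preserve every load, and the resulting utility comparison contradicts optimality of either $\asg_i$ or $\asg_{i+1}$), and then, for a place with excessive in-flow, constructs a chain forward to a place with free capacity under $\asg_i$ and a back-chain from a place with free capacity under $\asg_{i+1}$, concatenates them into a path, and cancels one \superblock\ per arc along the whole path; feasibility is saved exactly because the two path endpoints have the required slack. The chain construction loses a factor of $m-1$ at each step, which is why the per-place bound is $\blockbig=m^m\cdot\blocksize\cdot\maxreq$ rather than your claimed $O(m\cdot\maxreq\cdot\blocksize)$; your tighter bound is not established by any argument you give. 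Since you explicitly defer ``making this exchange argument fully rigorous'' to future work while the sketch you do give of it fails, the proposal has a genuine gap at its core. (A minor further point: color-coding is used within a single DP step to identify, for each family whose assignment changes, the place it moves to --- not, as you write, to keep choices ``consistent with later steps.'')
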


\begin{proof}
We may assume w.l.o.g. that there exists a feasible assignment for~$I_i$ with maximum utility that is complete. Indeed, to ensure this for each $i \in [n]$, we can simply create a dummy place whose upper quota is $\sum_{\fa_h \in \FA} \reqf_h$ and towards which all families have zero utility; this also shows that we can assume $m \geq 2$.
For brevity's sake, we say that an assignment is \myemph{optimal} if it is feasible and complete, and has maximum utility among all feasible assignments.

Let $\blockbig=m^m \cdot \blocksize \cdot \maxreq$.
Notice that since~$\blocksize$ is a function of~$\maxreq$, we know that $\blockbig$ is a function of~$m$ and~$\maxreq$ only. 

The first phase of our algorithm relies on \cref{clm:DPbyfamilies-bounded-diff}, which proves that given an optimal assignment~$\asg_i$
for~$I_i$ for some ${i \in [n-1]}$, we can obtain an optimal assignment for~$I_{i+1}$ whose distance from~$\asg_i$ is bounded by a function of~$m$ and $\maxreq$.
We measure the distance of two assignments~$\asg$ and~$\asg'$ as the total requirement of all families that are assigned to different places by~$\asg$ and~$\asg'$, that is,
\begin{linenomath*}
\begin{equation*}
\Delta(\asg,\asg')=\sum \left\{ \reqf_h\colon \fa_h \in \FA_{\cap}, \asg(\fa_h)\neq \asg'(\fa_h)\right\},
\end{equation*}
\end{linenomath*} 
where $\FA_\cap$ is the intersection of the domains of~$\asg$ and~$\asg'$.
\begin{restatable}[\appsymb]{claim}{clmDPbyfamiliesboundeddiff}
\label{clm:DPbyfamilies-bounded-diff}
Suppose that $i \in [n-1]$, and let $\asg_i:\FA_i \rightarrow \LOC$ denote an optimal allocation for~$I_i$. Then there exists an optimal allocation~$\asg_{i+1}:\FA_{i+1} \rightarrow \LOC$ for~$I_{i+1}$ such that
$
\Delta(\asg_i,\asg_{i+1})
\leq  m \cdot \blockbig.
$
\end{restatable}

\appendixproofwithstatement{clm:DPbyfamilies-bounded-diff}{\clmDPbyfamiliesboundeddiff*}{
It will be useful for us to define an integer~$\blockbnd_j$ recursively for each $j \in [m]$ by setting 
\begin{linenomath*}
\begin{equation}
\label{eqn:def-blockbnd}
\blockbnd_j=\left\{\begin{array}{ll}
\maxreq (\blocksize-1) & \textrm{ for } j=1 \\
\blockbnd_{j-1} \cdot (m-1) + \blocksize+\maxreq 
& \textrm{ for } j=2,\dots,m.
\end{array}
\right. 
\end{equation}
\end{linenomath*}
It is straightforward to verify by simple calculus that $\blockbnd_m \leq \blockbig$ (relying also on our assumption that $m \geq 2$).

Let $\asg_{i+1}$ be an optimal allocation for $I_{i+1}$ that minimizes $\Delta(\asg_i,\asg_{i+1})$. 
Consider the 
the movement of families when
the allocation changes from~$\asg_i$ to~$\asg_{i+1}$; we will refer to this as \emph{the relocation}.
For two distinct places~$\loc_j$ and~$\loc_{j'}$, let $d(\loc_j,\loc_{j'})$ be the total requirement of all families moving from~$\loc_j$ into~$\loc_{j'}$ under the relocation.
Define also $d^{\mvin}(\loc_j):=
\sum_{\loc_{j'} \in \LOC \setminus \{\loc_j\}} d(\loc_{j'},\loc_j)$ as  the total requirement of families in~$\FA_i$ moving into~$\loc_j$ under the relocation.
As both~$\asg_i$ and~$\asg_{i+1}$ are complete,
we know $\Delta(\asg_i,\asg_{i+1}) 
= 
 \sum_{\loc_j \in \LOC} d^{\mvin}(\loc_j) 
$. 

We will prove that $d^{\mvin}(\loc_j) \leq \blockbnd_m$ for each $\loc_j \in \LOC$. From this
$\Delta(\asg_i,\asg_{i+1}) =  \sum_{\loc_j \in \LOC} d^{\mvin}(\loc_j) \leq m \blockbnd_m \leq m \blockbig$ follows, proving the claim. 
Assume for the sake of contradiction that there exists some~$\loc_{j_0} \in \LOC$ with $d^{\mvin}(\loc_{j_0})> \blockbnd_m$.

\mypara{An acyclic auxiliary graph $G_\blocksize$.}
We proceed with defining an auxiliary digraph~$G_\blocksize$ defined over~$\LOC$ in which $(\loc_j,\loc_{j'})$ is an arc if and only if $d(\loc_j,\loc_{j'}) > \blockbnd_1$; recall that $\blockbnd_1=\maxreq (\blocksize-1)$. 
Notice that by \cref{obs:superblock}, for each arc~$e$ in~$G_\blocksize$ we know that the set of families moving from the tail of~$e$ to the head of~$e$ under the relocation contains a \superblock; let $B_e$ be such a \superblock\ corresponding to~$e$.

We claim that $G_\blocksize$ is acyclic. Suppose for the sake of contradiction that a set~$C$ of arcs forms a directed cycle  in~$G_\blocksize$. 
Consider the assignment $\asg'_{i+1}$ for~$I_{i+1}$ obtained from~$\asg_{i+1}$ by moving all families contained in 
$\FA_C:=\cup_{e \in C} B_e$ to the place they were assigned under~$\asg_i$, i.e., ``backward'' along the cycle~$C$.
Since all \superblock{}s have the same total requirement~$\blocksize$, the assignment $\asg'_{i+1}$ is feasible for~$I_{i+1}$. 
Moreover, since $\Delta(\asg_i,\asg'_{i+1}) < \Delta(\asg_i,\asg_{i+1})$, we know that $\util(\asg'_{i+1})<\util(\asg_{i+1})$, due to our choice of~$\asg_{i+1}$. This means
\begin{linenomath*} 
\begin{equation}
\label{eq:moving-in-cycle}
\sum_{\fa_h \in \FA_C,\loc_j=\asg_i(\fa_h)} \profit_h[j] < 
\sum_{\fa_h \in \FA_C,\loc_j = \asg_{i+1}(\fa_h)} \profit_h[j].
\end{equation}
\end{linenomath*} 

Let us now construct an assignment~$\asg'_i$ from $\asg_i$ by moving all families contained in~$\FA_C$ to the place they are assigned under~$\asg_{i+1}$, i.e., ``forward'' along the cycle~$C$. 
Again, $\asg'_i$ is feasible for~$I_i$. 
Moreover, by (\ref{eq:moving-in-cycle}) we know that $\util(\asg'_i)>\util(\asg_i)$, which contradicts the optimality of~$\asg_i$. We obtain that $G_\blocksize$ is indeed acyclic.

\mypara{Finding a path~$\chainpath$ in $G_\blocksize$.}
Let $\LOC_{\dashv}$ denote the set of places with \emph{free capacity} at least~$\blocksize$ under~$\asg_i$, i.e. $\LOC_{\dashv}=\{\loc_j: \capl_j \geq \blocksize + \satur(\loc_j,\asg_i)\}$.
We define $\LOC_{\vdash}$ analogously, to contain places with free capacity at least $\blocksize$ under~$\asg_{i+1}$, so $\LOC_{\vdash}=\{\loc_j: \capl_j \geq \blocksize + \satur(\loc_j,\asg_{i+1})\}$. We are going to construct a path  from~$\LOC_{\vdash}$ to~$\LOC_{\dashv}$ in~$G_\blocksize$.

Define a \emph{chain} from~$\loc_{j_0}$ to~$\LOC_{\dashv}$ as a sequence $\loc_{j_0},\loc_{j_1},\dots,\loc_{j_{\altsubm}}$ such that (i)
$\loc_{j_{\altsubm}} \in \LOC_{\dashv}$, and 
(ii) $d(\loc_{j_{h-1}},\loc_{j_h})> \blockbnd_{m-h}$ for each integer~$h$ with $0<h \leq \,\, \altm$. 
We build such a chain by induction. We start from the sequence containing only~$\loc_{j_0}$, and maintain condition~(ii) as an invariant; note that (ii) holds trivially for the sequence $\loc_{j_0}$. 
So suppose that we have already built a sequence 
$\loc_{j_0},\loc_{j_1},\dots,\loc_{j_h}$ for which condition~(ii) holds.
If $\loc_{j_h} \in \LOC_{\dashv}$, then we are done, as the sequence fulfills condition~(i) as well, and thus is a chain. 
If $\loc_{j_h} \notin \LOC_{\dashv}$ then,
by the definition of~$\LOC_{\dashv}$, the free capacity of~$\loc_{j_h}$ under~$\asg_i$ is less than~$\blocksize$. 
This implies that the families moving out from~$\loc_{j_h}$ under the relocation must have total requirement greater than $d^{\mvin}(\loc_{j_h})-\blocksize$ (as otherwise the families moving into~$\loc_{j_h}$ during the relocation would not fit). 
Since these families must move into some place in~$\LOC \setminus \{\loc_{j_h}\}$, there must exist some~$\loc_{j_{h+1}} \in \LOC \setminus \{\loc_{j_h}\}$ for which 
\begin{linenomath*} 
\begin{equation*}
d(\loc_{j_h},\loc_{j_{h+1}}) \geq \frac{d^{\mvin}(\loc_{j_h})-\blocksize}{m-1}
>\frac{\blockbnd_{m-h}-\blocksize}{m-1}
>
\blockbnd_{m-h-1}
\end{equation*}
\end{linenomath*} 
where the second inequality follows from condition~(ii) if $h>0$, and from the assumption that $d^{\mvin}(\loc_{j_0}) > \blockbnd_m$ in the case $h=0$;
the third equality follows from the definition of~$\blockbnd_{m-h}$ which satisfies
\begin{linenomath*} 
\begin{equation*}
\frac{\blockbnd_{m-h}-\blocksize-\maxreq}{m-1} =
\blockbnd_{m-h-1}.
\end{equation*}
\end{linenomath*} 
Hence, we can pick $\loc_{j_{h+1}}$ as the next place in the chain, since the sequence $\loc_{j_0},\loc_{j_1},\dots,\loc_{j_h},\loc_{j_{h+1}}$ fulfills condition~(ii). Since $G_\blocksize$ is acyclic and $|\LOC|$ is finite, the existence of a chain from~$\loc_{j_0}$ to~$\LOC_{\dashv}$ follows.

Next, we similarly build a \emph{back-chain} from~$\LOC_{\vdash}$ to~$\loc_{j'_0}=\loc_{j_0}$, which is defined as a sequence $\loc_{j'_{\altsubmprime}},\dots,\loc_{j'_1},\loc_{j'_0}$ such that (i')
$\loc_{j'_{\altsubmprime}} \in \LOC_{\vdash}$, and 
(ii') $d(\loc_{j'_h},\loc_{j'_{h-1}})>\blockbnd_{m-h}$ for each integer~$h$ with $0<h \leq {\altm}'$. 
Because $d^{\mvin}(\loc_{j_0}) > \blockbnd_m$, we know that there exists some place~$\loc_{j'_1}$ for which $d(\loc_{j'_1},\loc_{j_0}) \geq d^{\mvin}(\loc_{j_0})/(m-1)>\blockbnd_m/(m-1)> \blockbnd_{m-1}$. We build our back-chain starting from the sequence $\loc_{j'_1},\loc_{j_0}$ by induction, using the same technique we applied to build our chain in the previous paragraph. 

Namely, suppose that we already have a sequence 
$\loc_{j'_h},\dots,\loc_{j'_1},\loc_{j_0}$ for which condition~(ii') holds.
If $\loc_{j'_h} \in \LOC_{\vdash}$, then we are done, as the sequence fulfills condition~(i') as well, and thus is a back-chain. 
If $\loc_{j'_h} \notin \LOC_{\vdash}$ then, by the definition of~$\LOC_{\vdash}$, the free capacity of~$\loc_{j'_h}$ under~$\asg_{i+1}$ is less than~$\blocksize$. 
However, $d(\loc_{j'_h},\loc_{j'_{h-1}})>\blockbnd_{m-h}$ by condition~(ii'); 
therefore, taking into account that $\fa_{i+1}$ might be assigned to~$\loc_{j'_h}$ by~$\asg_{i+1}$, we obtain that 
the total requirement of families moving into~$\loc_{j'_h}$ under the relocation (recall that this excludes~$\fa_{i+1}$) is 
$d^{\mvin}(\loc_{j'_h}) > d(\loc_{j'_h},\loc_{j'_{h-1}})-\blocksize -\reqf_{i+1}> \blockbnd_{m-h}-\blocksize-\maxreq$.
Since these families must have come from some place in~$\LOC \setminus \{\loc_{j'_h}\}$, we know that there exists some place~$\loc_{j'_{h+1}} \in \LOC$ for which 
\begin{linenomath*} 
\begin{equation*}
d(\loc_{j'_{h+1}},\loc_{j'_h}) \geq \frac{d^{\mvin}(\loc_{j'_h})}{m-1}>\frac{\blockbnd_{m-h}-\blocksize-\maxreq}{m-1}=\blockbnd_{m-h-1}.
\end{equation*}
\end{linenomath*} 
Hence, we can pick $\loc_{j'_{h+1}}$ as the next place in the back-chain, because the sequence $\loc_{j'_{h+1}},\loc_{j'_h},\dots,\loc_{j'_1},\loc_{j_0}$ fulfills condition~(ii'). 
Since $G_\blocksize$ is acyclic and $|\LOC|$ is finite, the existence of a back-chain from~$\LOC_{\vdash}$ to~$\loc_{j_0}$ follows.

Consider the sequence~$\chainpath$ of places obtained by concatenating our back-chain from~$\LOC_{\vdash}$ to~$\loc_{j_0}$ with the chain from~$\loc_{j_0}$ to~$\LOC_{\dashv}$. Observe that by conditions~(ii) and~(ii'), there is an arc in~$G_\blocksize$ from each place in~$\chainpath$ to the next place in~$\chainpath$. Since $G_\blocksize$ is acyclic, this means that~$\chainpath$ forms a path in~$G_\blocksize$.

\mypara{The contradiction implied by our path~$\chainpath$.}
It remains to show that the existence of our path~$\chainpath$ from~$\LOC_{\vdash}$ to~$\LOC_{\dashv}$ in~$G_\blocksize$ leads to a contradiction. Let $E(\chainpath)$ denote the set of arcs on this path, and define $\FA_{\chainpath}=\bigcup_{e \in E(\chainpath)} B_e$, that is, $\FA_{\chainpath}$ is the union of \superblock{}s corresponding to the arcs on~$\chainpath$. 
Define the assignment $\asg'_{i+1}$ for~$I_{i+1}$ obtained from~$\asg_{i+1}$ by moving all families contained in 
$\FA_{\chainpath}$ to the place they were assigned under~$\asg_i$, i.e., ``backward'' along the path~$\chainpath$.
Since all \superblock{}s have the same total requirement~$\blocksize$ and, in addition, the first place on~$\chainpath$ belongs to~$\LOC_{\vdash}$ and thus has free capacity at least~$\blocksize$ under~$\asg_{i+1}$ that can be used to accommodate the superblock corresponding to the first arc of~$\chainpath$, we get that $\asg'_{i+1}$ is feasible for~$I_{i+1}$. 
Moreover, since $\Delta(\asg_i,\asg'_{i+1}) < \Delta(\asg_i,\asg_{i+1})$, we know that $\util(\asg'_{i+1})<\util(\asg_{i+1})$ due to our choice of~$\asg_{i+1}$, which yields
\begin{linenomath*}
\begin{equation}
\label{eq:moving-along-path}
\sum_{\fa_h \in \FA_{\chainpath},\loc_j = \asg_i(\fa_h)} \profit_h[j] < 
\sum_{\fa_h \in \FA_{\chainpath},\loc_j = \asg_{i+1}(\fa_h)} \profit_h[j].
\end{equation}
\end{linenomath*}

Let us now construct an assignment~$\asg'_i$ from $\asg_i$ by moving all families contained in~$\FA_{\chainpath}$ to the place they are assigned under~$\asg_{i+1}$, i.e., ``forward'' along the path~$\chainpath$. 
Again, $\asg'_i$ is feasible for~$I_i$, because the last place on~$\chainpath$ belongs to~$\LOC_{\dashv}$. 
Due to~(\ref{eq:moving-along-path}),
$\util(\asg'_i)>\util(\asg_i)$, which contradicts to the optimality of~$\asg$. This contradiction proves the claim.
}

The second phase of our algorithm relies \cref{clm:DPbyfamilies-bounded-diff-2} which is an analog of \cref{clm:DPbyfamilies-bounded-diff} with a quite similar proof.
\ifshort
\else The proofs of \cref{clm:DPbyfamilies-bounded-diff,clm:DPbyfamilies-bounded-diff-2} can be found in Appendices~\ref{proof:clm:DPbyfamilies-bounded-diff} and~\ref{proof:clm:DPbyfamilies-bounded-diff-2}.
Recall that $I_i$ and $\hat I_q$ are defined above.
\fi

\begin{restatable}[\appsymb]{claim}{clmDPbyfamiliesboundeddifftwo}
\label{clm:DPbyfamilies-bounded-diff-2}
Suppose that $q \in [\sumcapmins]$, and let $\hat\asg_q:\FA \rightarrow \LOC$ denote an optimal allocation for~$\hat I_q$. Then there exists an optimal allocation~$\hat\asg_{q+1}:\FA \rightarrow \LOC$    for~$\hat I_{q+1}$ such that 
$
\Delta(\hat \asg_q,\hat\asg_{q+1})
 \leq m \cdot \blockbig.
$
\end{restatable}

\appendixproofwithstatement{clm:DPbyfamilies-bounded-diff-2}{\clmDPbyfamiliesboundeddifftwo*}{
The proof is similar to the proof of \cref{clm:DPbyfamilies-bounded-diff}.
Let $\caplmin_j$ and $\caplmin'_j$ denote the lower quotas given for some location~$\loc_j \in \LOC$ in~$\hat{I}_q$ and in~$\hat{I}_{q+1}$, respectively; then $\caplmin_j=\caplmin'_j$ holds for each place~$\loc_j$ but one, and 
$\sum_{j \in [m]} \caplmin'_j=
1+\sum_{j \in [m]} \caplmin_j$.
Let $\hat \asg_{q+1}$ be an optimal allocation for~$\hat I_{q+1}$ that minimizes $\Delta(\hat\asg_q,\hat\asg_{q+1})$. 
Consider the situation where the allocation changes from~$\hat\asg_q$ to~$\hat\asg_{q+1}$; we will refer to this as \emph{the relocation}.

For distinct places~$\loc_j$ and~$\loc_{j'}$, let $\hat d(\loc_j,\loc_{j'})$ be the total requirement of all families moving from~$\loc_j$ into~$\loc_{j'}$ under the relocation.
Define also the values $\hat d^{\mvin}(\loc_j):=\sum_{\loc_{j'} \in \LOC \setminus \{\loc_j\}} \hat d(\loc_{j'},\loc_j)$
and 
$\hat d^{\mvout}(\loc_j):=\sum_{\loc_{j'} \in \LOC \setminus \{\loc_j\}} \hat d(\loc_j,\loc_{j'})$
as the total requirement of families moving into~$\loc_j$ and out of~$\loc_j$, respectively, under the relocation.
As both~$\hat\asg_q$ and~$\hat\asg_{q+1}$ are complete assignments, we know that $\Delta(\hat\asg_q,\hat\asg_{q+1}) 
= \sum_{\loc_j \in \LOC} \hat d^{\mvin}(\loc_j) 
= \sum_{\loc_j \in \LOC} \hat d^{\mvout}(\loc_j) 
$. 

Recall the definition of values~$\blockbnd_h$ for $h \in [m]$ as defined by~\ref{eqn:def-blockbnd}.
We will prove that $\hat d^{\mvin}(\loc_j) \leq \blockbnd_m$ for each $\loc_j \in \LOC$. From this
$\Delta(\hat\asg_q,\hat\asg_{q+1}) = \sum_{\loc_j \in \LOC} \hat d^{\mvin}(\loc_j) \leq m \cdot \blockbnd_m \leq m \cdot \blockbig$ follows, implying the claim. 
Assume for the sake of contradiction that there exists some~$\loc_{j_0} \in \LOC$ with $\hat d^{\mvin}(\loc_{j_0})> \blockbnd_m$.

\mypara{An acyclic auxiliary graph $\hat G_\blocksize$.}
We proceed with defining an auxiliary digraph~$\hat G_\blocksize$ defined over~$\LOC$ in which $(\loc_j,\loc_{j'})$ is an arc exactly if $\hat d(\loc_j,\loc_{j'}) > \blockbnd_1$. 
Notice that by \cref{obs:superblock}, for each arc~$e$ in~$\hat G_\blocksize$ we know that the set of families moving from the tail of~$e$ to the head of~$e$ under the relocation contains a \superblock; let $B_e$ be such a \superblock\ corresponding to~$e$. 

It is straightforward to see that the same argument used in the proof of \cref{clm:DPbyfamilies-bounded-diff} for showing that $G_\blocksize$ is acyclic can be applied to prove that~$\hat G_\blocksize$ is acyclic. 
However, to find a path in~$\hat G_\blocksize$ that leads to a contradiction, we need a more careful, somewhat different argument.

\mypara{Finding a path~$\hat\chainpath$ in $G_\blocksize$.}
Let $\hat \LOC_\dashv$ contain those places~$\loc_j \in \LOC$ where 
\begin{itemize}[leftmargin=30pt]
\item[(a$_\dashv$)] $ 
\satur(\loc_j,\hat\asg_q) \leq \capl_j - \blocksize$, and
\item[(b$_\dashv$)] $
\satur(\loc_j,\hat\asg_{q+1}) \geq  \caplmin'_j+\blocksize$.
\end{itemize}
Analogously, let $\hat\LOC_{\vdash}$ contain those places~$\loc_j \in \LOC$ where
\begin{itemize}[leftmargin=30pt]
\item[(a$_\vdash$)] $
\satur(\loc_j,\hat\asg_{q+1}) \leq \capl_j - \blocksize$, and
\item[(b$_\vdash$)] $
\satur(\loc_j,\hat\asg_q) \geq  \caplmin_j+\blocksize$.
\end{itemize}

Define a \emph{chain} from~$\loc_{j_0}$ to~$\hat\LOC_{\dashv}$ as a sequence $\loc_{j_0},\loc_{j_1},\dots,\loc_{j_{\altsubm}}$ such that (i)
$\loc_{j_{\altsubm}} \in \hat\LOC_{\dashv}$, and 
(ii) $\hat d(\loc_{j_{h-1}},\loc_{j_h})> \blockbnd_{m-h}$ for each integer~$h$ with $0<h \leq \, \altm$. 
We build such a chain by induction. We start from the sequence containing only~$\loc_{j_0}$, and maintain condition~(ii) as an invariant; note that (ii) holds trivially for the sequence~$\loc_{j_0}$. 
So suppose that we have already built a sequence 
$\loc_{j_0},\loc_{j_1},\dots,\loc_{j_h}$ for which condition~(ii) holds.
If $\loc_{j_h} \in \hat\LOC_{\dashv}$, then we are done, as the sequence fulfills condition~(i) as well, and thus is a chain. 
If $\loc_{j_h} \notin \hat\LOC_{\dashv}$, then either condition~(a$_\dashv$) or condition~(b$_\dashv$) does not hold for~$\loc_{j_h}$.

First let us assume that $\loc_{j_h}$ does not satisfy condition~(a$_\dashv$). 
Then 
\begin{linenomath*}
\begin{align*}
\capl_{j_h} - \blocksize &<
\satur(\loc_j,\hat\asg_{q}) \\
&= 
\satur(\loc_j,\hat\asg_{q+1}) 
- \hat{d}^\mvin(\loc_{j_h})
+ \hat{d}^\mvout(\loc_{j_h}) \\
&\leq \capl_{j_h} - \hat{d}^\mvin(\loc_{j_h})
+ \hat{d}^\mvout(\loc_{j_h}).
\end{align*}
\end{linenomath*}
From this, we get that
\begin{linenomath*}
\begin{equation}
\hat{d}^{\mvout}(\loc_{j_h}) > \hat{d}^{\mvin}(\loc_{j_h})-\blocksize.
\end{equation} 
\end{linenomath*}
Assume now that $\loc_{j_h}$ violates condition~(b$_\dashv$). 
Then 
{\allowdisplaybreaks
\begin{linenomath*}
\begin{align*}
\caplmin_{j_h}+1 &\geq  \caplmin'_{j_h} > 
\satur(\loc_{j_h},\hat\asg_{q+1})
- \blocksize \\
&= 
\satur(\loc_{j_h},\hat\asg_{q})
+ \hat{d}^{\mvin}(\loc_{j_h}) 
- \hat{d}^{\mvout}(\loc_{j_h}) - \blocksize \\
&\geq \caplmin_{j_h} + \hat{d}^{\mvin}(\loc_{j_h}) 
- \hat{d}^{\mvout}(\loc_{j_h}) - \blocksize, 
 \end{align*}
 \end{linenomath*}
}
which implies 
\begin{linenomath*}
\begin{equation}
\label{eq:not-at-pathend}
\hat{d}^{\mvout}(\loc_{j_h}) > \hat{d}^{\mvin}(\loc_{j_h})-\blocksize-1.
\end{equation} 
\end{linenomath*}Hence, (\ref{eq:not-at-pathend}) holds in both cases. 

Since the families moving out from~$\loc_{j_h}$ must move into some place in~$\LOC \setminus \{\loc_{j_h}\}$, there must exist some~$\loc_{j_{h+1}} \in \LOC \setminus \{\loc_{j_h}\}$ for which 
\begin{linenomath*}
\begin{equation}
\hat d(\loc_{j_h},\loc_{j_{h+1}}) \geq \frac{\hat d^{\mvin}(\loc_{j_h})-\blocksize -1}{m-1}
>
\frac{\blockbnd_{m-h}-\blocksize-\maxreq}{m-1} =
\blockbnd_{m-h-1}
\end{equation} 
\end{linenomath*}
where the second inequality follows from condition~(ii) if $h>0$, and from the assumption that $d^{\mvin}(\loc_{j_0}) > \blockbnd_m$ in the case $h=0$. Hence, we can pick $\loc_{j_{h+1}}$ as the next place in the chain, as the sequence $\loc_{j_0},\loc_{j_1},\dots,\loc_{j_h},\loc_{j_{h+1}}$ fulfills condition~(ii). Since $G_\blocksize$ is acyclic and $|\LOC|$ is finite, the existence of a chain from~$\loc_{j_0}$ to~$\hat\LOC_{\dashv}$ follows.

	Next, we similarly build a \emph{back-chain} from~$\hat\LOC_{\vdash}$ to~$\loc_{j'_0}=\loc_{j_0}$, which is a sequence $\loc_{j'_{\altsubmprime}},\dots,\loc_{j'_1},\loc_{j'_0}$ such that (i')
$\loc_{j'_{\altsubmprime}} \in \hat\LOC_{\vdash}$, and 
(ii') $\hat d(\loc_{j'_h},\loc_{j'_{h-1}})>\blockbnd_{m-h}$ for each integer~$h$ with $0<h \leq {\altm}'$. 
Because $\hat d^{\mvin}(\loc_{j_0}) > \blockbnd_m$,  there must exist some place~$\loc_{j'_1} \in \LOC \setminus \{\loc_{j_0}\}$ for which $\hat d(\loc_{j'_1},\loc_{j_0}) \geq \hat d^{\mvin}(\loc_{j_0})/(m-1)>\blockbnd_m/(m-1)> \blockbnd_{m-1}$. We build our back-chain starting from the sequence $\loc_{j'_1},\loc_{j_0}$ by induction. 

Suppose that we already have a sequence 
$\loc_{j'_h},\dots,\loc_{j'_1},\loc_{j_0}$ for which condition~(ii') holds.
If $\loc_{j'_h} \in \hat\LOC_{\vdash}$, then we are done, as the sequence fulfills condition~(i') as well, and thus is a back-chain. 
If $\loc_{j'_h} \notin \hat\LOC_{\vdash}$, then 
either condition~(a$_\vdash$) or condition~(b$_\vdash$) fails for~$\loc_{j'_h}$.

First, assume that condition~(a$_\vdash$) fails for~$\loc_{j'_h}$. Then 
\begin{linenomath*}
\begin{align*}
\capl_{j'_h} - \blocksize &<
\satur(\loc_{j'_h},\hat\asg_{q+1}) \\
&= 
\satur(\loc_{j'_h},\hat\asg_{q}) 
+ \hat{d}^\mvin(\loc_{j'_h})
- \hat{d}^\mvout(\loc_{j'_h}) \\
&\leq \capl_{j'_h} + \hat{d}^\mvin(\loc_{j'_h})
- \hat{d}^\mvout(\loc_{j'_h}),
\end{align*}
\end{linenomath*}
which implies 
\begin{linenomath*}
\begin{equation}
\label{eq:not-at-pathstart}
\hat{d}^{\mvin}(\loc_{j'_h}) > \hat{d}^{\mvout}(\loc_{j'_h})-\blocksize.
\end{equation} 
\end{linenomath*}
Second, assume that condition~(b$_\vdash$) fails for~$\loc_{j'_h}$. Then
\begin{linenomath*}
\begin{align*}
\caplmin_{j'_h} &\leq \caplmin'_{j'_h} \leq \satur(\loc_{j'_h},\hat\asg_{q+1}) \\
&=\satur(\loc_{j'_h},\hat\asg_{q}) 
+ \hat{d}^\mvin(\loc_{j'_h})
- \hat{d}^\mvout(\loc_{j'_h}) \\
&< \caplmin_{j'_h} + \blocksize 
+ \hat{d}^\mvin(\loc_{j'_h})
- \hat{d}^\mvout(\loc_{j'_h})
\end{align*}
\end{linenomath*}
which again implies (\ref{eq:not-at-pathstart}).

Since the families moving into~$\loc_{j'_h}$
must have come from some place in~$\LOC \setminus \{\loc_{j'_h}\}$, there must exist  some place~$\loc_{j'_{h+1}} \in \LOC \setminus \{\loc_{j'_h}\}$ for which 
\begin{linenomath*}
\begin{equation*}
d(\loc_{j'_{h+1}},\loc_{j'_h}) \geq \frac{d^{\mvin}(\loc_{j'_h})}{m-1}>\frac{\blockbnd_{m-j}-\blocksize}{m-1}>\blockbnd_{m-j-1}.
\end{equation*} 
\end{linenomath*} 
Hence, we can pick $\loc_{j'_{h+1}}$ as the next place in the back-chain, because the sequence $\loc_{j'_{h+1}},\loc_{j'_h},\dots,\loc_{j'_1},\loc_{j_0}$ fulfills condition~(ii'). 
Since $\hat G_\blocksize$ is acyclic and $|\LOC|$ is finite, the existence of a back-chain from~$\hat\LOC_{\vdash}$ to~$\loc_{j_0}$ follows.

Consider the sequence~$\hat{\chainpath}$ of places obtained by concatenating our back-chain from~$\hat\LOC_{\vdash}$ to~$\loc_{j_0}$ with the chain from~$\loc_{j_0}$ to~$\hat\LOC_{\dashv}$. Observe that by conditions~(ii) and~(ii'), there is an arc in~$\hat G_\blocksize$ from each place in~$\hat{\chainpath}$ to the next place in~$\hat{\chainpath}$. Since $G_\blocksize$ is acyclic, this means that~$\hat{\chainpath}$ forms a path in~$\hat G_\blocksize$.

\mypara{The contradiction implied by our path~$\hat{\chainpath}$.}
It remains to show that the existence of our path~$\hat{\chainpath}$ from~$\hat\LOC_{\vdash}$ to~$\LOC_{\dashv}$ in~$\hat G_\blocksize$ leads to a contradiction. Let $E(\hat{\chainpath})$ denote the set of arcs on this path, and define $\FA_{\hat{\chainpath}}=\bigcup_{e \in E(\hat{\chainpath})} B_e$, that is, $\FA_{\hat{\chainpath}}$ is the union of \superblock{}s corresponding to the arcs on~$\hat{\chainpath}$. 
Define the assignment $\hat\asg'_{q+1}$ for~$I_{i+1}$ obtained from~$\hat\asg_{q+1}$ by moving all families contained in 
$\FA_{\hat{\chainpath}}$ to the place they were assigned under~$\hat\asg_q$, i.e., ``backward'' along the path~$\hat{\chainpath}$.
Let us show that $\hat\asg'_{q+1}$ is feasible for~$\hat I_{q+1}$.

Since the first place~$\loc_{\loc_{j'_{\altsubm'}}}$ on~$\hat{\chainpath}$ belongs to~$\LOC_{\vdash}$, by condition~(a$_{\vdash}$) it has free capacity at least~$\blocksize$ under~$\hat\asg_{q+1}$ that can be used to accommodate the superblock corresponding to the first arc of~$\hat{\chainpath}$, so the upper quota of~$\loc_{\loc_{j'_{\altsubm'}}}$ is not exceeded under~$\hat\asg'_{q+1}$. 
Since the last place~$\loc_{\loc_{j_{\altsubm}}}$ on~$\hat{\chainpath}$ belongs to~$\LOC_{\dashv}$, removing a \superblock\ from the families assigned by~$\hat\asg_{q+1}$ to~$\loc_{\loc_{j_{\altm}}}$  does not violate its lower quota due to condition~(b$_\dashv$). 
Since all \superblock{}s have the same total requirement~$\blocksize$, it also follows that the lower and upper quotas are respected by~$\hat\asg'_{q+1}$ for all remaining places as well. This proves that $\hat\asg'_{q+1}$ is feasible for~$\hat I_{q+1}$.

Moreover, since $\Delta(\hat\asg_q,\hat\asg'_{q+1}) < \Delta(\hat\asg_q,\hat\asg_{q+1})$, we know that the total utility of~$\hat\asg'_{q+1}$ is less than that of~$\hat\asg_{q+1}$, due to our choice of~$\hat\asg_{q+1}$, which yields
\begin{linenomath*}
\begin{equation}
\label{eq:moving-along-path-2}
\sum_{\fa_h \in \FA_{\hat{\chainpath}},\loc_j = \hat\asg_q(\fa_h)} \profit_h[j] < 
\sum_{\fa_h \in \FA_{\hat{\chainpath}},\loc_j = \hat\asg_{q+1}(\fa_h)} \profit_h[j].
\end{equation}
\end{linenomath*}

Let us now construct an assignment~$\hat\asg'_q$ from $\hat\asg_q$ by moving all families contained in~$\FA_{\hat{\chainpath}}$ to the place they are assigned under~$\hat\asg_{q+1}$, i.e., ``forward'' along the path~$\hat{\chainpath}$. 
Again, we can show that $\hat\asg'_q$ is feasible for~$\hat I_q$.

The last place on~$\hat{\chainpath}$ belongs to~$\LOC_{\dashv}$, and therefore by condition~(a$_\dashv$) can accommodate a \superblock\ besides the families assigned to it by~$\hat\asg_q$. The first place on~$\hat\chainpath$ belongs to~$\LOC_{\vdash}$,
 and thus by condition~(b$_\vdash$) we can remove a \superblock\ from among the families assigned to it by~$\hat\asg_q$ without violating its lower quota. Feasibility is therefore maintained at all places, as promised.
Due to~(\ref{eq:moving-along-path-2}),
the utility of~$\hat\asg'_q$ exceeds the utility of~$\hat\asg_q$, which contradicts the optimality of~$\hat\asg_q$. This contradiction proves the claim.
}

We are now ready to present 
 our algorithm for \maxutil-\RR\ based on Claims~\ref{clm:DPbyfamilies-bounded-diff} and~\ref{clm:DPbyfamilies-bounded-diff-2}. 
We use a combination of dynamic programming and color-coding.

Initially, we compute a maximum-utility feasible allocation~$\asg_1$ for~$I_1$ by assigning family~$\fa_1$ to a place that can accommodate it, and among all such places, yields the highest utility for~$\fa_1$. Then, in the first phase of the algorithm, for each $i \in [n-1]$ we compute an optimal assignment for~$I_{i+1}$ by slightly modifying~$\asg_i$. 
In the second phase, starting from the assignment~$\hat\asg_0:=\asg_n$ for $\hat I_0:=I_n$, we compute an optimal assignment for~$\hat I_q$ by slightly modifying~$\hat \asg_{q-1}$ for each $q \in [\sumcapmins]$.  
In each step of the first and second phases, we apply a procedure based on color-coding; 
the remainder of the proof contains the description of this procedure and its proof of correctness. 

Let $I_\curr$ be the instance of phase~1 or~2 for which we have already computed 
an optimal assignment~$\asg_\curr$, and suppose that $I_\nxt$ is the next instance for which we aim to compute an optimal assignment $\asg_\nxt$. 
Thus, $I_\nxt$ is either obtained from~$I_\curr$ by adding some family~$\fa_i \in  \FA$, or by raising the lower quota for one of the places in~$\LOC$ by one. 
Let $\FA_\curr$ and~$\FA_\nxt$ denote the set of families in~$I_\curr$ and in~$I_{\nxt}$, respectively.
Due to~\cref{clm:DPbyfamilies-bounded-diff,clm:DPbyfamilies-bounded-diff-2}, we can choose $\asg_{\nxt}$ so that  
$\Delta(\asg_{\nxt},\asg_\curr) \leq m \cdot  \blockbig$.

\mypara{Guessing step.}
Let $X(\loc_j,\loc_{j'},r)$ denote the set of 
all families with requirement~$r$ that are assigned to~$\loc_j$ by~$\asg_{\curr}$ but are moved to~$\loc_{j'}$ by~$\asg_{\nxt}$.
We guess the number $x(\loc_j,\loc_{j'},r)=|X(\loc_j,\loc_{j'},r)|$ for each $\loc_j,\loc_{j'} \in \LOC$ and $r \in [\maxreq]$.
By our choice of~$\asg_{\nxt}$,
we have 
\begin{linenomath*} 
\begin{equation*}
\sum_{j \in [m]} \sum_{j' \in [m] \setminus \{j\}} \sum_{r \in [\maxreq]} x(\loc_{j'},\loc_j,r) 
= \Delta(\asg_{\nxt},\asg_\curr) \leq m \cdot  \blockbig.
\end{equation*}
\end{linenomath*} 
Since we need to guess $m\cdot (m-1)\cdot \maxreq$ values  that add up to at most~$m \cdot \blockbig$, there are no more than $(\blockbig)^{m\cdot (m-1)\cdot \maxreq}$ possibilities to choose all values~$x(\loc_{j'},\loc_{j},r)$.
Thus, the number of possibilities for all our guesses 
is bounded by a function of~$m$ and~$\maxreq$ only. 

\mypara{Color-coding step.} We proceed by randomly coloring all families in~$I_\nxt$ with $m$ colors in a uniform and independent way. We say that a coloring is \emph{suitable} for $\asg_{\nxt}$, if for each $\loc_j \in \LOC$, all families in~$\asg_{\nxt}^{-1}(\loc_j) \setminus \asg_\curr^{-1}(\loc_j)$ have color~$j$. Thus, in a suitable coloring, each family whose assignment changes between~$\asg_\nxt$ and~$\asg_{\curr}$ must be assigned by~$\asg_\nxt$ to the place corresponding to its color.
Considering that  $I_\nxt$ may contain one more family than~$I_\curr$, we get
\begin{linenomath*} 
\begin{equation*}
 \sum_{\loc_j \in \LOC} \left| \asg_\nxt^{-1}(\loc_j) \setminus \asg_\curr^{-1}(\loc_j)\right| 
\leq 1+ \Delta(\asg_{\nxt},\asg_\curr) \leq m \cdot \blockbig+1.
\end{equation*}
\end{linenomath*} 
Therefore, the probability that the algorithm produces a suitable coloring is at least $m^{-m \blockbig+1}$. 

\mypara{Modification step.}
Assume that our coloring~$\chi$ is suitable. 
In the first phase, this implies  that the unique family~$\fa_i\in \FA_\nxt \setminus \FA_\curr$ must be assigned by~$\asg_{\nxt}$ to $\loc_{\chi(\fa_i)}$. 
Thus, we fix the assignment on~$\fa_i$ as~$\loc_{\chi(\fa_i)}$.
We proceed with the remaining families of~$\FA_\nxt$ 
as follows. 

For each $\loc_j,\loc_{j'} \in \LOC$ and $r \in [\maxreq]$, we compute the set $D(\loc_j,\loc_{j'},r):=\{\fa_h \in \FA_\curr: \asg_\curr(\fa_h)=\loc_j, \chi(\fa_h)=j',\reqf_h=r\}$; the suitability of $\chi$ means that $X(\loc_j,\loc_{j'},r) \subseteq D(\loc_j,\loc_{j'},r)$.
With each family~$\fa_h \in D(\loc_j,\loc_{j'},r)$, we associate the value~$\profit_h^{j'}-\profit_h^{j}$ which describes the increase in utility caused by moving $a_h$ from~$\loc_j$ to~$\loc_{j'}$. 
We order the families in~$D(\loc_j,\loc_{j'},r)$
in a non-increasing order of these values, and we pick the first $x(\loc_j,\loc_{j'},r)$ families according to this ordering; denote the obtained set $\wtD(\loc_j,\loc_{j'},r)$.
We can now define $\asg'_{i+1}$ as follows for each $\fa_h \in \FA_\curr$: 
\begin{linenomath*} 
\begin{equation*}
 \asg'_\nxt(\fa_h)=\left\{ 
\begin{array}{ll}
\loc_{\chi(\fa_h)} & \textrm{ if } \fa_h \in \FA_\nxt \setminus \FA_\curr; \\
\loc_{j'} & \textrm{ if } \exists j,r: \fa_h \in \wtD(\loc_j,\loc_{j'},r) ; \\
\asg_\curr(\fa_h) & \textrm{ otherwise.}
\end{array}
\right.
\end{equation*}
\end{linenomath*} 
Observe that the total requirement of all families assigned to some place~$\loc_j \in \LOC$ is the same in~$\asg'_{\nxt}$ as in~$\asg_{\nxt}$, due to the definition~$\asg'_{\nxt}$ and the correctness of our guesses. Therefore, $\asg'_{\nxt}$ is feasible.
Furthermore,
\begin{linenomath*} 
\begin{align*}
\notag 
&\sum_{\substack{\fa_h \in \FA_\curr,\\ \loc_j=\asg'_{\nxt}(\fa_h)}} \!\!\!\!   \profit_h[j] 
= 
\util(\loc_j,\asg_\curr)
+ 
\sum_{\substack{\exists j,j',r:\\
 \fa_h \in \wtD(\loc_j,\loc_{j'},r) }} \!\!\!\! (\profit_h[j'] - \profit_h[j])  
\\
& \geq  
\util(\loc_j,\asg_\curr)
+ 
\!\!\!\! 
\label{eqn:change-in-utility}
\sum_{\substack{\exists j,j',r: \\
\fa_h \in X(\loc_j,\loc_{j'},r) }} 
\!\!\!\! (\profit_h[j'] - \profit_h[j])  
= \sum_{\substack{\fa_h \in \FA_\curr,\\ \loc_j=\asg_\nxt(\fa_h)}} 
\!\!\!\! \profit_h[j]
\end{align*}
\end{linenomath*} 
where the inequality follows from our choice of the sets~$\wtD(\loc_j,\loc_{j'},r)$ and the facts
 $|\wtD(\loc_j,\loc_{j'},r)| = |X(\loc_j,\loc_{j'},r)|$
and $X(\loc_j,\loc_{j'},r) \subseteq D(\loc_j,\loc_{j'},r)$, 
which in turn follow from our assumptions that our guesses are correct and that the coloring~$\chi$ is suitable.
Since $\asg'_\nxt$ coincides with~$\asg_\nxt$ on $\FA_\nxt \setminus\FA_\curr$,
the above inequality
implies 
that $\asg'_{\nxt}$ is a maximum-utility feasible assignment for~$I_{\nxt}$, proving the correctness 
of our algorithm.

The presented algorithm can be derandomized using standard techniques, based on $(n,m \cdot \blockbig+1)$-perfect families of perfect hash functions~\cite{AYZ95}. 
Since both the number of possible guesses and the number of families that we have to color correctly are bounded by a function of $m+\maxreq$, the modification procedure applied in the first or second phases of the algorithm runs in FPT time when parameterized by~$m+\maxreq$. As we have to carry out this procedure $n+\sumcapmins$ times
and we can assume w.l.o.g.\ that $\sumcapmins \leq n \cdot \maxreq$, the total running time is FPT w.r.t.~$m+\maxreq$.
\end{proof}

We close this section by showing that if the desired total utility~$\finalutil$ is small and there are no lower quotas, then \maxutil-\RR\ for $\noser=1$ can be solved efficiently. 
Recall that with lower quotas, even the case $\finalutil=0$ is \NPh\ by \cref{prop:wtbinpacking}.
The algorithm of \cref{prop:fpt-finalutilt1},
\ifshort presented in the full version~\cite{fullversion},
\else presented in Appendix~\ref{proof:prop:fpt-finalutilt1},
\fi starts with a greedily computed assignment, and then deletes irrelevant families to obtain an equivalent instance with at most~$(\finalutil)^3$ families that can be solved efficiently.

\begin{restatable}[\appsymb]{theorem}{propfptfinalutilt}
\label{prop:fpt-finalutilt1}
\maxutil- and \pareto-\RR\ for $\noser=1$ are \FPT\ w.r.t.~$\finalutil$, the desired utility, if there are no lower quotas.
\end{restatable}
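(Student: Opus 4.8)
The plan is to solve \maxutil-\RR\ with no lower quotas by kernelizing the number of families down to $O(\finalutil^3)$ and then solving the resulting instance essentially by brute force (the \pareto-\RR\ case is analogous). Since there are no lower quotas, feasibility is never an issue and the only question is whether some assignment has total utility at least~$\finalutil$. Moreover, from any candidate assignment we may discard every family whose utility at its place is non-positive, and then, ordering the remaining families by contribution and keeping a shortest prefix whose contributions sum to at least~$\finalutil$, we may assume the assignment uses at most~$\finalutil$ families, all contributing a positive integer, and hence uses at most~$\finalutil$ places. As a preprocessing step I would check whether some family~$\fa_i$ fits some place~$\loc_j$ (i.e.\ $\reqf_i\le\capl_j$) with $\profit_i[j]\ge\finalutil$; if so, answer \textsc{yes}. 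Thus from now on only utility values in $\{1,\dots,\finalutil-1\}$ matter.

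Next comes a ``greedy''/matching step. Build the bipartite graph~$H$ on families and places with an edge $\{\fa_i,\loc_j\}$ whenever $\reqf_i\le\capl_j$ and $\profit_i[j]\ge 1$; a matching of size~$s$ in~$H$ yields a feasible assignment of utility at least~$s$ (put each matched family at its matched place). So if $H$ has a matching of size~$\finalutil$, answer \textsc{yes}. Otherwise, by K\"onig's theorem $H$ has a vertex cover of size at most $\finalutil-1$, which splits into a set~$H_\FA$ of at most $\finalutil-1$ ``hub'' families and a set~$H_\LOC$ of at most $\finalutil-1$ ``hub'' places such that every family outside~$H_\FA$ can fit, with positive utility, only in hub places, and every place outside~$H_\LOC$ is compatible only with hub families.

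Now I would kernelize. Call a family~$\fa_i$ \emph{relevant} if for some place~$\loc_j$ with $\reqf_i\le\capl_j$ and $u:=\profit_i[j]\in\{1,\dots,\finalutil-1\}$ the family~$\fa_i$ is among the $\finalutil$ families of smallest requirement (ties broken by index) that fit~$\loc_j$ with utility exactly~$u$ there. An exchange argument shows that some assignment of utility $\ge\finalutil$ uses only relevant families: take such an assignment with at most~$\finalutil$ assigned families and, among these, with the fewest irrelevant assigned families; if it still assigns an irrelevant~$\fa_i$ to~$\loc_j$, then among the $\ge\finalutil$ relevant families of smallest requirement fitting~$\loc_j$ with utility exactly~$\profit_i[j]$ there (each having requirement~$\le\reqf_i$) at least one is unassigned, since only $\le\finalutil-1$ other families are assigned; swapping it in for~$\fa_i$ preserves feasibility and utility while decreasing the count of irrelevant assigned families — a contradiction. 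It remains to bound the number of relevant families by $O(\finalutil^3)$: a relevant family witnessed by a non-hub place is compatible with that place, hence is a hub family; and the relevant families witnessed by the $\le\finalutil-1$ hub places come from at most $\finalutil-1$ sets (one per hub place) of at most $(\finalutil-1)\cdot\finalutil$ families each. Together with~$H_\FA$ this is $O(\finalutil^3)$ families; we delete all the rest, obtaining an equivalent instance with $O(\finalutil^3)$ families (the number of places need not be reduced).

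Finally I would solve the kernel: enumerate every subset~$T$ of the surviving families with $|T|\le\finalutil$ (there are $O(\finalutil^3)^{\finalutil}$ of them), every partition of~$T$ into groups (at most $B_{|T|}$, the Bell number), and, for each pair, compute a maximum-weight bipartite matching between the at most~$\finalutil$ groups and the places, where a group~$G$ may be matched to~$\loc_j$ only if $\sum_{\fa_i\in G}\reqf_i\le\capl_j$, with weight $\sum_{\fa_i\in G}\profit_i[j]$; this is polynomial because there are only $\le\finalutil$ groups. The answer is \textsc{yes} iff some choice yields a matching saturating all groups with weight at least~$\finalutil$; correctness follows from the reductions above, and the total running time is FPT in~$\finalutil$. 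The step I expect to be the main obstacle is the kernelization: the naive ``keep the best $\finalutil$ families per place'' bound depends on the number of places, and removing that dependence is exactly what the K\"onig/vertex-cover dichotomy is for; one also has to be careful that the exchange argument for deleting irrelevant families is compatible with retaining an optimal assignment, which is why the ``fewest irrelevant assigned families'' potential is used above.
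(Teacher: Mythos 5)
Your argument is correct, and it follows the same overall strategy as the paper: after ruling out a single pair of utility at least~$\finalutil$, shrink the instance to $O((\finalutil)^3)$ families by keeping, for each of a bounded set of ``interesting'' places and each utility value in $[\finalutil]$, the $\finalutil$ families of smallest requirement, justify this by an exchange argument that swaps an assigned discarded family for an unassigned kept family of the same utility and no larger requirement (valid since $\noser=1$ and there are no lower quotas), and then solve the small instance by an FPT-in-$n$ routine. The one genuine difference is the device used to bound the interesting places: the paper runs a maximal greedy assignment using only positive-utility pairs, so that if its utility is below~$\finalutil$ then fewer than $\finalutil$ places are touched and, by maximality, untouched places can never receive an unassigned positive-utility family; you instead take a maximum matching in the compatibility graph and, via K\H{o}nig's theorem, obtain at most $\finalutil-1$ hub places plus at most $\finalutil-1$ hub families covering all compatible positive-utility pairs. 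Both devices yield the same $O(\finalutil)$ bound with comparable effort; the greedy version is slightly more elementary, while your vertex-cover formulation makes the ``few places or immediate yes'' dichotomy explicit. Finally, where the paper closes by invoking its \FPT-in-$n$ algorithm (\cref{prop:fpt-n}), you re-derive that step directly (enumerating at most $\finalutil$ assigned families, their partition, and a maximum-weight matching of groups to places), which is equally fine; your treatment of \pareto-\RR\ as ``analogous'' matches the paper's reliance on its reduction from \pareto-\RR\ to \maxutil-\RR\ and is no less precise.
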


\appendixproofwithstatement{prop:fpt-finalutilt1}{\propfptfinalutilt*}{
Let $I$ denote our input instance.
We present an algorithm which, in polynomial time, produces an equivalent instance~$I'$ with at most $(\finalutil)^3$ families. Applying \cref{prop:fpt-n} to~$I'$ then yields fixed-parameter tractability for parameter~$\finalutil$. 

We start by checking whether there exists a family~$\fa_i \in \FA$ and a place~$\loc_j \in \LOC$ such that $\loc_j$ can accommodate~$\fa_i$ and $\profit_i[j]\geq \finalutil$. If so, we output the assignment that assigns~$\fa_i$ to~$\loc_j$ and leaves every other family unassigned.

Otherwise, we proceed by greedily assigning families to places so that (i) the assignment remains feasible, and (ii) each family~$\fa_i \in \FA$ assigned to some place~$\loc_j \in \LOC$ has positive utility for that place, i.e., $\profit_i[j]\geq 1$. 
Let $\asg_0$ be the feasible assignment obtained at the end of this greedy process.

If $\util(\asg_0)\geq \finalutil$, then we output~$\asg_0$. 
Otherwise, consider those places and families that are ``relevant'' according to~$\asg_0$, namely the sets $\LOC_0=\{\loc_j \in \LOC: \asg_0^{-1}(\loc_j) \neq \emptyset\}$ and~$\FA_0=\{\fa_i:\asg_0(\fa_i) \neq \perp\}$. 
First, we \emph{mark} each family in~$\FA_0$.
Next, for each $\loc_j \in \LOC_0$ and each $\gamma \in [\finalutil]$, 
we order all families in $\FA_{j,\gamma}:=\{\fa_i \in \FA: \profit_i[j]=\gamma\}$  according to their requirement in a non-decreasing manner, and mark the first $\finalutil$ families in this ordering (or all of them, if $|\FA_{j,\gamma}|<\finalutil$). 
We define an instance~$I'$ of \maxutil-\RR\ by deleting all unmarked families.

Let $\FA'$ be the set of families in~$I'$, that is, the families we have marked;  
we claim $|\FA'|\leq (\finalutil)^3$. First, due to condition~(ii), we know that $|\FA_0| <\finalutil$ and $|\LOC_0|< \finalutil$ follows from $\util(\asg_0) <\finalutil$.
Additionally, we marked at most $\finalutil$  families from~$\FA_{j,\gamma}$ for each $\loc_j \in \LOC_0$ and $\gamma \in [\finalutil]$. Summing this for all values of~$\loc_j$ and~$\gamma$, we get at most $(\finalutil-1)(\finalutil)^2$ families. Taking into account the at most~$\finalutil$ families in~$\FA_0$, we get that there at most~$(\finalutil)^3$ marked families, as promised.

\begin{claim}
\label{clm:semikernel}
$I'$ is equivalent with~$I$.
\end{claim}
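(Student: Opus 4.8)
The plan is to establish the two directions of the equivalence separately. The direction ``$I'$ admits a feasible assignment with utility at least~$\finalutil$ $\Rightarrow$ so does~$I$'' is immediate: since $I'$ is obtained from~$I$ by deleting families only, extending any feasible assignment~$\asg'$ of~$I'$ to~$I$ by setting $\asg(\fa_h)=\bot$ for each deleted family~$\fa_h$ leaves every place load unchanged, and as $I$ has no lower quotas this preserves feasibility, while $\util(\asg)=\util(\asg')$.

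For the converse, suppose $\asg$ is a feasible assignment of~$I$ with $\util(\asg)\geq\finalutil$, and among all such assignments pick one that is lexicographically minimal with respect to the pair \emph{(number of families assigned by~$\asg$, number of families assigned by~$\asg$ that are not marked)}. Since $I$ has no lower quotas, removing an assigned family preserves feasibility, so minimality in the first coordinate forces every family~$\fa_i$ with $\asg(\fa_i)=\loc_j$ to satisfy $\profit_i[j]\geq 1$ (otherwise drop it) as well as $\util(\asg)-\profit_i[j]<\finalutil$ (otherwise drop it). Moreover, since $I'$ is constructed only after the algorithm has ruled out a single-family solution, there is no family--place pair $(\fa_i,\loc_j)$ with $\loc_j$ accommodating~$\fa_i$ and $\profit_i[j]\geq\finalutil$; as feasibility of~$\asg$ implies that $\loc_j$ accommodates the singleton $\{\fa_i\}$ whenever $\asg(\fa_i)=\loc_j$, we obtain $\profit_i[j]<\finalutil$ for every family assigned by~$\asg$.

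The heart of the argument is to show that this minimal~$\asg$ assigns only marked families, since then $\asg$ is a feasible assignment of~$I'$ with utility at least~$\finalutil$ and we are done. First, if $\asg$ assigns~$\fa_i$ to a place~$\loc_j\notin\LOC_0$, then $\loc_j$ is empty under the greedily built assignment~$\asg_0$, yet $\reqf_i\leq\satur(\loc_j,\asg)\leq\capl_j$ and $\profit_i[j]\geq1$, so the greedy process could have added~$\fa_i$ to~$\loc_j$, contradicting its maximality unless $\fa_i\in\FA_0$ --- in which case $\fa_i$ is marked. Hence any unmarked assigned family~$\fa_i$ goes to some $\loc_j\in\LOC_0$; put $\gamma=\profit_i[j]$, which lies in~$[\finalutil]$ by the previous paragraph, so $\fa_i\in\FA_{j,\gamma}$, and since $\fa_i$ is unmarked we get $|\FA_{j,\gamma}|>\finalutil$ and the $\finalutil$ marked families of~$\FA_{j,\gamma}$ each have requirement at most~$\reqf_i$. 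Writing $g\geq1$ for the smallest utility any family obtains under~$\asg$ and $k$ for the number of families~$\asg$ assigns, each assigned family contributes at least~$g$, so $k\leq\util(\asg)/g$, while removing a family attaining~$g$ gives $\util(\asg)<\finalutil+g$; hence $k<\finalutil/g+1\leq\finalutil+1$, i.e.\ $k\leq\finalutil$. Since $\fa_i$ is one of the $k$ assigned families but is not among the $\finalutil$ marked families of~$\FA_{j,\gamma}$, at most $k-1\leq\finalutil-1$ of those marked families are assigned by~$\asg$, so at least one of them, say~$\fa_{i'}$, is unassigned. Reassigning~$\fa_{i'}$ to~$\loc_j$ in place of~$\fa_i$ yields a feasible assignment (the load of~$\loc_j$ does not increase since $\reqf_{i'}\leq\reqf_i$, all other places are untouched, and $\fa_{i'}$ was free) with the same total utility (as $\profit_{i'}[j]=\gamma=\profit_i[j]$), the same number of assigned families, but one fewer unmarked assigned family --- contradicting the choice of~$\asg$.

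I expect the delicate point to be the interplay between the bound $k\leq\finalutil$ on the number of assigned families of a minimal solution and the bound~$\finalutil$ on the number of families marked per group~$(\loc_j,\gamma)$: it is exactly the tight inequality $k-1\leq\finalutil-1$ that guarantees a \emph{free} marked family of requirement at most~$\reqf_i$ with the same utility for~$\loc_j$ always exists, which is what makes the single-swap potential argument go through. Getting this counting right --- in particular remembering that the unmarked family~$\fa_i$ occupies one of the $k$ slots while not being one of the $\finalutil$ marked families --- is where the proof must be handled with care.
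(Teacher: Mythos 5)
Your proof is correct, and it reaches the paper's conclusion by a genuinely different organization of the same underlying exchange idea. The paper's proof is a direct construction: starting from an arbitrary feasible $\asg$ with $\util(\asg)\geq\finalutil$, it builds a new assignment $\asg'$ all at once by keeping the marked families where $\asg$ put them and replacing, for every place $\loc_j$ and utility level $\gamma$, the unmarked families of $\asg^{-1}(\loc_j)\cap \FA_{j,\gamma}$ by the same number of marked families of $\FA_{j,\gamma}$ (taken in non-decreasing order of requirement), with an explicit ``stop once $\finalutil$ families are assigned'' rule guaranteeing that a marked family is always available; feasibility follows, as in your argument, from the fact that marked families in a group have no larger requirement than unmarked ones. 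You instead run an extremal argument: take a solution lexicographically minimal in (number of assigned families, number of assigned unmarked families), derive that every assigned family has utility in $\{1,\dots,\finalutil-1\}$ and that at most $\finalutil$ families are assigned (your bound via the minimum utility $g$ and the property $\util(\asg)-g<\finalutil$ plays the role of the paper's stopping rule), and then eliminate a single unmarked assigned family by one swap with a free marked family of the same group. Two remarks: your counting ``at most $k-1\leq \finalutil-1$ of the $\finalutil$ marked families of $\FA_{j,\gamma}$ are assigned'' is exactly the availability argument the paper obtains from its stopping condition, so the combinatorial content coincides; and your explicit treatment of places outside $\LOC_0$ via maximality of the greedy assignment $\asg_0$ is a point the paper's proof leaves implicit (its picking argument tacitly assumes the relevant groups $\FA_{j,\gamma}$ were marked, i.e., $\loc_j\in\LOC_0$), so both proofs rely on $\asg_0$ being maximal, but you state the dependence where the paper does not. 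The per-swap approach is arguably more robust to such corner cases, at the cost of the lexicographic-minimality bookkeeping; the paper's wholesale reconstruction yields the witness assignment more directly.
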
 
\begin{claimproof}
Clearly, a feasible assignment for~$I'$ is also feasible for~$I$, and has the same utility in both instances. 
Suppose now that $\asg$ is a feasible assignment for~$I$ with $\util(\asg) \geq \finalutil$; we construct a feasible 
assignment for~$I'$
with utility at least~$\finalutil$ as follows. 

First, for each family in~$\FA'$, we let~$\asg'$ and~$\asg$ coincide.
Let $\LOC^\star$ contain all places that have at least one family assigned to them by~$\asg$. 
Now, for  each $\loc_j \in \LOC^\star$ and each $\gamma\in [\finalutil]$, we take the ordering of~$\FA_{j,\gamma}$ used in the marking process (recall that the families in~$\FA_{j,\gamma}$ were ordered in a non-decreasing way according to their requirements), and we pick $n_{j,\gamma}$ families one by one from $\FA_{j,\gamma}$ among 
those that are not yet assigned to some place by~$\asg'$,  always picking the first family available, where 
\begin{linenomath*}
\begin{equation*}
n_{j,\gamma}=\left| \{\fa_i \in \FA \setminus \FA': \fa_i \in \asg^{-1}(\loc_j),  \profit_i[j]=\gamma\}\right|.
\end{equation*}
\end{linenomath*}
We let $\asg'$ assign the picked families to~$\loc_j$. 
This process stops once the number of assigned families reaches~$\finalutil$ or we have iterated through all places in~$\LOC^\star$ and all utility values~$\gamma \in [\finalutil]$. We set~$\asg'(\fa_i)=\perp$ for all families of~$\fa_i \in \FA'$ left unassigned thus far.

Let us show that for each $\loc_j \in \LOC^\star$ and $\gamma \in [\finalutil]$, we are always able to pick a marked family from~$\FA_{j,\gamma}$ during the above process.
Assume first that
we have marked the first $\finalutil$ families from~$\FA_{j,\gamma}$. Then due to our stopping condition, at each step there are less than~$\finalutil$ families in~$\FA_{j,\gamma}$ that are already assigned by~$\asg'$ to some place,  so 
at each step when we pick the first available family from~$\FA_{j,\gamma}$, we pick a marked family.
Second, assume that $|\FA_{j,\gamma}|<\finalutil$, and thus all families in $\FA_{j,\gamma}$ are marked. In this case, there is no family in $\FA \setminus \FA'$ that has utility~$\gamma$ for~$\loc_j$, due to the definition of~$\FA_{j,\gamma}$. However, this implies $n_{j,\gamma}=0$. This proves that all picked families are in~$F'$, and 
hence, $\asg'$ is an assignment for~$I'$.

We next show that $\util(\asg') \geq \finalutil$. On one hand, this is clear if the algorithm stops because it has assigned at least~$\finalutil$ families, since all families contribute at least~$1$ to the total utility of~$\asg'$. 
On the other hand, if the algorithm stops because it has iterated over all possible places and utility values considered, then we know that at each place~$\loc_j \in \LOC^\star$ and for each $\gamma \in [\finalutil]$, there are exactly the same number of families assigned to~$\loc_j$ by~$\asg$ and~$\asg'$, due to our definition of~$n_{j,\gamma}$. This implies 
{ \allowdisplaybreaks
\begin{linenomath*}
\begin{align*}
\util(\asg')&=
\sum_{\loc_j \in \LOC^\star} 
\sum_{\fa_i \in \asg'^{\, -1}(\loc_j)} \profit_i[j] \\
&= 
\sum_{\loc_j \in \LOC^\star} 
\left(
\sum_{\fa_i \in \asg'^{\, -1}(\loc_j) \cap \asg^{-1}(\loc_j)} 
\profit_i[j] +
\sum_{\gamma \in [\finalutil]}\gamma \cdot n_{j,\gamma} 
\right)
\\
&=
\sum_{\loc_j \in \LOC^\star} 
\sum_{\fa_i \in \asg^{-1}(\loc_j)} \profit_i[j] = \util(\asg)\geq \finalutil. 
\end{align*}
\end{linenomath*}
}

It remains to show that $\asg'$ is feasible. Consider some~$\loc_j \in \LOC^\star$. Clearly the families that both~$\asg$ and~$\asg'$ assign to~$\loc_j$, i.e., those in~$\asg^{-1}(\loc_j) \cap \FA'$, contribute the same amount to the load of~$\asg$ and~$\asg'$. Let $r_{j,\gamma}$ denote the maximum requirement of any marked family in~$\FA_{j,\gamma}$.
Consider the families in $\asg^{-1}(\loc_j) \setminus \FA'$, and partition this set according to the utility values these families have for~$\loc_j$. Consider some~$\gamma \in [\finalutil]$.
Since each family $\fa_i$ in $\asg^{-1}(\loc_j) \setminus \FA'$ with $\profit_i[j]=\gamma$ is unmarked, it has requirement at least~$r_{j,\gamma}$. By contrast, the $n_{j,\gamma}$ families assigned to~$\loc_j$ from~$\FA_{j,\gamma}$ during the picking process are all marked, and thus have requirement at most~$r_{j,\gamma}$. Hence, we get 
\begin{linenomath*}
\begin{align*}
\satur(\loc_j,\asg')&=
\sum_{\fa_i \in \asg'^{\, -1}(\loc_j)} \reqf_i \\
&\leq  
\sum_{\fa_i \in \asg'^{\, -1}(\loc_j) \cap \asg^{-1}(\loc_j)} 
\reqf_i +
\sum_{\gamma \in [\finalutil]} r_{j,\gamma} \cdot n_{j,\gamma} 
\\
&\leq
\sum_{\fa_i \in \asg^{-1}(\loc_j)} \reqf_i
 = \satur(\loc_j,\asg). 
\end{align*}
\end{linenomath*}
Thus, $\asg$ is a feasible assignment for~$I'$ with utility at least~$\finalutil$, as required.
\end{claimproof}
The result for \maxutil\-\RR\ now follows from \cref{clm:semikernel} and \cref{prop:fpt-n}, and the result for \pareto-\RR\ follows by \cref{rem:maxutil-vs-pareto}.

We remark that it is possible to reduce the number of places as well: it suffices to keep (at most) $\finalutil$ places for each family $\fa_i \in \FA'$ among those which can accommodate it: we need to pick them in non-decreasing order of $\fa_i$'s utility for them. 
This yields a ``pseudo-kernelization'' algorithm for parameter~$\finalutil$ in the sense that it produces an equivalent instance where both the number of families and the number of places is bounded by a function of~$\finalutil$; however, the requirement values and the upper quotas may be unbounded.
}

\section{Multiple services}
\label{sec:moreservices}
\appendixsection{sec:moreservices}
Let us now consider the model when there are several services, i.e., $\noser>1$.
\todoHinline{Move the subsections to paragraphs. So need to rewrite the following because there are no section numbers anymore.}
We start with a strong intractability result for \feasible-\RR.
Then we focus in Pareto-optimality, and propose several algorithms that solve \pareto-\RR\ but not \maxutil-\RR, contrasted by tight hardness results. 
We close  by investigating \maxutil-\RR.

\paragraph{Feasibility.}
\label{sec:feasible}

When the number of services can be unbounded, then a simple reduction from \textsc{Independent Set} by Gurski et al.~\cite[Theorem~23]{gurski2019knapsack} shows that \maxutil-\RR\ is \NPh\ even if $m=1$, there are no lower quotas and the utilities are \uniform. With a slight modification of their reduction, we obtain \cref{prop_feas_hard_m} which shows the \NPh{}ness of \feasible-\RR\ in a very restricted setting.

\begin{restatable}[\appsymb]{proposition}{propfeashardm}
\label{prop_feas_hard_m}
The following problems are \NPh\ even if $\maxcap = \maxreq = 1$ and $m=1 \colon$
\begin{compactitem}
\item \feasible-\RR;
\item \pareto-\RR\ with equal preferences;
\item \maxutil-\RR\ with equal utilities.
\end{compactitem}
\end{restatable}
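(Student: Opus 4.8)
The plan is to reduce from \textsc{Exact Cover by 3-Sets} (\textsc{X3C}), an NP-complete problem: given a ground set $U$ with $|U|=3q$ and a collection $\mathcal{S}$ of $3$-element subsets of $U$, decide whether some subcollection of $\mathcal{S}$ partitions $U$. A reduction like that of Gurski et al.~\cite[Theorem~23]{gurski2019knapsack} from \textsc{Independent Set} cannot be used off the shelf, because with $m=1$ and without lower quotas feasibility is trivial (the empty assignment is feasible); the modification relative to their construction is to use one service per \emph{element} of $U$ and to give the single place \emph{matching} lower and upper quotas, so that its one slot per service must be used exactly once and a feasible assignment is forced to encode an exact cover.

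Concretely, I would build the \RR\ instance with a service $\ser_u$ for every $u\in U$ (so $\noser=3q$), a single place $\loc_1$, and a family $\fa_S$ for every $S\in\mathcal{S}$ that requires one unit of $\ser_u$ for each $u\in S$ and zero units of every other service; put $\caplmin_1[\ser_u]=\capl_1[\ser_u]=1$ for all $u\in U$. Then $\maxreq=\maxcap=1$, $m=1$, every family can be accommodated by $\loc_1$, and the instance is computable in polynomial time. For \maxutil-\RR\ I would additionally set $\profit_i[1]=1$ for every family $\fa_i$ (equal utilities) and take the target $\finalutil:=q$; for \pareto-\RR\ I would additionally give every family the preference list consisting solely of $\loc_1$ (equal preferences).

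The key equivalence to check is that an assignment $\asg$ is feasible if and only if $\{S\in\mathcal{S}:\asg(\fa_S)=\loc_1\}$ is an exact cover of $U$. Indeed, feasibility requires $(1,\dots,1)\le\satur(\loc_1,\asg)\le(1,\dots,1)$, that is $\satur(\loc_1,\asg)=(1,\dots,1)$, which says precisely that every $u\in U$ is contained in exactly one selected set; the converse is immediate. This already proves that \feasible-\RR\ is \NPh. Since every exact cover of a $3q$-element set by $3$-element sets uses exactly $q$ sets, \emph{every} feasible assignment assigns exactly $q$ families; hence a feasible assignment of total utility at least $q$ exists iff \textsc{X3C} is a yes-instance, which gives the \NPh{}ness of \maxutil-\RR\ with equal utilities. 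For \pareto-\RR, the same cardinality rigidity shows that no feasible assignment's set of assigned families can be a proper superset of another's; and since each family finds only $\loc_1$ acceptable and strictly prefers it to being unassigned, a Pareto-improvement over a feasible assignment would have to keep all previously assigned families assigned while assigning at least one more, hence assign strictly more than $q$ families -- impossible. Thus every feasible assignment is already Pareto-optimal, a feasible and acceptable Pareto-optimal assignment exists iff \textsc{X3C} is a yes-instance, and one reads off an exact cover from any such assignment in polynomial time.

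I do not anticipate a real obstacle; the only points deserving care are (i) seeing that using \emph{matching} lower and upper quotas of $1$, rather than just an upper quota of $1$, is exactly what turns a set-packing condition into an exact-cover condition, and (ii) the Pareto-optimality argument, which rests on the fact that all feasible assignments of the constructed instance assign the same number of families.
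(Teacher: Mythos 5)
Your proof is correct, but it follows a genuinely different route from the paper. The paper modifies the \textsc{Independent Set}-style reduction of Gurski et al.: it reduces from \textsc{Multicolored Independent Set}, using one service per edge (upper quota~$1$, so no two adjacent vertices fit) plus one service per color class with lower quota~$1$ (forcing one vertex per class), and then disposes of \pareto- and \maxutil-\RR\ by noting that with lower quotas the constructed instance is equivalent under arbitrary preferences or utilities, since mere feasibility is already hard. You instead reduce from \textsc{X3C}, with one service per ground element and matching lower and upper quotas equal to~$1$, so that feasibility literally encodes an exact cover. Both reductions hinge on the same structural point you identify in (i): with $m=1$ and no lower quotas the problem is trivial (indeed the paper shows \pareto-\RR\ with $m=1$ and no lower quotas is polynomial-time solvable), so the lower quotas must do the work. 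Your construction buys a somewhat cleaner and more self-contained argument: because every feasible assignment assigns exactly $q$ families, the equal-utility target $\finalutil=q$ and the observation that every feasible assignment is Pareto-optimal fall out immediately, whereas the paper's one-line equivalence for the \pareto\ and \maxutil\ variants is terser but relies on the reader accepting that hardness of feasibility transfers. The paper's choice of a multicolored source problem is not exploited for any parameterized claim in this proposition, so nothing is lost by your simpler source; all parameter bounds ($\maxcap=\maxreq=1$, $m=1$, equal preferences/utilities) are met in your construction as well.
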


\appendixproofwithstatement{prop_feas_hard_m}{\propfeashardm*}{
We present a reduction from \textsc{Multicolored Independent Set}  to \feasible-\RR.
In this problem, we are given a graph~$G=(V,E)$ and an integer~$k$, with the vertex set of~$G$ partitioned into $k$ sets~$V_1,\dots,V_k$; 
the task is to decide whether there exists an independent set of size~$k$ that contains one vertex from each set~$V_i$, $i \in [k]$. 
This problem is \NPh\ and, in fact, \Woneh\ when parameterized by~$k$~\cite{pietrzak-multicolored-2003}.\footnote{Pietrzak dealt with the \textsc{Multicolored} (or \textsc{Partitioned) Clique} problem, but the claimed hardness results follow immediately from his results.}

We construct an instance of \feasible-\RR\ with a single location~$\loc$ as follows. 
We set $V$ as the set of families, and $E \cup \{\ser_1,\dots,\ser_k\}$ as the set of  services, with $\loc$ offering exactly one unit from every service. 
Each family~$v \in V_i$ for some~$i \in [k]$ requires one unit of each service associated with its incident edges, and one unit of~$\ser_i$. We set the lower quota for $\loc$ as~$1$ for each service~$\ser_i$, $i \in [k]$, and as~$0$ for each service in~$E$.

Notice that a feasible assignment assigns at least one vertex from each set~$V_i$, $i \in [k]$, to~$\loc$. Moreover, no two vertices assigned to~$p$ can be adjacent, as the edge connecting  them corresponds to a service from which both of these two vertices (i.e., families) need one unit. Thus, a feasible assignment yields an independent set in~$G$ containing a vertex from each set~$V_i$, $i \in [k]$. Vice versa, assigning such an independent set to~$\loc$ satisfies all lower and upper quotas, and thus yields a feasible assignment.

Finally, observe that adding arbitrary preferences or utilities to the constructed instance of \feasible-\RR\, we obtain an instance of \pareto- or \maxutil-\RR\, respectively, that is equivalent with the original input instance.
}

\paragraph{Pareto-optimality.}
\label{sec:pareto}
The reduction from \textsc{Independent Set} used by Gurski et al.~\cite{gurski2019knapsack} and also in \cref{prop_feas_hard_m} can be adapted to show the \NPh{}ness of \pareto-\RR\ 
in the case when there are no lower quotas, $m=2$, and we allow $\maxcap$ to be unbounded; see \cref{prop:pareto_two_locs}. 
Notice that in instances without lower quotas, a feasible, acceptable and Pareto-optimal assignment always exists. Hence, our hardness results for \pareto-\RR\ rely on the following fact.
\begin{restatable}[\appsymb]{obs}{obsparcomp}
\label{obs:Pareto-vs-completeness}
Given an instance~$I$ of \pareto-\RR\ with \dichotomous\ preferences, we can decide the existence of a feasible, acceptable and complete assignment for~$I$ by solving \pareto-\RR\ on~$I$. 
\end{restatable}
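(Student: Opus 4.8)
The plan is to give a one-call Turing reduction: run any algorithm that solves \pareto-\RR\ on the very same instance~$I$, and read off the answer from its output. Before doing so, I would record the structural fact that makes dichotomous preferences so rigid. Since each family is indifferent between all places it finds acceptable and strictly prefers every such place to being unassigned ($\loc \succ_i \bot$), for any two feasible and \acceptable\ assignments $\asg,\asg'$ the condition ``$\asg'(\fa_i)\succeq_i \asg(\fa_i)$ for every $\fa_i\in\FA$'' holds precisely when every family assigned by~$\asg$ is also assigned by~$\asg'$ (necessarily to an acceptable place). Hence $\asg'$ is a Pareto-improvement over~$\asg$ if and only if $\asg'$ assigns a \emph{strict} superset of the families that $\asg$ assigns. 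In particular, any feasible, \acceptable\ and complete assignment admits no Pareto-improvement, so it is automatically Pareto-optimal.

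Next I would describe the decision procedure. Call the \pareto-\RR\ solver on~$I$. If it reports that no feasible, \acceptable, Pareto-optimal assignment exists, then in particular no feasible, \acceptable\ and complete assignment exists, because by the previous paragraph such an assignment would be Pareto-optimal; so we answer ``no''. Otherwise the solver returns a feasible, \acceptable, Pareto-optimal assignment~$\asg$, and we answer ``yes'' exactly if $\asg$ is complete, and ``no'' otherwise.

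Finally I would verify that the verdict in the case where the solver returns some~$\asg$ is correct. If $\asg$ is complete we are done immediately. If $\asg$ is not complete, I claim no feasible, \acceptable\ and complete assignment exists at all: if $\asg^*$ were one, then $\asg^*$ assigns every family to an acceptable place, hence strictly more families than~$\asg$, so by the structural fact $\asg^*$ would be a Pareto-improvement over~$\asg$, contradicting the Pareto-optimality of~$\asg$. Thus ``no'' is the correct answer, completing the argument.

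I do not expect a genuine obstacle here; the proof is short. The only points requiring care are the exact semantics of dichotomous preferences (indifference among the acceptable places, together with $\loc\succ_i\bot$, so that ``not getting worse'' is equivalent to ``staying assigned'') and the exact semantics of the \pareto-\RR\ task, which may legitimately report that \emph{no} feasible, \acceptable, Pareto-optimal assignment exists --- this case arises exactly when there is no feasible and \acceptable\ assignment whatsoever, which is possible once lower quotas are present, and which trivially also rules out a complete one. Getting either of these two readings wrong is the only way a sloppy argument could fail.
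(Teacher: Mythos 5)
Your proposal is correct and follows essentially the same route as the paper's proof: both rest on the two observations that under dichotomous preferences any feasible, acceptable and complete assignment is automatically Pareto-optimal, and that if such an assignment exists then every feasible, acceptable, Pareto-optimal assignment must be complete (else the complete one would be a Pareto-improvement). Your explicit characterization of Pareto-improvements as strict supersets of the assigned families, and your handling of the case where the solver reports that no feasible and acceptable assignment exists, are just slightly more detailed renderings of the same argument.
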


\appendixproofwithstatement{obs:Pareto-vs-completeness}{\obsparcomp*}{
It suffices to observe that there exists a feasible, acceptable and complete (\emph{fac}, for short) assignment if and only if every feasible, acceptable and Pareto-optimal (\emph{faP}, for short) assignment is complete. 
Clearly, a fac assignment is necessarily Pareto-optimal, since families are indifferent between places that they find acceptable. On the other hand, if there exists a fac assignment~$\asg$, then each faP assignment must be complete, as otherwise $\asg$ would be a Pareto-improvement for it.
}

\vspace{-6pt}
\begin{restatable}[\appsymb]{proposition}{propparetotwolocs}
\label{prop:pareto_two_locs}
\pareto- and \maxutil-\RR\ are \NPh\ even if $m  = 2, \maxreq = 1$, there are no lower quotas, and families have \indifferent\ preferences or utilities.
\end{restatable}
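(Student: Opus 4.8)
The plan is to give a reduction from \textsc{Independent Set}, which is \NPh. The key observation is that with $m=2$ and $\maxreq=1$ we cannot use lower quotas, but we can simulate the required cardinality bound via an upper quota on the ``second'' place. Given a graph $G=(V,E)$ and an integer $k\le |V|$, I would construct an instance of \feasible-\RR\ with two places $\loc_1,\loc_2$, a family $\fa_v$ for each $v\in V$, a service $\ser_e$ for each $e\in E$, and one extra counting service $\ser^*$. Each family $\fa_v$ requires one unit of $\ser^*$ and one unit of $\ser_e$ for every edge $e$ incident with $v$, and nothing else; hence every requirement is $0$ or $1$, so $\maxreq=1$. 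I would set the upper quota of $\loc_1$ to $1$ on every edge-service and to $|V|$ on $\ser^*$, and the upper quota of $\loc_2$ to $|V|$ on every edge-service and to $|V|-k$ on $\ser^*$; all lower quotas are $0$ and $m=2$.

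For correctness, observe that in any feasible and complete assignment~$\asg$ the quota $\capl_2[\ser^*]=|V|-k$ forces $|\asg^{-1}(\loc_2)|\le |V|-k$, hence $|\asg^{-1}(\loc_1)|\ge k$, and the unit quotas on the edge-services at~$\loc_1$ guarantee that no two families $\fa_u,\fa_v$ with $\{u,v\}\in E$ are both assigned to~$\loc_1$; thus $\{v : \asg(\fa_v)=\loc_1\}$ is an independent set of size at least~$k$ in~$G$. Conversely, from an independent set~$I$ with $|I|=k$ (which exists whenever $G$ has an independent set of size at least~$k$, since subsets of independent sets are independent), assigning $\{\fa_v : v\in I\}$ to~$\loc_1$ and all remaining $|V|-k$ families to~$\loc_2$ yields a feasible complete assignment. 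So the constructed instance admits a feasible complete assignment iff $G$ has an independent set of size at least~$k$.

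It remains to transfer this to the two optimization problems. For \pareto-\RR, I would additionally give every family \indifferent\ preferences (every place is acceptable and no place strictly preferred); by \cref{obs:Pareto-vs-completeness}, running a \pareto-\RR\ algorithm on this instance decides whether a feasible, acceptable and complete assignment exists, i.e.\ whether $G$ has an independent set of size at least~$k$, which gives \NP-hardness under \indifferent\ preferences. For \maxutil-\RR, I would instead set every utility $\profit_i[j]$ to~$1$ and put $\finalutil=|V|$; since the utility of a feasible assignment equals the number of families it assigns, a feasible assignment of utility at least~$|V|$ is exactly a feasible complete assignment, which again gives \NP-hardness under equal utilities. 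Throughout, $m=2$, $\maxreq=1$, and all lower quotas are zero, as claimed; only $\maxcap$ is unbounded (it equals~$|V|$).

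I do not expect a genuine obstacle here, since this is essentially the single-place reduction behind \cref{prop_feas_hard_m} with the lower-quota gadget --- which is forbidden in this setting --- replaced by the upper-quota cap $\capl_2[\ser^*]=|V|-k$ on the ``discard'' place~$\loc_2$. The only points that need a moment's care are checking that $\maxreq$ really stays~$1$ (it does, as all requirements are $0/1$ values), and the equivalence between an independent set of size at least~$k$ and one of size exactly~$k$, which is immediate.
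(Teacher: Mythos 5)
Your proposal is correct and is essentially the paper's own proof: a reduction from \textsc{Independent Set} with one service per edge plus a counting service, two places (one with unit edge-quotas hosting the independent set, one ``discard'' place), followed by \cref{obs:Pareto-vs-completeness} for \pareto-\RR\ and unit utilities with $\finalutil=|V|$ for \maxutil-\RR. The only (inessential) difference is where the cardinality cap sits --- the paper gives the independent-set place only $k$ units of $\ser^\star$, while you cap the discard place at $|V|-k$ units --- both enforce the same count.
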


\appendixproofwithstatement{prop:pareto_two_locs}{\propparetotwolocs*}{
We present a reduction from \textsc{Independent Set} to \pareto-\RR\ that is similar to the proof of  
\cref{prop_feas_hard_m}.
Let $(G,k)$ be our input instance with~$G=(V,E)$. 

We set $V$ as the set of families, $E \cup \{\ser^\star\}$ as the set of services, and $\LOC=\{\loc^\star,\loc\}$ as the set of places, with both places acceptable for each family. Place~$\loc^\star$ offers exactly one unit of each service in~$E$, and offers~$k$ units of service~$\ser^\star$. 
Place~$\loc$ offers offers $|V|-k$ units of every service. 
Moreover, each family~$v \in V$ requires one unit of each service associated with its incident edges, and one unit of service~$\ser^\star$. 

Note that $\loc$ can accommodate an arbitrary set of~$|V|-k$ families (but not more), while $\loc^\star$ can accommodate at most $k$ families corresponding to an independent set. Thus,
$G$ admits an independent set of size~$k$ if and only if~$I$ admits a feasible and complete assignment. From this, the result for \pareto-\RR\ follows by \cref{obs:Pareto-vs-completeness}.

By adding \uniform\ utilities to the constructed instance and setting the desired utility as $\finalutil=|V|$, we obtain an instance where an allocation reaches the desired utility if and only if it is complete. From this, the result for \maxutil-\RR\ follows.
}

In the reduction proving \cref{prop:pareto_two_locs}, the value~$\maxcap$ is unbounded. Next, we show a reduction from \textsc{3-Coloring} proving that even the case when $\maxcap=1$ is \NPh\ if there are at least~3 places.

\begin{restatable}[\appsymb]{theorem}{thmparetolocmaxcap}\label{thm:pareto_loc_maxcap}
\pareto- and \maxutil-\RR\ are \NPh\ even when $m = 3, \maxcap = 1$, there are no lower quotas, and families have \indifferent\ preferences or utilities, respectively.
\end{restatable}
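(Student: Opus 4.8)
The plan is to reduce from \textsc{3-Coloring}: given a graph $G=(V,E)$, decide whether $V$ can be partitioned into three independent sets. First I would build an instance of \pareto-\RR\ with \zerobound\ as follows. The set of families is $V$; the set of services is $E$, with one service $\ser_e$ for each edge $e\in E$; and there are exactly three places $\loc_1,\loc_2,\loc_3$, each offering one unit of every service (so every upper quota vector is the all-ones vector, giving $\maxcap=1$) and with all lower quotas zero. Family $v$ requires one unit of $\ser_e$ for each edge $e$ incident to $v$ and zero units of every other service, so $\maxreq=1$ as well. Every family finds all three places acceptable and is indifferent among them, i.e.\ preferences are \indifferent. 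The three places model the three colour classes, and the point of the edge-services is that assigning both endpoints of an edge $e$ to the same place forces that place to load $2$ on $\ser_e$, which is infeasible since its upper quota is $1$.

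Next I would establish the correspondence: $G$ is $3$-colourable if and only if the constructed instance admits a feasible, acceptable, and complete assignment. For the forward direction, a proper colouring $c\colon V\to\{1,2,3\}$ gives the assignment $v\mapsto\loc_{c(v)}$; for each place $\loc_i$ and edge $e=\{u,v\}$, at most one of $u,v$ has colour $i$, so the load of $\loc_i$ on $\ser_e$ is at most $1$, and the assignment is feasible (the lower quotas are zero), acceptable, and complete. Conversely, from a feasible, acceptable, complete assignment $\asg$ I would read off $c(v)=i$ whenever $\asg(v)=\loc_i$; if some edge $\{u,v\}$ were monochromatic then the corresponding place would carry load $2$ on $\ser_{\{u,v\}}$, contradicting feasibility, so $c$ is a proper $3$-colouring.

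To conclude the \pareto-\RR\ case I would invoke \cref{obs:Pareto-vs-completeness}: the preferences are \indifferent\ and hence \dichotomous, so the existence of a feasible, acceptable, complete assignment can be decided by solving \pareto-\RR; together with the equivalence above this yields a polynomial-time reduction from \textsc{3-Coloring}, proving \NPh{}ness already for $m=3$, $\maxcap=\maxreq=1$, \zerobound, and \indifferent\ preferences. For \maxutil-\RR\ I would keep the same instance, set all utility values to $1$ (giving equal utilities), and set the target $\finalutil=|V|$; a feasible assignment attains utility at least $\finalutil$ precisely when it assigns all $|V|$ families, i.e.\ when it is complete, so the same equivalence gives \NPh{}ness for \maxutil-\RR\ with equal utilities.

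The argument is short, and the only genuinely delicate point is that the reduction produces instances with \zerobound, where the empty assignment is always feasible, so completeness cannot be enforced through feasibility alone; this is exactly why \cref{obs:Pareto-vs-completeness} (and the $\finalutil=|V|$ device on the utility side) is needed, and the rest of the verification is routine. I would also note explicitly that $\noser=|E|$ is unbounded in this construction, which is consistent with the theorem statement and, in view of the \FPT\ result of \cref{thm:fptservmaxcaploc} for parameter $m+\noser+\maxreq$, unavoidable.
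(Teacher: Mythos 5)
Your proposal is correct and follows essentially the same route as the paper's proof: the identical reduction from \textsc{3-Coloring} with families $=V$, services $=E$, three places with all-ones upper quotas, the feasibility--proper-colouring equivalence, \cref{obs:Pareto-vs-completeness} for the Pareto case, and equal utilities with $\finalutil=|V|$ for \maxutil-\RR. No gaps to report.
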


\appendixproofwithstatement{thm:pareto_loc_maxcap}{\thmparetolocmaxcap*}{
We present a reduction from \textsc{3-Coloring}; let $G=(V,E)$ be the input graph. We create an instance~$I$ of \pareto-\RR\ with \indifferent\ preferences and without lower quotas as follows. 

First, we set~$V$ as the set of families, $E$ as the set of services, and $\LOC=\{\loc_1, \loc_2, \loc_3\}$ as the set of places, with all places acceptable for each family. Each place offers exactly one unit of each service in~$E$, 
and each family~$v \in V$ requires one unit of each service associated with its incident edges. 

Notice that a feasible assignment cannot assign two families corresponding to adjacent vertices to the same place, as the edge connecting them represents a service required by both of them. Thus, the families assigned to a given place must form an independent set in~$G$. 

We claim that a feasible and complete assignment for~$I$ exists if and only if $G$ admits a proper 3-coloring. 
First, if $\asg$ is a complete and feasible assignment, then it yields a proper 3-coloring of~$G$ by the above reasoning. 
Second, if $G$ is 3-colorable, then a 3-coloring of~$G$ yields a feasible and complete assignment, since there is no service from which a color class requires more than one unit. Since the families have \indifferent, and hence, \dichotomous\ preferences, by \cref{obs:Pareto-vs-completeness}
we can decide the 3-colorability of~$G$ by solving \pareto-\RR\ on~$I$.

To obtain \NPh{}ness for \maxutil-\RR, we set \uniform\ utilities and $\finalutil=|V|$ as the desired utility, so that an assignment has utility at least~$\finalutil$ exactly if it is complete. The result then follows from the above reasoning.
}

Contrasting the intractability 
result of \cref{prop:pareto_two_locs} for $m=2$,
we show that a simple, greedy algorithm solves \pareto-\RR\ for $m=1$ in polynomial time assuming that there are no lower quotas.

\begin{restatable}[\appsymb]{proposition}{thmparetoPmone}\label{thm:paretoPm1}
\pareto-\RR\ for $m=1$ is polynomial-time solvable if  there are no lower quotas.
\end{restatable}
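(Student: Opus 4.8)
The plan is to give a structural characterization of the Pareto-optimal assignments in this setting and then observe that a simple greedy procedure produces one. Write~$\loc$ for the unique place, with upper quota vector~$\capl$. Since there are no lower quotas, an assignment is feasible exactly when the set of families sent to~$\loc$ can be accommodated by~$\loc$. Because $m=1$, the preference list of each family~$\fa_i$ carries only the information whether $\fa_i$ finds~$\loc$ acceptable (in which case $\loc \succ_i \bot$) or not. Hence a feasible and acceptable assignment~$\asg$ is fully described by the set $S_\asg := \asg^{-1}(\loc)$, which must consist of families that find~$\loc$ acceptable and must be such that~$\loc$ can accommodate~$S_\asg$; call any such set a \emph{candidate set}. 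As~$\emptyset$ is always a candidate set (the quotas being non-negative), a feasible and acceptable assignment always exists, so the algorithm never has to report that none exists.

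The key observation is that a feasible and acceptable assignment~$\asg$ is Pareto-optimal if and only if~$S_\asg$ is inclusion-maximal among all candidate sets. Indeed, suppose~$\asg'$ is a Pareto-improvement of~$\asg$. For each family~$\fa_i$ with $\asg(\fa_i)=\loc$ we have $\asg'(\fa_i)\succeq_i\loc$, which forces $\asg'(\fa_i)=\loc$ since $\loc \succ_i \bot$; thus $S_{\asg'}\supseteq S_\asg$, and the existence of a strictly improved family forces $S_{\asg'}\supsetneq S_\asg$. Conversely, if a candidate set~$S'$ satisfies $S'\supsetneq S_\asg$, then the assignment sending exactly the families of~$S'$ to~$\loc$ is feasible, acceptable, and a Pareto-improvement of~$\asg$: every family in $S'\setminus S_\asg$ strictly improves, families in $S_\asg$ keep~$\loc$, and every family outside~$S'$ is also outside~$S_\asg$, so~$\asg$ and~$\asg'$ agree on it. Therefore~$\asg$ is Pareto-optimal precisely when no candidate set strictly contains~$S_\asg$.

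It remains to compute, in polynomial time, an inclusion-maximal candidate set, for which the obvious greedy algorithm suffices: initialize $S:=\emptyset$; process the families in arbitrary order, adding the current family~$\fa_i$ to~$S$ whenever $\fa_i$ finds~$\loc$ acceptable and~$\loc$ can still accommodate $S\cup\{\fa_i\}$; finally output the assignment that sends the families of~$S$ to~$\loc$ and leaves every other family unassigned. This assignment is feasible and acceptable by construction. For maximality, let~$\fa_j$ be any family that finds~$\loc$ acceptable with $\fa_j\notin S$, and let~$S_j\subseteq S$ be the content of~$S$ at the moment~$\fa_j$ was processed; then~$\loc$ could not accommodate $S_j\cup\{\fa_j\}$, so there is a service~$\ser_k$ with $\reqf_j[k]+\sum_{\fa_i\in S_j}\reqf_i[k]>\capl[k]$, and since $S\supseteq S_j$ the same service witnesses that~$\loc$ cannot accommodate $S\cup\{\fa_j\}$. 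Hence no candidate set strictly contains~$S$, and by the characterization above the output is Pareto-optimal. Each accommodation check costs $O(\noser)$ time and there are $O(n)$ of them, so the algorithm runs in $O(n\noser)$ time. There is no real obstacle beyond setting up this characterization; the substance of the proposition is that, for $m=1$ and in the absence of lower quotas, Pareto-optimality collapses to inclusion-maximality of the set of assigned families.
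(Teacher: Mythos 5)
Your proof is correct and follows essentially the same route as the paper's: both characterize Pareto-optimality for $m=1$ without lower quotas by the impossibility of adding further acceptable families (the paper calls this \emph{non-wastefulness}, you phrase it as inclusion-maximality of the assigned set, which is equivalent since accommodation is monotone under taking subsets), and both then obtain such an assignment by the same greedy pass over the families.
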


\appendixproofwithstatement{thm:paretoPm1}{\thmparetoPmone*}{
Let $\LOC = \{\loc_1\}$. 
The key observation is that a feasible and acceptable assignment~$\asg$ is Pareto-optimal if and only if it \emph{non-wasteful}, meaning that
there exists no unassigned family~$\fa_i \in \FA$ that finds $\loc_1$ acceptable and for which $\asg^{-1}(\loc_1) \cup \{\fa_i\}$ can be accommodated at~$\loc_1$.
First, if $\asg$ is non-wasteful, then 
we cannot accommodate any more families in~$\loc_1$ without disimproving some family, so $\asg$ is Pareto-optimal. 
Second, if $\asg$ is Pareto-optimal, then it must be non-wasteful as well: indeed, the existence of an unassigned family $\fa_i$ that finds $\loc_1$ acceptable and for which $\asg^{-1}(\loc_1) \cup \{\fa_i\}$ can be accommodated  at~$\loc_1$ 
implies that assigning~$\fa_i$ to~$\loc_1$ alongside the families in~$\asg^{-1}(\loc_1)$ yields a Pareto-improvement over~$\asg$. 

It remains to show that we can find a non-wasteful and feasible assignment in polynomial time. Start with an empty assignment and iterate over the families that find $\loc_1$ acceptable. If $\loc_1$ can accommodate such a family alongside everyone already assigned to $\loc_1$, assign the family to $\loc_1$. The resulting assignment is feasible and acceptable, because we never assign a family to a place it finds unacceptable. It is also non-wasteful, because if the place cannot accommodate a family~$\fa_i$ during the iteration when $\fa_i$ is considered, it also cannot accommodate~$\fa_i$ at any later point during the algorithm.
}

Our next results shows that \pareto-\RR\ can be solved efficiently if there are only a few families whose preferences contain ties, assuming that there are no lower quotas.
Recall that in the presence of lower quotas, \pareto-\RR\ is \NPh\ even if $\noser=1$ and $\tieno=0$, i.e., all preferences are strict, as shown in \cref{prop:wtbinpacking}.
The algorithm of \cref{prop:fpt_tieno} first applies serial dictatorship among families whose preferences do not contain ties, and then tries all possible assignments for the remaining families.

\begin{restatable}[\appsymb]{proposition}{propfpttieno}\label{prop:fpt_tieno}
\pareto-\RR\ is \FPT\ w.r.t.\ the number of families with ties $\tieno$, if there are no lower quotas.
\end{restatable}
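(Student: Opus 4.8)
The plan is to reduce, by bounded branching, to the tie‑free case, where serial dictatorship settles the problem in polynomial time, and then to argue that a serial dictatorship run on the strict‑preference families, combined with a suitably guessed accommodation of the $\tieno$ families with ties, produces a feasible, \acceptable, Pareto‑optimal assignment.

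Let $T=\{\fa_i\in\FA:\ \succeq_i \text{ contains a tie}\}$, so $|T|=\tieno$, and $S=\FA\setminus T$. First I would record the tie‑free base case: fix an arbitrary order of $S$ and process its families one by one, assigning each to its most preferred \acceptable\ place that still has room for it on top of what is already assigned there (leaving it unassigned if no such place exists). Since there are no lower quotas, the resulting partial assignment $\asg_0$ is feasible and \acceptable, and the standard exchange argument shows it is Pareto‑optimal on the sub‑instance induced by $S$: if a feasible assignment $\asg'$ dominated $\asg_0$, then for the first family $\fa_i$ in the processing order that is strictly better off under $\asg'$, every earlier family occupies the same place in $\asg'$ as in $\asg_0$ by strictness, so the place $\asg'(\fa_i)$ was available to $\fa_i$ when it was processed, contradicting the greedy choice.

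The second phase handles $T$. The structural claim I would establish is that there is a feasible, \acceptable, Pareto‑optimal assignment $\asg$ of the whole instance such that $\asg$ restricted to $S$ is some serial dictatorship outcome $\asg_0$ as above, while $\asg$ restricted to $T$ is obtained by processing the tie‑families in some order and placing each $\fa_i\in T$ into the partial assignment built so far at a place lying in the best tier of $\succeq_i$ that still has free capacity (or leaving it unassigned). Granting this, it suffices to guess a processing order of $T$ and, for each tie‑family in turn, whether it is left unassigned, co‑located with one of the at most $\tieno-1$ tie‑families already placed (which pins its place), or sent to a ``fresh'' place in its current best available tier; for each guess one runs the corresponding constrained greedy procedure and keeps a feasible, \acceptable, Pareto‑optimal output if one results. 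Since a Pareto‑optimal assignment always exists when there are no lower quotas, correctness of the structural claim guarantees some guess succeeds, and the number of guesses is $\tieno^{O(\tieno)}$ times the number of essentially distinct choices of fresh places.

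The main obstacle is exactly the structural claim, and in particular the part that makes the search \FPT\ rather than merely \XP: one must show that when a tie‑family opens a fresh place only a number of candidate fresh places bounded by a function of $\tieno$ alone needs to be considered — intuitively, fresh places ranked identically by all remaining tie‑families are interchangeable, and one must argue that only $O(\tieno)$ equivalence classes of such places are relevant, since all other families ($S$ and already‑placed tie‑families) are unaffected. Proving this, together with the claim that $S$ may be pinned to a serial dictatorship outcome without forfeiting Pareto‑optimality, via an exchange/augmenting‑path argument (and checking that each completed branch indeed yields a Pareto‑optimal assignment), is the technical heart of the proof; the remaining steps are routine bookkeeping.
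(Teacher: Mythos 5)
Your first phase coincides with the paper's: run serial dictatorship on the families with strict preferences, and the exchange argument you sketch (first family that improves, all earlier ones pinned) is exactly how the paper justifies fixing that part of the assignment. The gap is in your second phase, and it is the part you yourself flag as the ``technical heart'': you would need that when a tie-family opens a fresh place, only $f(\tieno)$ candidate places matter because places ranked identically by the remaining tie-families are interchangeable. That is not true in this model: two places in the same indifference tier can have completely different residual upper quotas (vectors over $t$ services, already partially consumed by the strict-preference families), so they are not interchangeable, and the number of relevant residual-capacity profiles is not bounded by any function of $\tieno$. In addition, your plan ends with ``checking that each completed branch indeed yields a Pareto-optimal assignment'', but verifying Pareto-optimality of a given assignment is itself not a routine step and no efficient test is available to you; an FPT algorithm must construct an assignment that is provably Pareto-optimal rather than filter candidates by such a check.

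The paper closes exactly this hole differently: after the serial dictatorship phase it deletes the strict-preference families, reduces each place's upper quota by the load $\satur(\loc_j,\asg_\succ)$, and observes that the residual instance contains only the $\tieno$ tie-families, so it can invoke \cref{prop:fpt-n} (FPT in the number of families). That algorithm guesses the partition of the (few) families into groups and then assigns groups to places via a maximum-weight matching in a bipartite graph over all places, which handles an unbounded, non-interchangeable set of places with no bounded-candidate-set argument; Pareto-optimality of the output comes for free from the maximum-utility construction of \cref{rem:maxutil-vs-pareto} combined with the same ``first strict family that improves'' argument you gave. So your phase~1 and its justification are fine, but phase~2 needs to be replaced by (or reduced to) this residual-instance-has-$\tieno$-families argument; as written, the interchangeability claim it rests on is false and the proposal does not yield an FPT algorithm.
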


\appendixproofwithstatement{prop:fpt_tieno}{\propfpttieno*}{
Our approach is to first apply serial dictatorship for each family whose preferences do not contain ties, and then try all possible assignments for the remaining families. 

Let $\FA_\succ$ denote the set of those families in our input instance~$I$ whose preferences are a linear order over the subset of~$\LOC$ they find acceptable (i.e., whose preference list does not contain ties).
By re-indexing the set~$\FA$, we may assume that $\FA_{\succ}=\{\fa_1,\dots,\fa_{n-\tieno} \}$. 

In the first phase of the algorithm,  starting from an empty assignment, we iterate over $i=1,\dots, n-\tieno$ and, at the $i$-th iteration, assign the family $\fa_i$ to the place most preferred by~$\fa_i$ among all places in~$\LOC$ that are acceptable for~$\fa_i$ and can accommodate~$\fa_i$ alongside the families already assigned to it. Let $\asg_\succ$ denote the assignment obtained at the end of this iteration.

In the second phase of the algorithm, we delete the families~$\FA_{\succ}$ from the instance, and reduce the upper quota of each place~$\loc_j \in \LOC$ with the load vector of~$\satur(\loc_j,\asg_\succ)$. We then solve \pareto-\RR\- on the obtained instance~$I'$ by applying \cref{prop:fpt-n}; let $\asg_\sim$ denote the returned assignment (note that there always exists a feasible and acceptable assignment, since there are no lower quotas for~$I$). We return the assignment $\asg:=\asg_\sim \cup \asg_\succ$.\footnote{Since the domains of~$\asg_\sim$ and $\asg_\succ$ partition~$\FA$, taking the union of~$\asg_\sim$ and $\asg_\succ$ yields an assignment for~$I$.}

The feasibility and acceptability follows from the feasibility and acceptability of~$\asg_\succ$ in~$I$, and of~$\asg_\sim$ in~$I'$, as well as the fact that the upper quota for~$\loc_j$ in~$I'$ is set to~$\capl_j-\satur(\loc_j,\asg_\succ)$. To see that $\asg$ is Pareto-optimal as well, assume for the sake of contradiction that some assignment~$\asg'$ is a Pareto-improvement over~$\asg$. 

First, observe that $\asg'(\fa_i)=\asg_\succ(\fa_i)$ for each $\fa_i \in \FA_\succ$: indeed, assuming otherwise, we obtain that there exists some family~$\fa_i$ such that $\asg'(\fa_{i'})=\asg_\succ(\fa_{i'})$ for each $i' <i$, but $\asg'(\fa_i) \succ_i \asg_\succ(\fa_i)$; however, this either contradicts the definition of~$\asg_\succ$ or the feasibility or acceptability of~$\asg'$. Therefore, we know that the restriction of~$\asg'$ to~$I'$ must be a Pareto-improvement over~$\asg_\sim$ which contradicts the correctness of our algorithm for \cref{prop:fpt-n}.
Hence, our algorithm always produces a correct output.
}

\paragraph{Maximizing utility.}
\label{sec:maxutil}

Let us start with a simple fixed-parameter tractable algorithm for \maxutil-\RR\ w.r.t~$n$, the number of families. \cref{prop:fpt-n} presents an FPT algorithm for parameter~$n$ based on the following approach: We first guess the partitioning~$\mathcal{F}$  of families arising from a maximum-weight feasible assignment, and then we map the partitions of~$\mathcal{F}$ to the places by computing a maximum-weight matching in an auxiliary bipartite graph.

\begin{restatable}[\appsymb]{proposition}{propfptn}\label{prop:fpt-n}
\feasible-, \pareto-, and \maxutil-\RR\ are \FPT\ w.r.t.~$n$.
\end{restatable}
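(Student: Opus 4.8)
The plan is to solve the optimization version of \maxutil-\RR\ — compute a feasible assignment of maximum total utility, or report that none exists — in \FPT\ time with respect to~$n$; this immediately yields the claim for all three problems. Indeed, \feasible-\RR\ is answered by checking whether the optimization routine reports that a feasible assignment exists (equivalently, run it with all utilities set to~$0$); the decision version of \maxutil-\RR\ is answered by comparing the computed optimum against~$\finalutil$; and \pareto-\RR\ follows from \cref{rem:maxutil-vs-pareto}, whose reduction leaves~$\FA$ (and hence~$n$) unchanged and modifies only the utility vectors, so that an \FPT\ algorithm w.r.t.~$n$ that optimizes the total utility also solves \pareto-\RR\ in \FPT\ time w.r.t.~$n$.

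For the core algorithm, observe that every assignment~$\asg$ partitions~$\FA$ into the (possibly empty) classes $\asg^{-1}(\loc_1),\dots,\asg^{-1}(\loc_m)$ together with the unassigned set~$\asg^{-1}(\bot)$. We therefore enumerate all candidate partitions: formally, we iterate over all set partitions of~$\FA\cup\{\star\}$, where~$\star$ is a fresh marker indicating the unassigned block, and we read off the assigned classes $F_1,\dots,F_k$ (the blocks not containing~$\star$), discarding the guess if $k>m$. The number of partitions of an $(n{+}1)$-element set is at most $(n{+}1)^{n{+}1}$, a function of~$n$ only. For a fixed guess $(F_1,\dots,F_k)$ we build a bipartite graph~$H$ whose left side consists of $F_1,\dots,F_k$ together with $m-k$ additional \emph{empty} blocks $E_1,\dots,E_{m-k}$, and whose right side is~$\LOC$. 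Writing $\boldsymbol{q}_\ell=\sum_{\fa_i\in F_\ell}\reqf_i$ for the aggregate requirement of~$F_\ell$, we add the edge $F_\ell\loc_j$ exactly when $\caplmin_j\le\boldsymbol{q}_\ell\le\capl_j$, with weight $\sum_{\fa_i\in F_\ell}\profit_i[j]$, and we add the edge $E_\ell\loc_j$ exactly when $\caplmin_j$ is the zero vector, with weight~$0$. Since $H$ has $m$ vertices on each side, we compute a maximum-weight perfect matching of~$H$ (e.g.\ by the Hungarian algorithm): if none exists the guess is infeasible, and otherwise the matching encodes a feasible assignment for~$I$ of utility equal to the matching's weight (assign the families of~$F_\ell$ to the place matched to~$F_\ell$, and leave the block marked by~$\star$ unassigned). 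We output the best assignment found over all guesses.

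Correctness rests on the fact that the guesses range over exactly the partitions induced by feasible assignments: given a feasible~$\asg$, let $(F_1,\dots,F_k)$ be its nonempty preimages; matching each~$F_\ell$ to the place~$\loc_j$ with $F_\ell=\asg^{-1}(\loc_j)$ and matching the empty blocks to the remaining places — each of which receives load~$\boldsymbol 0$ under~$\asg$ and hence has zero lower quota by feasibility, so the corresponding edge is present — gives a perfect matching of~$H$ of weight~$\util(\asg)$. Conversely, every perfect matching of~$H$ yields a feasible assignment, as each place's load then lies between its lower and upper quota by construction. Hence the maximum over all guesses equals the optimal total utility (and a guess admits a perfect matching iff some feasible assignment induces that partition). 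For the running time, there are at most $(n{+}1)^{n{+}1}$ guesses, and for each the construction of~$H$ and the matching computation run in time polynomial in~$|I|$, so the total running time is \FPT\ w.r.t.~$n$. The only genuinely delicate points — and the ones I would spell out carefully — are the treatment of the lower quotas of places that receive no family, handled by the empty-block gadget, and the observation that although~$H$ depends on~$m$, the matching step is only polynomial in~$m$, so the sole super-polynomial factor is the number of partitions of~$\FA$, which depends on~$n$ alone.
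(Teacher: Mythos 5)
Your proposal is correct and follows essentially the same route as the paper's proof: guess the partition of families induced by an optimal feasible assignment (at most $n^{O(n)}$ guesses) and then assign blocks to places via a maximum-weight bipartite matching, invoking \cref{rem:maxutil-vs-pareto} for \pareto-\RR. The only difference is cosmetic — you enforce the lower quotas of empty places with dummy ``empty'' blocks and a perfect matching, whereas the paper instead requires the matching to cover all places with nonzero lower quota.
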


\appendixproofwithstatement{prop:fpt-n}{\propfptn*}{
We present an algorithm for \maxutil-\RR, the result then follows from \cref{rem:maxutil-vs-pareto}.

Assume that our input instance~$I$ admits a feasible assignment, and let~$\asg$ be such an assignment with maximum utility. 
Our algorithm first guesses the set family~$\mathcal{F}=\{\asg^{-1}(\loc_j):\loc_j \in \LOC,\asg(\loc_j) \neq \emptyset\}$; since $\mathcal{F}$ is a partition of a subset of~$\FA$, there are $O(n^n)$ possible guesses to try.

The algorithm next creates the following edge-weighted bipartite graph $G=(\mathcal{F} \cup \LOC,E)$. A set of families $\Gamma \in \mathcal{F}$ is adjacent in~$G$ to some place~$\loc_j$ 
if and only if $\caplmin_j \leq \sum_{\fa_i \in \Gamma} \reqf_i \leq \capl_j$, and we set the weight of the edge connecting~$\Gamma$ and~$\loc_j$ (if it exists) as~$\sum_{\fa_i \in \Gamma} \profit_i[j]$. 
We compute a maximum-weight matching~$M^\star$ among those that cover the set~$\FA^\star=\{\fa_i \in \FA: \caplmin_i \neq 0\}$;
this can be done in $O((n+m)^3)$ time using e.g., the Hungarian algorithm. The algorithm outputs the assignment that to each place~$\loc_j$ covered by~$M^\star$ assigns the families contained in  $\Gamma \in \mathcal{F}$ where $\{\Gamma,\loc_j\} \in M^\star$.

The correctness of this algorithm follows from the observation that every matching~$M$ in~$G$ that covers~$\FA^\star$ corresponds to a feasible assignment whose utility is the weight of~$M$, and vice versa; note that feasibility is guaranteed by the definition of the edge set of~$G$ and the condition that the matching covers $\FA^\star$.
The total running time of the algorithm is $O(n^{n+3})$.
}

Let us now present a generalization of \cref{thm:onetboundedrmaxilp} for \maxutil-\RR\ restricted to \indifferent\ preferences. The algorithm for \cref{thm:fptservmaxreq} is based upon an $N$-fold IP formulation for this problem.
By \cref{rem:maxutil-vs-pareto}, the obtained algorithm also implies tractability for \pareto-\RR\ for \uniform\ utilities.

\begin{restatable}[\appsymb]{theorem}{fptservmaxreq}
\label{thm:fptservmaxreq}
\feasible-\RR, 
\pareto-\RR\ for \indifferent\ preferences, 
and 
\maxutil-\RR\ for \identical\ utilities, 
are \FPT\ w.r.t.~\ $t + \maxreq$.
\end{restatable}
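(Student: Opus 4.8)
The plan is to model each of the three problems as an $N$-fold integer program in which there is exactly one ``brick'' per place, and then solve it with the algorithm of Hemmecke, Onn and Romanchuk~\cite{HOR-2013}. This is a natural fit, because an $N$-fold IP can be solved in time that is polynomial in the number $N$ of bricks and otherwise depends only on the brick size, the numbers of local and global constraints, and the largest entry of the constraint matrix; so it suffices to produce a formulation whose brick size, number of constraints and coefficient bound are all bounded by a function of $t+\maxreq$, while the number of bricks (places) and the right-hand sides are allowed to be large.

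First I would reduce all three problems to one common task: compute a feasible assignment placing the maximum possible number of families. This is legitimate in each case. For \feasibleRR\ we only need to decide whether \emph{some} feasible assignment exists, i.e.\ whether this maximum is well defined. For \maxutilRR\ with \identical\ utilities every placed family contributes the same positive utility~$\mu$, so the maximum total utility equals $\mu$ times the maximum number of placeable families, and we answer ``yes'' iff this is at least~$\finalutil$. For \paretoRR\ with \indifferent\ preferences every assignment is acceptable, and a feasible assignment of maximum size is automatically Pareto-optimal: a Pareto-improvement over it would have to keep every already-placed family placed while placing at least one more family (since under total indifference a family is made strictly better off only by being moved from unassigned to assigned), and such an assignment would be feasible and strictly larger, a contradiction; conversely, a feasible, acceptable, Pareto-optimal assignment exists iff a feasible assignment exists. (This mirrors the single-service argument behind \cref{thm:onetboundedrmaxilp}.)

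Next I would set up the IP. Group the families by their requirement vectors: since every coordinate of a requirement vector lies in $\{0,\dots,\maxreq\}$, there are at most $\tau\le(\maxreq+1)^{t}$ distinct \emph{types}, and for a type $\kappa\in\{0,\dots,\maxreq\}^{t}$ let $n_\kappa$ be the number of families of that type. Add a dummy place~$\loc_{m+1}$ with lower quota~$\mathbf 0$ and huge upper quota (say $\sumreqs$ on every service), so that placing a family at~$\loc_{m+1}$ models leaving it unassigned. For each $j\in[m+1]$ introduce non-negative integer variables $x^{j}_{\kappa}$ (the number of type-$\kappa$ families placed at~$\loc_j$) plus $2t$ non-negative slack variables; the \emph{local} constraints of brick~$j$ express, for every service $\ser_k$,
\[
\textstyle\sum_{\kappa} x^{j}_{\kappa}\,\kappa[k]\ \ge\ \caplmin_j[k]
\qquad\text{and}\qquad
\textstyle\sum_{\kappa} x^{j}_{\kappa}\,\kappa[k]\ \le\ \capl_j[k]
\]
(turned into equalities by the $2t$ slack variables), i.e.\ exactly $\caplmin_j\le\satur(\loc_j,\cdot)\le\capl_j$; the \emph{global} constraints are $\sum_{j\in[m+1]} x^{j}_{\kappa}=n_\kappa$ for every type~$\kappa$; and the objective is to maximize $\sum_{j\in[m]}\sum_{\kappa}x^{j}_{\kappa}$, the number of families placed at a real place. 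All $m+1$ bricks share the same local and global constraint matrices (only their right-hand sides differ), the brick size is $\tau+2t$, there are $2t$ local and $\tau$ global constraints, and every matrix entry has absolute value at most~$\maxreq$; all of these are bounded by a function of $t+\maxreq$, whereas $N=m+1$ and all right-hand sides are polynomial in the input. Hence~\cite{HOR-2013} solves this program in time $g(t,\maxreq)\cdot\mathrm{poly}(|I|)$, which is \FPT\ for $t+\maxreq$. From a returned solution we build an assignment by placing, for each real place~$\loc_j$ and each type~$\kappa$, an arbitrary set of $x^{j}_{\kappa}$ not-yet-placed type-$\kappa$ families at~$\loc_j$; the local constraints make this assignment feasible and its size equals the objective value, while infeasibility of the IP means that no feasible assignment exists. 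Reading off the three answers is then immediate.

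The main obstacle is to fit the \emph{unbounded} number of places into the $N$-fold framework cleanly. The places must play the role of the bricks (so that their number only affects the running time polynomially, not as a parameter), which forces the ``how many type-$\kappa$ families are placed'' bookkeeping into the global constraints; to keep all bricks structurally identical — as $N$-fold IP requires — these global constraints are made equalities by routing the unplaced families through the extra dummy brick. The remaining ingredients (bounding the number of types, the slack bookkeeping, and the correctness argument reducing all three problems to a maximum-size feasible assignment) are routine.
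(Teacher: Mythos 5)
Your proposal is correct and follows essentially the same route as the paper's proof: group families into at most $(\maxreq+1)^t$ requirement types, write an $N$-fold IP with one brick per place (type-count variables plus slack variables for the quotas, global constraints bounding the number of families of each type), and solve it via Hemmecke, Onn and Romanchuk, reducing \pareto-\RR\ and \maxutil-\RR\ under equal preferences/utilities to maximizing the number of assigned families. The only differences are cosmetic: the paper keeps the global type constraints as inequalities (citing that $N$-fold IPs handle this) and encodes the lower quotas as box constraints on the slack variables, whereas you use a dummy place and $2t$ slack equalities, and you argue Pareto-optimality of a maximum-size assignment directly rather than via the paper's Observation~\ref{rem:maxutil-vs-pareto}.
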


\appendixproofwithstatement{thm:fptservmaxreq}{\fptservmaxreq*}{

We present an $N$-fold IP for \maxutil-\RR\ with \identical\ utilities as follows.
First, let $\R=\{\reqf_i: \fa_i \in \FA\}$ contain all requirement vectors associated with some family in our input instance.
By possibly re-indexing the families in~$\FA$, we can ensure $\R=\{\reqf_1,\dots,\reqf_{|\R|}\}$. 
Note that $|\R| \leq (\maxreq+1)^t$ due to the definition of~$\maxreq$. Let also $n_{\reqf}$ denote the number of families with requirement vector~$\reqf \in \R$.

We introduce a variable $x(\loc_j,\reqf)$ for each $\loc_j \in \LOC$ and $\reqf \in \R$ which is interpreted as the number of families with requirement vector~$\reqf$ assigned to~$\loc_j$. Additionally, we introduce ``slack'' variables $y(\loc_j,\ser_k)$ for each place $\loc_j \in \LOC$ and service $\ser_k \in \SER$ interpreted as the available free capacity for service~$\ser_k$ at place~$\loc_j$.
Consider the following integer program \ref{LP:nfold}:
\begin{linenomath*}
\begin{equation}
\leqnomode
\tag{ILP$^{N}$} 
\label{LP:nfold}
\max \sum_{\loc_j \in \LOC,\reqf \in \R} x(\loc_j,\reqf) \quad \textrm{such that} \qquad \qquad \phantom{ssssssss}
\end{equation}
\begin{align}
\notag \\[-22pt]
\label{con:cap-notexceeded}
& \forall \loc_j\in \LOC,\ser_k \in \SER: \!\!\!
&& y(\loc_j,\ser_k) + \sum_{\reqf \in \R} \reqf[k] x(\loc_j,\reqf) = \capl_j[k] \\
\label{con:fam-notoverused}
& \forall \reqf \in \R: 
&& \sum_{\loc_j \in \LOC} x(\loc_j,\reqf) \leq n_{\reqf} \\
\label{con:x-nonneg}
& \forall \loc_j \in \LOC, \reqf \in \R:
\!\!\!
&& 0 \leq x(\loc_j,\reqf) \\
\label{con:slack-y-bounds}
& \forall \loc_j\in \LOC,\ser_k \in \SER:
\!\!\!
&& 0 \leq y(\loc_j,\ser_k) \leq \capl_j[k]-\caplmin_j[k] 
\end{align}
\end{linenomath*}

\begin{claim}
There exists an integer solution to~\ref{LP:nfold} with value~$\finalutil$ if and only if there exists a feasible assignments for~$I$ with utility~$\finalutil$.
\end{claim}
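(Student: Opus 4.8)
The plan is to prove the equivalence by an explicit, type‑by‑type translation between integer points of~\ref{LP:nfold} and feasible assignments, using that under \identical\ utilities the total utility of an assignment equals, up to the common positive scaling factor (which we may normalise to~$1$, replacing $\finalutil$ by $\finalutil$ divided by the common value otherwise), simply the number of assigned families --- which is exactly the objective of~\ref{LP:nfold}.

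For the ``only if'' direction I would start from an integer solution $(x,y)$ of~\ref{LP:nfold} of value~$\finalutil$ and build an assignment~$\asg$ as follows: process the requirement vectors $\reqf\in\R$ one after another, and for each~$\reqf$ hand out, to every place~$\loc_j\in\LOC$, exactly $x(\loc_j,\reqf)$ of the still‑unassigned families whose requirement vector is~$\reqf$, leaving the rest unassigned. Constraint~\eqref{con:fam-notoverused} guarantees $\sum_{\loc_j\in\LOC}x(\loc_j,\reqf)\le n_\reqf$, so enough families of each type are available. For feasibility, fix $\loc_j$ and $\ser_k$: by construction $\satur(\loc_j,\asg)[k]=\sum_{\reqf\in\R}\reqf[k]\,x(\loc_j,\reqf)$, which by~\eqref{con:cap-notexceeded} equals $\capl_j[k]-y(\loc_j,\ser_k)$, and then~\eqref{con:slack-y-bounds} gives $\caplmin_j[k]\le\capl_j[k]-y(\loc_j,\ser_k)\le\capl_j[k]$, i.e.\ $\caplmin_j\le\satur(\loc_j,\asg)\le\capl_j$. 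The number of assigned families is $\sum_{\loc_j\in\LOC,\reqf\in\R}x(\loc_j,\reqf)=\finalutil$, so $\asg$ attains utility~$\finalutil$.

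For the ``if'' direction, given a feasible assignment~$\asg$ of utility~$\finalutil$, I would set $x(\loc_j,\reqf)$ to be the number of families of requirement vector~$\reqf$ assigned by~$\asg$ to~$\loc_j$, and $y(\loc_j,\ser_k):=\capl_j[k]-\satur(\loc_j,\asg)[k]$. Then~\eqref{con:cap-notexceeded} holds by the definition of~$y$ together with $\satur(\loc_j,\asg)[k]=\sum_{\reqf\in\R}\reqf[k]\,x(\loc_j,\reqf)$; \eqref{con:fam-notoverused} holds since $\sum_{\loc_j\in\LOC}x(\loc_j,\reqf)$ counts the assigned families of type~$\reqf$, at most~$n_\reqf$ in total; \eqref{con:x-nonneg} is immediate; and \eqref{con:slack-y-bounds} follows from feasibility of~$\asg$, as $\caplmin_j[k]\le\satur(\loc_j,\asg)[k]\le\capl_j[k]$ yields $0\le y(\loc_j,\ser_k)\le\capl_j[k]-\caplmin_j[k]$. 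The objective value of this integer point is again the number of assigned families, namely~$\finalutil$.

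Neither direction is an obstacle; the single point that needs care is the identification of ``value of~\ref{LP:nfold}'' with ``utility of~$\asg$'', which is precisely where \identical\ utilities enter, handled by the rescaling mentioned above. The remaining (and only substantive) work left for the theorem --- as opposed to this claim --- is to check that~\ref{LP:nfold} has $N$‑fold form: one block per place~$\loc_j$, carrying the $|\R|+t$ variables $\{x(\loc_j,\reqf)\}_{\reqf\in\R}\cup\{y(\loc_j,\ser_k)\}_{\ser_k\in\SER}$ and the $t$ block‑local constraints~\eqref{con:cap-notexceeded}, linked by the $|\R|$ global constraints~\eqref{con:fam-notoverused}; since $|\R|\le(\maxreq+1)^t$, all block dimensions are bounded by a function of $t+\maxreq$ and all coefficients by~$\maxreq$, so the $N$‑fold integer programming algorithm of~\cite{HOR-2013} applies and gives the claimed running time (with $N=m$ and input size polynomial in the instance).
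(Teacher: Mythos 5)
Your proof is correct and follows essentially the same route as the paper: interpret $x(\loc_j,\reqf)$ as the number of families with requirement vector~$\reqf$ assigned to~$\loc_j$ and $y(\loc_j,\ser_k)$ as the unused capacity, check that constraints \eqref{con:cap-notexceeded}--\eqref{con:slack-y-bounds} encode exactly feasibility, and use that with \identical\ utilities the total utility is proportional to the number of assigned families (the paper states this as ``proportional'', you handle it by the same harmless rescaling). You simply spell out the two directions more explicitly than the paper's brief argument; no gap.
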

\begin{claimproof}
Interpret $x(\loc_j,\reqf)$ as the number of families with requirement vector~$\reqf$ assigned to some place~$\loc_j$, and $y(\loc_j,\ser_k)$ as the unused capacity of~$\loc_j$ for service~$\ser_k$. Then 
 constraints~(\ref{con:cap-notexceeded}) and~(\ref{con:slack-y-bounds}) for each $\loc_j\in \LOC$ and $\ser_k \in \SER$ express the condition that the total requirement for service~$\ser_k$ of  all families assigned to~$\loc_j$ should be at least~$\caplmin_j[k]$ and at most~$\capl_j[k]$. 
Constraints~(\ref{con:fam-notoverused}) and (\ref{con:x-nonneg}) express the condition that for each $\reqf \in \R$, the total number of families with requirement~$\reqf$ assigned to some place in~$\LOC$ should not exceed~$n_{\reqf}$. This shows that constraints (\ref{con:cap-notexceeded})--(\ref{con:slack-y-bounds}) together characterize feasibility assignments. 

It remains to note that since utilities are \identical, the total utility of an assignment is proportional to the number of families assigned, which is expressed by the objective function. 
\end{claimproof}

\begin{restatable}[\appsymb]{claim}{clmnfold}
\label{clm:nfold}
\ref{LP:nfold} is an $N$-fold IP for $N=m$ with constraint~$A^{(m)}$ where  $A=\left( 
\begin{array}{@{}c@{}}
I_{|\R|} \,\, 0\\[2pt]
A_\R \,\,\, I_t
\end{array}
\right)$ for some $A_\R$ with $||A_{\R}\||_\infty \leq \maxreq$, and 
$I_h$ denotes the identity matrix of size~$h \times h$ for each $h \in \NN$.
\end{restatable}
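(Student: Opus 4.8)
The plan is to read off the block structure of~\ref{LP:nfold} directly from its constraints and check that it matches the definition of an $N$-fold integer program with the stated brick matrix. I would index the bricks by the places: for each $\loc_j \in \LOC$, collect the variables of~\ref{LP:nfold} belonging to $\loc_j$ --- namely $x(\loc_j,\reqf_1),\dots,x(\loc_j,\reqf_{|\R|})$ followed by $y(\loc_j,\ser_1),\dots,y(\loc_j,\ser_t)$ --- into one brick vector $z_j$ of dimension $|\R|+t$. Concatenating $z_1,\dots,z_m$ yields the full vector of variables, so there are exactly $N=m$ bricks, all of the same dimension, and the claimed brick matrix must have $|\R|+t$ columns.

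First I would treat the constraints that stay inside a single brick. For a fixed $\loc_j$, constraint~\eqref{con:cap-notexceeded} is the system of $t$ equalities $\sum_{\reqf \in \R}\reqf[k]\,x(\loc_j,\reqf)+y(\loc_j,\ser_k)=\capl_j[k]$ for $k \in [t]$, which in matrix form reads $\bigl(A_\R \mid I_t\bigr)\,z_j = \capl_j$, where $A_\R$ is the $t \times |\R|$ matrix whose entry in row $k$, column $\ell$ is $\reqf_\ell[k]$. All entries of $A_\R$ lie in $\{0,1,\dots,\maxreq\}$, hence $\|A_\R\|_\infty \le \maxreq$ by the definition of $\maxreq$. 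The remaining inequalities~\eqref{con:x-nonneg} and~\eqref{con:slack-y-bounds} are just box constraints on the variables --- lower bound $0$ everywhere, upper bound $+\infty$ on each $x(\loc_j,\reqf)$ and $\capl_j[k]-\caplmin_j[k]$ on each $y(\loc_j,\ser_k)$ --- which is exactly the form of variable bounds allowed in an $N$-fold program; note that the $y$-variables together with these bounds already encode the lower quotas, which is what lets us state~\eqref{con:cap-notexceeded} as an equality.

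Next I would identify the linking block. Constraint~\eqref{con:fam-notoverused}, one row per requirement vector $\reqf \in \R$, reads $\sum_{j \in [m]} x(\loc_j,\reqf) \le n_\reqf$; the contribution of brick $z_j$ to the left-hand side is the vector of its $x$-coordinates, obtained by applying $\bigl(I_{|\R|} \mid 0\bigr)$ to $z_j$. Thus $\bigl(I_{|\R|} \mid 0\bigr)$ is the linking block common to all bricks, with right-hand side $(n_{\reqf_1},\dots,n_{\reqf_{|\R|}})^{\top}$ and relation $\le$; this relation is admissible in the $N$-fold framework, and if an equality system is preferred one can add a single dummy place with upper quota $\sum_{\fa_h \in \FA}\reqf_h$ and zero utility, turning~\eqref{con:fam-notoverused} into $\sum_j x(\loc_j,\reqf)=n_\reqf$ without changing the relevant parameters. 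Stacking the linking block on top of the local block gives precisely $A=\begin{pmatrix} I_{|\R|} & 0 \\ A_\R & I_t \end{pmatrix}$, so that the constraint matrix of~\ref{LP:nfold} is its $m$-fold product $A^{(m)}$; and the objective $\sum_{\loc_j \in \LOC,\reqf \in \R} x(\loc_j,\reqf)$ is linear (it applies the same vector $(\mathbf 1_{|\R|},\mathbf 0_t)$ to every brick), so~\ref{LP:nfold} is an $N$-fold IP of the claimed shape.

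I do not expect a genuine obstacle here: the argument is essentially a bookkeeping exercise matching the rows of~\ref{LP:nfold} against the blocks of an $N$-fold program, the only delicate point being the $\le$ in~\eqref{con:fam-notoverused}, which is handled as above. For the subsequent use of this claim it is worth recording in addition that the number of linking rows is $|\R| \le (\maxreq+1)^t$, the number of local rows is $t$, the brick dimension is $|\R|+t$, and $\|A\|_\infty \le \maxreq$ --- all bounded by a function of $t+\maxreq$ --- so that the $N$-fold IP algorithm of~\cite{HOR-2013} solves~\ref{LP:nfold} (and hence \feasible-, \pareto-, and \maxutil-\RR\ in the relevant restrictions) in time FPT with respect to $t+\maxreq$.
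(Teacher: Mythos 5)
Your proposal is correct and follows essentially the same route as the paper: the same brick decomposition by places with brick vector $(\vecx_j,\vecy_j)$, the same local block $(A_\R \; I_t)$ from \conref{con:cap-notexceeded}, the same linking block $(I_{|\R|} \; 0)$ from \conref{con:fam-notoverused}, and the same treatment of \conref{con:x-nonneg} and \conref{con:slack-y-bounds} as variable bounds. The only cosmetic difference is how the inequality in the linking constraint is justified --- the paper simply cites the full version of Knop et al.\ for the fact that an inequality row is admissible, while you additionally sketch a dummy-place workaround --- so there is nothing to fix.
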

\begin{claimproof}
Define the following matrices and vectors: 
\begin{linenomath*}
 \begin{align*}
A_{\R} &=(\reqf_1 \,\, \reqf_2 \,\, \dots \,\reqf_{|\R|}) \in \mathbb{N}^{\noser \times |\R|} \\
\vecx_j&=(x(\loc_j,\reqf_1) \,\,\, x(\loc_j,\reqf_2) \,\,\, \dots \,\,\, x(\loc_j,\reqf_{|\R|}))^\top
\in \mathbb{Z}^{|\R|} \\
\vecy_j&=(y(\loc_j,\ser_1) \,\,\, y(\loc_j,\ser_2) \,\,\, \dots \,\,\, y(\loc_j,\ser_\noser))^{\top} \in \mathbb{Z}^t
\end{align*}
\end{linenomath*}
Then constraints~(\ref{con:cap-notexceeded}) for all $\reqf \in \R$ but fixed $\loc_j \in \LOC$  can be written as
\begin{linenomath*} 
\begin{equation}
\label{eqn:AI-matrix-fixedj}
(A_{\R} \,\, I_\noser) \cdot
\left(
\begin{array}{@{}c@{}}
\vecx_j\\[2pt]
\vecy_j
\end{array}
\right) = \capl_j.
\end{equation}
\end{linenomath*}
Gathering (\ref{eqn:AI-matrix-fixedj}) for every $j \in [m]$, we obtain 
\begin{linenomath*}
\begin{equation}
\label{eqn:AI-matrix}
\left(
\begin{array}{cccc}
A_{\R} \,\, I_\noser & 0 & \dots & 0 
\\[2pt]
0 & A_{\R} \,\, I_\noser & \dots & 0 
\\[2pt] 
\vdots & \vdots & \ddots & \vdots 
\\[2pt]
0 & 0 & \dots & A_{\R} \,\, I_\noser
\end{array}
\right) \cdot
\left(
\begin{array}{@{}c@{}}
\vecx_1\\[2pt]
\vecy_1\\[2pt]
\vecx_2\\[2pt]
\vecy_2\\[2pt]
\vdots \\[2pt]
\vecx_m\\[2pt]
\vecy_m
\end{array}
\right) =
\left(
\begin{array}{@{}c@{}}
\capl_1 \\[2pt]
\capl_2 \\[2pt]
\vdots \\[2pt]
\capl_m 
\end{array}
\right).
\end{equation}
\end{linenomath*}
Constraints~(\ref{con:fam-notoverused}) for all $\reqf \in \R$ can be written as
\begin{linenomath*}
\begin{equation}
\label{eqn:IR-matrix}
\left(I_{|\R|} \,\, I_{|\R|} \,\, \dots, \,\, I_{|\R|}\right) \cdot
\left(
\begin{array}{@{}c@{}}
\vecx_1\\[2pt]
\vecx_2\\[2pt]
\vdots \\[2pt]
\vecx_m
\end{array}
\right) \leq  
\left(
\begin{array}{@{}c@{}}
n_{\reqf_1}\\[2pt]
n_{\reqf_2}\\[2pt]
\vdots \\[2pt]
n_{\reqf_{|\R|}}
\end{array}
\right).
\end{equation}
\end{linenomath*}
Recall also that $N$-fold IPs can handle lower and upper bounds on variables, as required by constraints~(\ref{con:x-nonneg}) and~(\ref{con:slack-y-bounds}).
Therefore, we can observe that, summing up equations~(\ref{eqn:AI-matrix}) and inequalities~(\ref{eqn:IR-matrix}), constraints (\ref{con:cap-notexceeded})--(\ref{con:slack-y-bounds}) can be formed as an $N$-fold IP\footnote{The fact that (\ref{eqn:IR-matrix}) is not an equality but an inequality does not cause problems, as shown by Knop et al. in the full version of their paper~\cite{KKM-nfoldIP-full}.
} whose coefficient matrix is
\begin{linenomath*}
\begin{equation*}
A^{(m)}= \left(
\begin{array}{cccc}
I_{|\R|} \,\, 0& I_{|\R|} \,\, 0 & \dots, & I_{|\R|} \,\, 0
\\[2pt]
A_{\R} \,\,\, I_\noser & 0 & \dots & 0 
\\[2pt]
0 & A_{\R} \,\,\, I_\noser & \dots & 0 
\\[2pt] 
\vdots & \vdots & \ddots & \vdots 
\\[2pt]
0 & 0 & \dots & A_{\R} \,\,\, I_\noser
\end{array}
\right). 
\end{equation*}
\end{linenomath*}
for the matrix $A=\left( 
\begin{array}{@{}c@{}}
I_{|\R|} \,\, 0\\[2pt]
A_\R \,\,\, I_t
\end{array}
\right)$.
\end{claimproof}

The algorithm by Hemmecke et al~\cite[Theorem 6.2]{HOR-2013} solves such an $N$-fold IP for $N=m$ in time $||A||_\infty^{O(\noser \cdot |\R|^2+|\R|\cdot \noser^2)} \cdot m^3 \cdot L$ where~$L$ denotes the binary encoding of the constants on the right-hand side of the IP, the upper and lower quotas, and the objective function\footnote{See
Eisenbrand et al.~\cite{EHK-2018} for a more recent, slightly faster algorithm. 
}; in our case $L=O(\log(n+m+\maxcap))$. 
Recall that $|\R| \leq (\maxreq +1)^\noser$, and that each entry in~$A$ is an integer at most~$\maxreq$. We can  observe that the running time is fixed-parameter tractable w.r.t.\ parameter~$\noser+\maxreq$.}

Our next algorithm is applicable in a more general case than \cref{thm:fptservmaxreq} (which works only when preferences or utilities are equal) at the cost of setting $m+t+\maxreq$ as the parameter (recall that the parameter considered in \cref{thm:fptservmaxreq} is $t+\maxreq$). 
\cref{thm:fptservmaxreq} is based on an ILP formulation that solves \pareto-\RR\ for arbitrary preferences as well as \maxutil-\RR\ for a broad range of utilities.

\begin{restatable}[\appsymb]{theorem}{thmfptservmaxcaploc}\label{thm:fptservmaxcaploc}
The following problems are \FPT\ w.r.t.\ parameter ${m + t + \maxreq}\colon$ 
\begin{compactitem}
\item \pareto-\RR,
\item \maxutil-\RR\ on instances where the number of different utility values is at most $g(m+t+\maxreq)$ for some computable function~$g$.
\end{compactitem}
\end{restatable}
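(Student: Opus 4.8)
The plan is to encode both problems as a single integer linear program (ILP) whose number of variables is bounded by a function of $m+\noser+\maxreq$, and then apply Lenstra's algorithm~\cite{lenstra1983integer}. First I would observe that, by \cref{rem:maxutil-vs-pareto}, it suffices to be able to \emph{compute a maximum-utility feasible assignment} for \maxutil-\RR\ in FPT time: running such a procedure on the instance~$I'$ produced in \cref{rem:maxutil-vs-pareto} solves \pareto-\RR, and the key point is that $I'$ uses only utility values in $\{1,\dots,m\}\cup\{-mn\}$, hence at most $m+1$ distinct values. Consequently, both items of the theorem reduce to the following task: given an instance of \maxutil-\RR\ in which all utility vectors take values in a set~$\UU$ with $|\UU|\le g(m+\noser+\maxreq)$ for some computable~$g$, compute a maximum-utility feasible assignment (or report that no feasible assignment exists) in FPT time w.r.t.~$m+\noser+\maxreq$.

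For this task I would partition the families into \emph{types}: $\fa_i$ and $\fa_{i'}$ share a type precisely when $\reqf_i=\reqf_{i'}$ and $\profit_i=\profit_{i'}$. Since every requirement vector lies in $\{0,\dots,\maxreq\}^{\noser}$ and every utility vector lies in $\UU^m$, the number of types is at most $(\maxreq+1)^{\noser}\cdot|\UU|^m$, a computable function of $m+\noser+\maxreq$. Write $\TT$ for the set of realised types, and for each $\theta\in\TT$ let $n_\theta$ be the number of families of type~$\theta$, $\reqf_\theta$ their common requirement vector, and $u_\theta(j)$ the common utility of a type-$\theta$ family at place~$\loc_j$. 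The ILP has a non-negative integer variable $x_{j,\theta}$ for each $\loc_j\in\LOC$ and $\theta\in\TT$, intended as the number of type-$\theta$ families assigned to~$\loc_j$; thus it has $m\cdot|\TT|$ variables, bounded by a function of the parameter. The constraints are: for every $\loc_j\in\LOC$ and $\ser_k\in\SER$, the quota condition $\caplmin_j[k]\le\sum_{\theta\in\TT}\reqf_\theta[k]\,x_{j,\theta}\le\capl_j[k]$; for every $\theta\in\TT$, the availability condition $\sum_{j\in[m]}x_{j,\theta}\le n_\theta$; and non-negativity. The objective is to maximise $\sum_{\loc_j\in\LOC}\sum_{\theta\in\TT}u_\theta(j)\,x_{j,\theta}$.

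Correctness is routine. Any feasible assignment~$\asg$ induces a feasible ILP solution by letting $x_{j,\theta}$ be the number of type-$\theta$ families that $\asg$ sends to~$\loc_j$, and its objective value equals $\util(\asg)$. Conversely, any feasible ILP solution can be realised: for each $\loc_j$ and each $\theta$ we pick $x_{j,\theta}$ as-yet-unassigned families of type~$\theta$ (possible because $\sum_{j}x_{j,\theta}\le n_\theta$) and assign them to~$\loc_j$, leaving the remaining families unassigned; the quota constraints make this assignment feasible, and its utility equals the objective value. Hence the ILP optimum equals the maximum utility of a feasible assignment, the ILP is infeasible iff no feasible assignment exists, and an optimal ILP solution — which Lenstra's algorithm returns along with the optimum — yields a maximum-utility feasible assignment. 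Since the number of variables is a function of $m+\noser+\maxreq$, the whole procedure runs in FPT time w.r.t.~$m+\noser+\maxreq$~\cite{lenstra1983integer}. Feeding this into \cref{rem:maxutil-vs-pareto} settles \pareto-\RR, and for \maxutil-\RR\ we simply compare the computed optimum against~$\finalutil$.

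I do not anticipate a real obstacle; the only points requiring care are bookkeeping: confirming that the \pareto-to-\maxutil\ reduction of \cref{rem:maxutil-vs-pareto} keeps the number of distinct utility values bounded by a function of the parameter, that the number of types (and hence of ILP variables) is genuinely a computable function of $m+\noser+\maxreq$, and that every feasible ILP solution can be turned back into an assignment — all of which follow directly from the construction above.
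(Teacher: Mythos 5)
Your proposal is correct and follows essentially the same route as the paper's own proof: reduce \pareto-\RR\ to \maxutil-\RR\ via \cref{rem:maxutil-vs-pareto} (noting the at most $m+1$ distinct utility values), group families into types by their (requirement vector, utility vector) pair, and solve the resulting ILP with a parameter-bounded number of variables via Lenstra's algorithm~\cite{lenstra1983integer}. The type count and correctness argument match the paper's \ref{LP2} construction.
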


\begin{proof}[Proof sketch]
The main idea is that there is no need to distinguish between families that have the same utilities and requirements. Since  the number of possible requirement vectors and the number of possible utility vectors are both bounded by a function of the parameter, the number of family types will also be bounded. This allows us to define a variable for each place and family type describing the number of families of a given type assigned to a given place. The resulting ILP contains a bounded number of variables and constraints, and is therefore solvable by standard techniques in FPT time~\cite{lenstra1983integer}.
\end{proof}
\appendixproofwithstatement{thm:fptservmaxcaploc}{\thmfptservmaxcaploc*}{
Recall that the reduction from~\pareto-\RR\ to  \maxutil-\RR\ described in \cref{rem:maxutil-vs-pareto} constructs an instance where utility values fall into the range $[m]$ with the sole exception of $-m\cdot n$; hence, the number of different utility values is at most~$m+1$. Hence, it suffices to solve \maxutil-\RR, as the first result follows from the second one.

We are going to present an ILP  for \maxutil-\RR.

First, let $\R=\{\reqf_i: \fa_i \in \FA\}$ contain all requirement vectors associated with some family in our input instance~$I$ of \maxutil-\RR.
Then $|\R| \leq (\maxreq+1)^t$ due to the definition of~$\maxreq$.  
Second, let $\UU=\{\profit_i: \fa_i \in \FA\}$ 
contain all utility vectors associated with some family. Since the number of different utility values is at most $g(m+t+\maxreq)$, we know that- $|\UU| \leq (g(m+t+\maxreq))^m$.

We define the \emph{type} of a family~$\fa_i \in \FA$ as $(\reqf_i,\profit_i)$, so two families have the same type, if they have the same requirement and utility vectors. Let $\TT$ denote the set of all family types appearing in the instance; then 
$|\TT| \leq |\R| \cdot |\UU| \leq (\maxreq+1)^t \cdot (g(m+t+\maxreq))^m$,
so the number of family types is bounded by a function of the parameter.
We define the two type sets 
$\TT_{(\reqf,\cdot)}=\{(\reqf,\profit) \in \TT: \profit \in \UU\}$
and~$\TT_{(\cdot, \profit)}=\{(\reqf,\profit) \in \TT: \reqf \in \R\}$.
Also, for each type~$\tau \in \TT$ we let $n_{\tau}$ denote the number of families of type~$\tau$ in the instance.

\smallskip
We introduce a variable $x(\loc_j,\tau)$ for each $\loc_j \in \LOC$ and $\tau \in \TT$ which is interpreted as the number of families of type~$\tau$ assigned to~$\loc_j$. The number of variables is therefore $|\TT|\cdot m$.
Consider the following integer program \ref{LP2}:
\begin{linenomath*}
\begin{equation}
\leqnomode
\tag{ILP$_2$} 
\label{LP2}
\max \sum_{\loc_j \in \LOC} \sum_{\profit \in \UU} \sum_{\tau \in \TT_{(\cdot,\profit)}} \profit[j] \cdot x(\loc_j,\tau) \quad \textrm{such that} \phantom{ssssssss}
\end{equation}
\begin{align}
\notag \\[-22pt]
\label{con:LP2:cap-satisfied}
& \forall \loc_j\in \LOC: \!\!\!
&& \caplmin_j \leq f
\sum_{\reqf \in \R} \sum_{\tau \in \TT_{(\reqf,\cdot)}}  x(\loc_j,\tau) \cdot \reqf 
\leq \capl_j \\
\label{con:LP2:fam-notoverused}
& \forall \tau \in \TT: 
&& 0 \leq \sum_{\loc_j \in \LOC} x(\loc_j,\tau) \leq n_{\tau} 
\end{align}
\end{linenomath*}

\begin{restatable}[\appsymb]{claim}{clmlptwocorrect}
\label{clm:lp2correct}
There exists an integer solution to~\ref{LP2} with value~$\finalutil$ if and only if  there exists a feasible assignments for~$I$ with utility~$\finalutil$.
\end{restatable}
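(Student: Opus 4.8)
The plan is to prove the two implications by the routine correspondence between feasible assignments and integer feasible points of~\ref{LP2}, using throughout the fact that two families of the same type $(\reqf_i,\profit_i)$ are fully interchangeable: they contribute identically to every load vector and to the total utility, so only their multiplicities matter.

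\emph{From an assignment to an integer point.} Suppose $\asg$ is a feasible assignment for~$I$ with $\util(\asg)=\finalutil$. For every $\loc_j\in\LOC$ and every type $\tau\in\TT$ I would set $x(\loc_j,\tau):=|\{\fa_i\in\asg^{-1}(\loc_j):(\reqf_i,\profit_i)=\tau\}|$; these are non-negative integers. Since $\asg$ assigns each family to at most one place, the number of families of a fixed type~$\tau$ used over all of~$\LOC$ is at most~$n_\tau$, which is exactly \conref{con:LP2:fam-notoverused}. Regrouping the terms of $\satur(\loc_j,\asg)=\sum_{\fa_i\in\asg^{-1}(\loc_j)}\reqf_i$ by type shows that the feasibility inequality $\caplmin_j\le\satur(\loc_j,\asg)\le\capl_j$ is precisely \conref{con:LP2:cap-satisfied}. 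Finally, a family of a type $\tau\in\TT_{(\cdot,\profit)}$ placed at $\loc_j$ contributes exactly $\profit[j]$ to the utility, so the objective value of this point equals $\sum_{\loc_j\in\LOC}\sum_{\fa_i\in\asg^{-1}(\loc_j)}\profit_i[j]=\util(\asg)=\finalutil$.

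\emph{From an integer point to an assignment.} Conversely, let $\big(x(\loc_j,\tau)\big)_{\loc_j\in\LOC,\tau\in\TT}$ be an integer feasible solution of~\ref{LP2} with objective value $\finalutil$ (so in particular $x(\loc_j,\tau)\ge 0$ for all $\loc_j,\tau$). I would build $\asg$ greedily: process the types $\tau\in\TT$ one by one, and for each $\tau$ process the places $\loc_j\in\LOC$ one by one, each time taking $x(\loc_j,\tau)$ not-yet-assigned families of type $\tau$ and assigning them to $\loc_j$; all remaining families are left unassigned. The only step that could conceivably fail is running out of type-$\tau$ families, but \conref{con:LP2:fam-notoverused} guarantees $\sum_{\loc_j\in\LOC}x(\loc_j,\tau)\le n_\tau$, which is exactly the number of type-$\tau$ families in~$I$, so the selection always succeeds. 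The families ending up at a place $\loc_j$ have load vector $\sum_{\reqf\in\R}\sum_{\tau\in\TT_{(\reqf,\cdot)}}x(\loc_j,\tau)\cdot\reqf$, which lies in $[\caplmin_j,\capl_j]$ by \conref{con:LP2:cap-satisfied}, so $\asg$ is feasible; and, again because families of a common type share a utility vector, its total utility equals $\sum_{\loc_j\in\LOC}\sum_{\profit\in\UU}\sum_{\tau\in\TT_{(\cdot,\profit)}}\profit[j]\,x(\loc_j,\tau)=\finalutil$.

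I do not foresee a genuine obstacle: the whole argument is careful bookkeeping, and the one point that deserves explicit attention is the well-definedness of the greedy selection in the second implication, which is exactly what \conref{con:LP2:fam-notoverused} supplies. Combined with the fact that the number of variables and constraints of~\ref{LP2} is bounded by a function of $m+t+\maxreq$ (using $|\TT|\le(\maxreq+1)^t\cdot(g(m+t+\maxreq))^m$), the claim yields \cref{thm:fptservmaxcaploc} through Lenstra's algorithm~\cite{lenstra1983integer}.
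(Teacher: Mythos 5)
Your proposal is correct and follows essentially the same route as the paper: interpreting $x(\loc_j,\tau)$ as type multiplicities and checking that \conref{con:LP2:cap-satisfied}, \conref{con:LP2:fam-notoverused} and the objective correspond exactly to feasibility, the family counts, and the utility. The paper's own proof is just a terser verification of this correspondence, whereas you additionally spell out the greedy reconstruction of an assignment from an integer point, which is a harmless (and welcome) elaboration of the same idea.
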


\begin{claimproof}
Interpret $x(\loc_j,\tau)$ as the number of families of type~$\tau$ that are assigned to place~$\loc_j$. 
Then for each $\loc_j \in \LOC$, 
\begin{linenomath*}
\begin{equation*}
\sum_{\reqf \in \R} \sum_{\tau \in \TT_{(\reqf,\cdot)}}  x(\loc_j,\tau) \cdot \reqf,
\end{equation*}
\end{linenomath*}
formulates exactly the load of~$\loc_j$, which implies that 
constraint~(\ref{con:LP2:cap-satisfied}) expresses the condition of feasibility. 
Moreover, for each family type~$\tau \in \TT$, the expression $\sum_{\loc_j \in \LOC} x(\loc_j,\tau)$ formulates the total number families of type~$\tau$ assigned to some place. Hence, constraint~(\ref{con:LP2:fam-notoverused}) expresses the condition that the assignment can only assign at most~$n_\tau$ families in total.
This means that integer solutions to~\ref{LP2} correspond to feasible assignments for~$I$, and vice versa.
Finally, notice that the expression
\begin{linenomath*}
\begin{equation*}
\sum_{\loc_j \in \LOC} \sum_{\profit \in \UU} \sum_{\tau \in \TT_{(\cdot,\profit)}} \profit[j] \cdot x(\loc_j,\tau) 
\end{equation*}
\end{linenomath*} 
formulates exactly the utility of the assignment, and thus a solution for~\ref{LP2} with value~$\finalutil$ implies the existence of a feasible assignment~$\asg$ with $\util(\asg)=\finalutil$, and vice versa. 
\end{claimproof}

Since the number of variables in \ref{LP2} is bounded by a function of the parameter~$m+t+\maxreq$ and the number of constraints is \FPT\ w.r.t.\ to the parameter, 
the problem can be solved in \FPT\ time~\cite{lenstra1983integer}.
}

Taking an even stronger parameterization than \cref{thm:fptservmaxcaploc}, namely $m+t+\maxcap$, 
yields fixed-parameter tractability:
in \cref{xp:mt} we present an algorithm running in $O((\maxcap)^{mt}nm)$ time.
This algorithm is a straightforward adaptation of the textbook dynamic programming method for \textsc{Knapsack}. 
The same approach was also
used by Gurski et al.~\cite[Proposition~34]{gurski2019knapsack} to solve a simpler variant of \maxutil-\RR\ without lower quotas and with a single service.

\begin{restatable}[\appsymb]{proposition}{xpmt}
\label{xp:mt}
\feasible-, \pareto-\RR, and \maxutil-\RR\ are in \XP\ w.r.t.\ $m + \noser$ and are \FPT\ w.r.t.\ $m + \noser + \maxcap$.
\end{restatable}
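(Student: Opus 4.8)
The plan is to give a single dynamic program over the families that works for all three problems, exactly the textbook knapsack DP adapted to $m$ ``knapsacks'' of dimension $\noser$ with lower quotas and with assignment-dependent utilities. The key observation is that, since all quantities are encoded in unary, the relevant state is simply the tuple of current loads: a \emph{configuration} is a tuple $c=(L_1,\dots,L_m)$ with $L_j\in\NN^{\noser}$ and $L_j\le\capl_j$ for every $\loc_j\in\LOC$, and there are at most $(\maxcap+1)^{m\noser}$ configurations.

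First I would build, for $i=0,1,\dots,n$, a table $D_i$ that maps each configuration~$c$ to the maximum utility $D_i(c)$ attainable by an assignment of the families $\fa_1,\dots,\fa_i$ with $\satur(\loc_j,\asg)=L_j$ for all $j$ (and $D_i(c)=-\infty$ if no such assignment exists), storing alongside it a backpointer that records an optimal choice for~$\fa_i$. The base case is $D_0(c)=0$ for the all-zeros configuration and $D_0(c)=-\infty$ otherwise. The transition computes $D_i$ from $D_{i-1}$: for each configuration $c=(L_1,\dots,L_m)$ with finite value $u=D_{i-1}(c)$, the options for~$\fa_i$ are (i) to stay unassigned, contributing $u$ to $D_i(c)$, and (ii) for each $\loc_j$ with $L_j+\reqf_i\le\capl_j$, to be assigned to~$\loc_j$, contributing $u+\profit_i[j]$ to the value of the configuration $c^{(j)}$ obtained from $c$ by replacing $L_j$ with $L_j+\reqf_i$; we keep the maximum over all contributions. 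Each pass touches every configuration and every place and performs $O(\noser)$-time vector operations, so one pass costs $O((\maxcap+1)^{m\noser}\cdot m\noser)$ and the whole DP runs in $O((\maxcap+1)^{m\noser}\cdot n\,m\,\noser)$ time, which matches the claimed $O((\maxcap)^{mt}nm)$ bound up to polynomial factors. Correctness is a routine induction on~$i$: every assignment of $\{\fa_1,\dots,\fa_i\}$ with load profile~$c$ arises from an assignment of $\{\fa_1,\dots,\fa_{i-1}\}$ by a unique choice for~$\fa_i$, and the transition enumerates exactly these choices.

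For the answer, call $c=(L_1,\dots,L_m)$ \emph{feasible} if $\caplmin_j\le L_j$ for every $j$ (the upper-quota bound already holds by construction of the state space). For \feasible-\RR\ we return ``yes'' iff $D_n(c)\neq-\infty$ for some feasible~$c$; for \maxutil-\RR\ we compute $\max\{D_n(c):c\text{ feasible}\}$, compare it with~$\finalutil$, and recover an optimal assignment via the backpointers; for \pareto-\RR\ we invoke \cref{rem:maxutil-vs-pareto}, building the associated \maxutil-\RR\ instance (which introduces only one additional utility value and hence does not change the asymptotic running time), running the DP to obtain a maximum-utility feasible assignment~$\asg$, and returning~$\asg$ if $\util(\asg)>0$ and ``no feasible and acceptable assignment exists'' otherwise. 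Since the input numbers are in unary, $(\maxcap+1)^{m\noser}$ is polynomial in the input size once $m+\noser$ is fixed, which gives membership in \XP\ for parameter $m+\noser$, while the only super-polynomial factor depends solely on $m$, $\noser$, and $\maxcap$, giving \FPT\ for parameter $m+\noser+\maxcap$. There is no genuine obstacle; the one subtlety worth stating is that lower quotas must \emph{not} be used to prune configurations during the DP, since an intermediate load that is below a lower quota may still grow past it later, so feasibility with respect to lower quotas is checked only in this final step.
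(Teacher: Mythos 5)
Your proposal is correct and follows essentially the same route as the paper: the same load-state/knapsack dynamic program over tuples of current loads bounded by the upper quotas, with lower quotas checked only at the very end, the reduction of \pareto-\RR\ to \maxutil-\RR\ via \cref{rem:maxutil-vs-pareto}, and the same $(\maxcap+1)^{mt}\cdot\mathrm{poly}(n,m,t)$ running-time analysis yielding \XP\ for $m+\noser$ and \FPT\ for $m+\noser+\maxcap$. The only differences (forward table with backpointers versus the paper's pull-style recursion, and the explicit treatment of \feasible-\RR) are cosmetic.
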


\appendixproofwithstatement{xp:mt}{\xpmt*}{
We present an algorithm for \maxutil-\RR; due to \cref{rem:maxutil-vs-pareto}, this can be used to solve \pareto-\RR\ as well. Our approach is a straightforward adaptation of the textbook dynamic programming for \textsc{Knapsack}.

A \emph{load state} is a tuple $(\status_1,\dots,\status_m)$ of vectors where  $\status_j \in \mathbb{N}^t$ satisfies $\status_j \leq \capl_j$ for each $\loc_j \in \LOC$. 
Define $\Q$ as the set of all load states, and observe that $|\Q|\leq (\maxcap)^{mt}$.
Define also $\FA_i=\{\fa_1,\dots,\fa_i\}$ for each $i \in [n]$.

For each possible load state~$\statQ=(\status_1,\dots,\status_m) \in \Q$ 
and each family~$i \in [n]$ we compute the maximum utility~$T(\statQ,i)$ that can be achieved by some assignment $\asg:\FA_i \rightarrow \LOC$ with $\satur(\loc_j,\asg)=\status_j$ for each $j \in [m]$; if no such assignment exists, then we will write $T(\statQ,i)=-\infty$.

We start by computing the values $T(\statQ,i)$, for $i=1$ and for each load state $\statQ=(\status_1,\dots,\status_m) \in \Q$ as follows:
\begin{compactitem}
\item
if $\status_j=\reqf_1$ for some $j \in [m]$ 
and $\status_{j'}=0$ for each $j' \in [m] \setminus \{j\}$, then 
$T(\statQ,1)=\profit_1[j]$;
\item if $\status_j=0$ for each $j \in [m]$, then 
$T(\statQ,1)=0$;
\item otherwise $T(\statQ,1)=-\infty$.
\end{compactitem}
The correctness of these values can be seen by observing that there are exactly $m+1$ possible assignments for $\FA_1$: either we leave family~$\fa_i$ unassigned, or we assign it to one of the places in~$\LOC$. 
Notice that we do not require feasibility in the definition of~$T(\statQ,i)$.

After the above initialization for the case $i=1$, we compute the values $T[\statQ,i]$  for each $i=2,\dots,n$ and for all $\statQ \in \Q$ using the following recursive formula. Let $\statQ=(\status_1,\dots,\status_m)$; then
\begin{linenomath*}
\begin{equation}
\label{eqn:DP-recursion}
T(\statQ,i)=
\max \left \{T(\statQ,i-1), \max_{\substack{j \in [m] \\ \status_j\geq \reqf_i}}
 \left\{T(\statQ^{(j)},i-1)+\profit_i[j] \right\} 
 \right\}
\end{equation}
\end{linenomath*}
where 
\begin{linenomath*}
\begin{equation}
\label{eqn:Q'-def-for-DP}
\statQ^{(j)}=(\status_1,\dots,\status_{j-1},\status_j-\reqf_i,\status_{j+1},\status_m).
\end{equation} 
\end{linenomath*}
Finally, the algorithm returns the maximum utility achievable by some feasible load state, that is, its output is
\begin{linenomath*}
\begin{equation*}
\max \{T((\status_1,\dots,\status_m),n):\caplmin_j \leq \status_j \leq \capl_j \textrm{ for each }j \in [m]\}.
\end{equation*} 
\end{linenomath*}

\begin{claim}
\label{clm:DP-recursion-correct}
Equation (\ref{eqn:DP-recursion}) correctly computes $T(\statQ,i)$ for each $\statQ \in \Q$ and $i \geq 2$.  
\end{claim}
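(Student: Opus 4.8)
The plan is to establish the identity by proving the two inequalities ``$T(\statQ,i)\ge\text{RHS}$'' and ``$T(\statQ,i)\le\text{RHS}$'' for the right-hand side of~(\ref{eqn:DP-recursion}), using the standard exchange argument for knapsack-style dynamic programs; throughout I would adopt the convention $-\infty+c=-\infty$ for every integer~$c$ so that the recursion is meaningful even when some entry equals~$-\infty$. Recall that $T(\statQ,i)$ is defined as the maximum of $\util(\asg)$ over all assignments $\asg\colon\FA_i\to\LOC\cup\{\bot\}$ whose load vector at each place~$\loc_j$ equals~$\status_j$, and equals~$-\infty$ when no such assignment exists. Also note that whenever $\status_j\ge\reqf_i$ we have $\status_j-\reqf_i\ge 0$ and $\status_j-\reqf_i\le\status_j\le\capl_j$, so $\statQ^{(j)}$ as defined in~(\ref{eqn:Q'-def-for-DP}) is indeed a load state in~$\Q$, which makes the term $T(\statQ^{(j)},i-1)$ well defined.

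For the direction $T(\statQ,i)\ge\text{RHS}$, I would show that each term in the maximum on the right is a lower bound for $T(\statQ,i)$; terms equal to~$-\infty$ are trivially dominated, so assume the relevant entry is finite. If $T(\statQ,i-1)\ne-\infty$, take a witnessing assignment $\asg'\colon\FA_{i-1}\to\LOC\cup\{\bot\}$ with load state~$\statQ$ and extend it to~$\FA_i$ by $\asg(\fa_i)=\bot$; this changes neither the load state nor the total utility, so $T(\statQ,i)\ge T(\statQ,i-1)$. For each $j\in[m]$ with $\status_j\ge\reqf_i$ and $T(\statQ^{(j)},i-1)\ne-\infty$, take a witnessing assignment on~$\FA_{i-1}$ for the state~$\statQ^{(j)}$ and extend it by $\asg(\fa_i)=\loc_j$; this raises the load of~$\loc_j$ by~$\reqf_i$, turning~$\statQ^{(j)}$ into~$\statQ$, and raises the utility by~$\profit_i[j]$, giving $T(\statQ,i)\ge T(\statQ^{(j)},i-1)+\profit_i[j]$. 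Taking the maximum over all these terms yields $T(\statQ,i)\ge\text{RHS}$.

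For the converse, I would first note that if $T(\statQ,i)=-\infty$ the claim follows immediately from the previous paragraph; so suppose $T(\statQ,i)\ne-\infty$ and fix an optimal assignment $\asg$ witnessing~$T(\statQ,i)$, distinguishing cases by~$\asg(\fa_i)$. If $\asg(\fa_i)=\bot$, the restriction of~$\asg$ to~$\FA_{i-1}$ still realizes load state~$\statQ$ with the same utility, so $T(\statQ,i-1)\ge T(\statQ,i)$. If $\asg(\fa_i)=\loc_j$, then $\reqf_i\le\satur(\loc_j,\asg)=\status_j$ since $\fa_i$ contributes~$\reqf_i$ to the load of~$\loc_j$; the restriction of~$\asg$ to~$\FA_{i-1}$ realizes the load state~$\statQ^{(j)}$ and has utility $T(\statQ,i)-\profit_i[j]$, hence $T(\statQ^{(j)},i-1)+\profit_i[j]\ge T(\statQ,i)$. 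In either case the right-hand side contains a term that is at least~$T(\statQ,i)$, so $\text{RHS}\ge T(\statQ,i)$, completing the argument. The only delicate point — and it is a minor one — is the bookkeeping around the value~$-\infty$ and the treatment of the ``unassigned'' option~$\bot$; there is no real obstacle here, as this is the textbook knapsack DP correctness proof adapted to $m$ places and $t$ services, with the lower quotas $\caplmin_j$ entering only the final maximization over feasible load states and not the recursion itself.
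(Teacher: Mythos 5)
Your proposal is correct and follows essentially the same argument as the paper: both directions are handled by the standard restriction/extension exchange between an assignment on $\FA_i$ and one on $\FA_{i-1}$, using the semantic definition of $T$ (the paper phrases this as an induction on $i$, but the content is identical). Your explicit handling of the $-\infty$ convention and the observation that $\statQ^{(j)}\in\Q$ are fine touches that the paper leaves implicit.
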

\begin{claimproof}
We prove the claim by recursion on~$i$, building on the fact that the algorithm computes the values for $T(\statQ,1)$  for each $\statQ \in \Q$ correctly. Thus, suppose that the computations are correct for $i-1$, and consider the formula~(\ref{eqn:DP-recursion}) for~$i$.

Consider some load state $\statQ=(\status_1,\dots,\status_m) \in \Q$. 
Assume first that $\asg:\FA_i \rightarrow \LOC$ is an assignment with maximum utility among those that satisfy
$\satur(\loc_j,\asg)=\status_j$ for each $j \in [m]$; this means $T(\statQ,i)=\util(\asg)$. 
Let $\asg'$ denote the restriction of~$\asg$ to~$\FA_{i-1}$.
If $\asg$ leaves $\fa_i$ unassigned, then the load of every place is the same under~$\asg'$ and under~$\asg$, which implies $T[\statQ,i-1] \geq \util(\asg)$ by our inductive hypothesis for $i-1$. 
If $\asg$ assigns $\fa_i$ to some place~$\loc_j \in \LOC$, then 
the load of $\loc_j$ under~$\asg'$ is $\status_j - \reqf_i$, while the load of every other place~$\loc_{j'} \in \LOC \setminus \{\loc_j\}$ is the same as under~$\asg$,  that is, $\status_{j'}$. This means that $T[\statQ^{(j)},i-1] \geq \util(\asg')=\util(\asg)- \profit_i[j]$
where $\statQ^{(j)}$ is defined by~(\ref{eqn:Q'-def-for-DP}), again using our hypothesis.
Hence, irrespective of the value~$\asg(\fa_i)$, the value on the right-hand side of~(\ref{eqn:DP-recursion}) is at least $\util(\asg)=T(\statQ,i)$. 

It remains to show the reverse direction, so assume that the right-hand side of~(\ref{eqn:DP-recursion}) is $\hat{u}$; we are going create an assignment~$\hat\asg$ which satisfies $\satur(p_j,\hat\asg)=\status_j$ for each $j \in [m]$ and has utility at least~$\hat{u}$.
First, if $\hat{u}=T(\statQ,i-1)$, then by our inductive hypothesis we know that there exists an assignment $\asg:\FA_{i-1} \rightarrow \LOC$ with $\util(\asg)=\hat{u}$ and satisfying $\status_j=\satur(\loc_j,\asg)$ for each $j \in [m]$. Then setting $\hat\asg=\asg$ is sufficient.
Otherwise, the right-hand side of~(\ref{eqn:DP-recursion}) must be defined by $\hat{u}=T(\statQ^{(j)},i-1)$ for some index~$j \in [m]$.
By our inductive hypothesis, there exists some assignment~$\asg':\FA_{i-1}\rightarrow \LOC$ with utility $u^\star-\profit_i[j]$ that yields the load state~$\statQ^{(j)}$ as defined by~(\ref{eqn:Q'-def-for-DP}), i.e., for which the load of~$\loc_j$ is $\status_j-\reqf_i$ while the load of each remaining place~$\loc_{j'}$ is~$\status_{j'}$. In this case, we can extend~$\asg'$ by assigning~$\fa_i$ to~$\loc_j$, giving us the assignment~$\hat\asg$  with the required properties and utility~$\hat{u}$.
\end{claimproof}

Due to \cref{clm:DP-recursion-correct} and by the definitions of the function~$T(\cdot)$, the correctness of our algorithm follows. Each computation step described by the recursion~(\ref{eqn:DP-recursion}) requires $O(m)$ time, and we perform it at most~$|\Q|n$ times. By $|\Q| \leq (1+\maxcap)^{mt}$, the overall running time is $(1+\maxcap)^{mt}O(nm)$.

By using standard techniques, we can not only compute the utility of an optimal, feasible assignment, but also determine a maximum-utility feasible assignment itself.
}

We close this section by mentioning that a simple \XP\ algorithm exists for the case when the parameter is the desired total utility~$\finalutil$, and there are no lower quotas (cf. \cref{prop:wtbinpacking} stating the intractability of the case $\finalutil=0$ when lower quotas are allowed).

\begin{restatable}[\appsymb]{proposition}{xpfinalutil}
\label{xp:finalutil}
\feasible-, \pareto- and \maxutil-\RR\ are in \XP\ w.r.t.\ the desired utility~$\finalutil$ if there are no lower quotas.
\end{restatable}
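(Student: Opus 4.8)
\textbf{The plan} is to prove the statement for \maxutil-\RR\ and then read off the other two cases: \feasible-\RR\ without lower quotas is exactly the special case of \maxutil-\RR\ with all utilities~$0$ and $\finalutil=0$ (so the algorithm below runs in polynomial time there, as it must --- the empty assignment is feasible), and for \pareto-\RR\ I would invoke \cref{rem:maxutil-vs-pareto}, whose reduction introduces no lower quotas. So from now on fix an instance of \maxutil-\RR\ with no lower quotas and a target~$\finalutil$; if $\finalutil\le 0$, output the empty assignment, so assume $\finalutil\ge 1$.

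\textbf{First}, I would record the structural fact that makes everything work: if the instance has no lower quotas and some feasible assignment~$\asg$ achieves $\util(\asg)\ge\finalutil$, then there is a feasible assignment of utility at least~$\finalutil$ that assigns at most~$\finalutil$ families. To see this I would first unassign every family~$\fa_i$ whose contribution $\profit_i[\asg(\fa_i)]$ is non-positive; since there are no lower quotas this keeps the assignment feasible, and it cannot lower the total utility, so now every assigned family contributes at least~$1$. Then I would scan the assigned families in an arbitrary order, stopping the moment the accumulated utility first reaches~$\finalutil$: at most~$\finalutil$ families have been scanned (those scanned strictly before the last one sum to less than~$\finalutil$, hence number at most~$\finalutil-1$), and unassigning all remaining families once more preserves feasibility (every load only shrinks) while leaving the total utility at~$\ge\finalutil$.

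\textbf{Then} the algorithm is brute force over the structure just identified: enumerate every pair $(S,\phi)$ with $S\subseteq\FA$, $|S|\le\finalutil$ and $\phi\colon S\to\LOC$, regarding all families outside~$S$ as unassigned, and accept iff for some such pair the induced assignment satisfies $\satur(\loc_j,\phi)\le\capl_j$ for every $\loc_j\in\LOC$ --- these are the only feasibility constraints, as there are no lower quotas --- and has total utility at least~$\finalutil$. Correctness in one direction is exactly the structural fact; in the other direction it is immediate, since any such~$\phi$ extends (by leaving the rest unassigned) to a feasible assignment of utility at least~$\finalutil$. There are at most $\sum_{k=0}^{\finalutil}\binom{n}{k}m^{k}\le(\finalutil+1)(nm)^{\finalutil}$ such pairs, each checkable in polynomial time, so the running time is $(nm)^{O(\finalutil)}$, i.e.\ \XP\ in~$\finalutil$; incidentally this bound does not depend on the number of services.

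\textbf{The main obstacle} is, frankly, almost nonexistent --- as the surrounding text advertises, the algorithm is ``simple''. The one spot needing care is the pruning step above, together with the observation that it genuinely uses the no-lower-quota hypothesis: unassigning a family can break a lower quota, which is exactly why \cref{prop:wtbinpacking} shows that already the case $\finalutil=0$ is \NPh\ once positive lower quotas are allowed. The only other thing to double-check is the arithmetic giving the $\le\finalutil$ bound on the number of assigned families, and the (easy) claim that restricting a feasible assignment to a subset of its assigned families stays feasible in the absence of lower quotas.
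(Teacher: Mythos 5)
Your proposal is correct and takes essentially the same route as the paper: the paper's proof likewise notes that with no lower quotas only positive-utility family--place pairs matter, each contributing at least~$1$, and then brute-forces over all sets of at most~$\finalutil$ such pairs, checking only the upper quotas, in $O((nm)^{\finalutil})$ time. Your write-up is if anything more complete, since you spell out the pruning lemma and the \feasible-/\pareto-\RR\ cases that the paper's proof leaves implicit (though be aware that \cref{rem:maxutil-vs-pareto} requires a \emph{maximum}-utility assignment rather than a threshold test, so the \pareto\ case does not literally follow from the threshold-deciding XP algorithm alone).
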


\appendixproofwithstatement{xp:finalutil}{\xpfinalutil*}{
Since there are no lower quotas, we have no reason to match a family $\fa_i \in \FA$ to a place $\loc_j \in \LOC$ if $\profit_i[j] \leq 0$.
Since utilities are integral, every pair $\fa_i \in \FA, \loc_j \in \LOC$ such that $\asg(\fa_i) = \loc_j$ and $\profit_i[j] > 0$ contributes at least $1$ to the final utility. Thus we can iterate over all subsets of family-place pairs of size at most $\finalutil$ such that every pair has a positive utility, and verify whether any of them gives raise to a feasible assignment $\asg$ with $\util(\asg) \geq \finalutil$. There are at most $O((nm)^{\finalutil})$ subsets of pairs. 
}

\section{Conclusion}
\label{sec:conclusion}
We provided a comprehensive parameterized complexity analysis for three variants of \RefRes, which focus on ensuring feasibility, maximizing utility, and achieving Pareto optimality.
There remain some interesting parameter combinations for which the complexity of these problems is open, e.g., is \maxutil-\RR\  FPT with parameter~$m+t+\maxreq$ for arbitrary utilities? 
 Another exciting line of future research is to explore the possibilities of tailoring the proposed algorithms to efficiently solve practical instances, and determining which parameterizations are the most relevant in different real-world applications.
We believe that our ILP algorithms could perform substantially faster then their theoretical bounds, as the current day ILP solvers are efficient. However, it could also be that a straightforward ILP formulation with a variable for each family and each place outperforms our specialized formulations.

\clearpage

\begin{ack}
Jiehua Chen and Sofia Simola are supported by the Vienna Science and Technology Fund (WWTF)~[10.47379/VRG18012].
Ildik\'o Schlotter is supported by the Hungarian Academy of Sciences under its Momentum Programme (LP2021-2) and its J\'anos Bolyai Research Scholarship.
\end{ack}



\bibliography{bib}

\clearpage

\ifshort

\else
\begin{appendices}

\appendixtext

\section{Further parameterizations}\label{sec:moreparameters}

\newcommand{\refrmmtmaxrequmax}{[R\ref{rm:mtmaxrequmax}]}
\newcommand{\refrmparetosatwithties}{[R\ref{rm:paretosatwithties}]}
\newcommand{\refrmfinalutilmaxreq}{[R\ref{rm:finalutilmaxreq}]}
\newcommand{\equtil}{\;\;\scriptsize eq.\ util.}

\newcommand{\algpos}{$^+$}

\begin{table}[t!]
  \centering
  \extrarowheight=.5\aboverulesep
  \aboverulesep=1pt
  \belowrulesep=1pt
  \addtolength{\extrarowheight}{\belowrulesep}
    \begin{tabular}{@{}l@{\,} | @{\,} c@{\,}l @{\,}c@{\,} c@{\,}l @{}c@{\,}} 
\toprule
      {\small Other param.}  & \multicolumn{2}{@{\;}c@{\;}}{$\maxutilf$} &  \multicolumn{2}{@{\;}c@{\;}}{$\finalutil$} \\
&  \multicolumn{1}{c}{\nolower~/~\withlower}& &\multicolumn{1}{c}{\nolower~/~\withlower}&\\
      \midrule
       $ - $ &  \tNPh\tid/\tNPh\tid  &\cite{gurski2019knapsack}  &  \our{\tFPT}/\our{\tNPh}\tid & \reffptfinalutiltone/\refpropwtbinpacking   \\  \hline 
              $m$ & \our{\tWoneh\tid}/\our{\tWoneh\tid}& \refpropwtbinpacking & \our{\tFPT}/\our{\tWoneh\tid}& \reffptfinalutiltone/\refpropwtbinpacking \\[-.8ex]   
       & \our{\tXP}/\our{\tXP} & \refxpmt & \our{\tFPT}/\our{\tXP} & \reffptfinalutiltone/\refxpmt  \\ \hline
       $\maxreq$ &\our{\tNPh}/\our{\tNPh}&\refthmsatwithties  & \our{\tFPT}/\OQ, \our{\tXP}\algpos & \reffptfinalutiltone/\refrmfinalutilmaxreq\\ [-.8ex] 
       \equtil & \our{\tFPT}/\our{\tFPT} & \refthmonetboundedrmaxilp & \our{\tFPT}/\our{\tFPT} & \refthmonetboundedrmaxilp\\ \hline
        \hline
\toprule
$-$ &  \tNPh\tid/\tNPh\tid  &\cite{gurski2019knapsack} & \tWoneh\tid/\our{\tNPh}\tid & \cite{gurski2019knapsack}/\refpropwtbinpacking \\[-.8ex]
  & &  & \our{\tXP}/\our{\tNPh}\tid & \refxpfinalutilmodel/\refpropwtbinpacking \\ \hline \hline
 $\noser$ &  \tNPh\tid/\tNPh\tid  &\cite{gurski2019knapsack} & \OQ,\our{\tXP}/\our{\tNPh}\tid & \refxpfinalutilmodel/\refpropwtbinpacking \\ \hline
 $m$ & \tNPh\tid/\tNPh\tid & \cite{gurski2019knapsack} & \tWoneh\tid/\tNPh\tid & \cite{gurski2019knapsack}/\refpropfeashardm \\[-.8ex]
& &  & \our{\tXP}/\tNPh\tid & \refxpfinalutilmodel/\refpropfeashardm\\ \hline
 $\maxreq$  & \tNPh\tid/\tNPh\tid & \cite{gurski2019knapsack} & \tWoneh\tid/\our{\tNPh}\tid & \cite{gurski2019knapsack}/\refpropfeashardm \\[-.8ex]
& & &\our{\tXP}/\our{\tNPh}\tid & \refxpfinalutilmodel/\refpropfeashardm \\ \hline \hline
$m + \noser$ & \our{\tWoneh}\tid/\our{\tWoneh}\tid & \refpropwtbinpacking  & \OQ/\our{\tWoneh}\tid & \refpropwtbinpacking  \\ [-.8ex]
 & \our{\tXP}/\our{\tXP} & \refxpmt & \our{\tXP}/\our{\tXP} & \refxpmt\\ \hline
    $m + \maxreq$ & \tNPh\tid/\tNPh\tid & \cite{gurski2019knapsack}  & \tWoneh\tid/\our{\tNPh}\tid & \cite{gurski2019knapsack}/\refpropfeashardm \\[-.8ex]
   & & & \our{\tXP}/\our{\tNPh}\tid & \refxpfinalutilmodel/\refpropfeashardm \\ \hline 
    $\noser + \maxreq$ & \our{\tNPh}/\our{\tNPh} & \refthmsatwithties & \OQ,\our{\tXP}/\OQ & \refxpfinalutilmodel \\ [-.8ex]
    \equtil   & \our{\tFPT}/\our{\tFPT} & \refthmfptservmaxreq & \our{\tFPT}/\our{\tFPT} & \refthmfptservmaxreq \\ \hline \hline
    $m + \noser + \maxreq$ & \our{\tFPT}/\our{\tFPT}\algpos & \refrmmtmaxrequmax & \our{\tFPT}/\our{\tFPT}\algpos & \refrmmtmaxrequmax\\ \bottomrule
    \end{tabular}
  \caption{
  Additional results regarding parameters $\finalutil$ and $\maxutilf$ for \maxutil-\RR.
Above: Results for the single-service case ($t = 1$).
We skip the parameterization by $n$ since for this case since it is FPT for the more general case. 
Bold faced ones are obtained in this paper. \nolower\ (resp.\ \withlower) refers to the case when lower quotas are zero (resp.\ may be positive).
Here, \tNPh\ means that the problem remains NP-hard even if the corresponding parameter(s) are constant.
All hardness results hold for \binary\ utilities.
Additionally, \tid\ means hardness results hold even for equal utilities, and \algpos\ means the algorithm only works when the utilities are non-negative.
 \label{table:utilparams}}
\end{table}

\begin{table}[t!]
  \centering
  \extrarowheight=.5\aboverulesep
  \aboverulesep=1pt
  \belowrulesep=1pt
  \addtolength{\extrarowheight}{\belowrulesep}
    \begin{tabular}{@{}l@{\,} | c@{\,}c @{\,}c@{\,} c@{\,} } 
\toprule
      {\small Other param.}  & \multicolumn{2}{@{\;}c@{\;}}{$\tieno$} \\
& \multicolumn{1}{c}{\nolower~/~\withlower}& & & \\
      \midrule
$-$ &  \our{\tFPT}/\our{\tNPh} & \refpropfpttieno/\refpropwtbinpacking \\ \hline 
$m$ & \our{\tFPT}/\our{\tWoneh} & \refpropfpttieno/\refpropwtbinpacking \\ [-.8ex] 
& \our{\tFPT}/\our{\tXP} & \refpropfpttieno/\refxpmt  \\ \hline
$\maxreq$ &  \our{\tFPT}/\our{\tNPh} & \refpropfpttieno/\refrmparetosatwithties \\ \hline \hline \toprule

$ - $ & \our{\tFPT}/\our{\tNPh} & \refpropfpttieno/\refpropfeashardm  \\ \hline \hline
$m$ & \our{\tFPT}/\our{\tNPh} & \refpropfpttieno/\refpropfeashardm \\ \hline
$\noser $  & \our{\tFPT}/\our{\tNPh} & \refpropfpttieno/\refpropwtbinpacking \\ \hline 
$\maxreq$ & \our{\tFPT}/\our{\tNPh} & \refpropfpttieno/\refpropfeashardm \\ \hline \hline
$m + \noser$ & \our{\tFPT}/\our{\tWoneh} & \refpropfpttieno/\refpropwtbinpacking \\ [-.8ex]
 & \our{\tFPT}/\our{\tXP} & \refpropfpttieno/\refxpmt \\ \hline
 $m + \maxreq$ & \our{\tFPT}/\our{\tNPh} & \refpropfpttieno/\refpropfeashardm \\ \hline
 $t + \maxreq$ & \our{\tFPT}/\our{\tNPh} & \refpropfpttieno/\refrmparetosatwithties \\ \hline \hline
 $m + \noser + \maxreq$ & \our{\tFPT}/\our{\tFPT} & \refthmfptservmaxcaploc \\ \bottomrule
\end{tabular}
  \caption{
  Additional results regarding parameter $\tieno$ for \pareto-\RR. Since the problem is \FPT\ w.r.t.\ $\tieno$ when there are no lower quotas by \cref{prop:fpt_tieno}, we only look at the general case.
Above: Results for the single-service case ($t = 1$).
We skip the parameterization by $n$ since for this case since it is FPT for the more general case. 
Here, \tNPh\ means that the problem remains NP-hard even if the corresponding parameter(s) are constant.
 \label{table:tieparams}}
\end{table}

In this section we discuss our results regarding parameters that were not included in detail in the main paper. We show \FPT\ results for the parameters $\sumreqs, \sumcaps$, and $\sumutil$, and present in \cref{table:utilparams,table:tieparams} a nearly complete complexity picture regarding the parameters $\maxutilf, \finalutil$, and~$\tieno$.

We define $\tieno$ as the number of families who have ties in their preferences. That is, the number of families $\fa_i \in \FA$ such that there are two places $\loc_j, \loc_{j'} \in \LOC$ such that $\fa_i$ finds both $\loc_j$ and $\loc_{j'}$ acceptable and is indifferent between them. Many of the hardness-results for this parameter follow from the fact that \pareto-\RR\ is \NP-hard even when there is only one place, which is why we did not include it in the main paper. We do however discover that \pareto-\RR\ is \FPT\ w.r.t.\ $\tieno$ when there are no lower quotas (\cref{prop:fpt_tieno}).

Note that families with all-zero requirement vectors can be assigned arbitrarily without affecting the feasibility, and thus we may disregard such families and assume that $n \leq \sum_{\fa_i \in F} \sum_{s_k \in \SER} \reqf_i[k]=\sumreqs$. 
Therefore, \cref{prop:fpt-n} implies fixed-parameter tractability of our problems when parameterized by the sum of all requirements.

\begin{corollary}
\feasible-, \pareto-, and \maxutil-\RR\ are \FPT\ w.r.t.~$\sumreqs$.
\end{corollary}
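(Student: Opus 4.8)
The statement to prove is the Corollary: \feasible-, \pareto-, and \maxutil-\RR\ are \FPT\ w.r.t.\ the total requirement $\sumreqs = \sum_{\fa_i \in \FA} \sum_{\ser_k \in \SER} \reqf_i[k]$. The plan is to reduce this to the already-established fixed-parameter tractability with respect to the number of families~$n$ (\cref{prop:fpt-n}), by bounding~$n$ in terms of~$\sumreqs$ after a harmless preprocessing step.

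First I would observe that a family~$\fa_i$ whose requirement vector is the all-zero vector contributes nothing to the load of any place it is assigned to. Hence for \feasibleRR\ such a family can be placed at an arbitrary place (or left unassigned) without affecting feasibility; for \maxutilRR\ we assign it to a place maximizing $\profit_i[j]$ (or leave it unassigned if all its utilities are negative), which is optimal independently of the rest of the assignment; and for \paretoRR\ we assign it to its most-preferred acceptable place (or leave it unassigned if its preference list is empty), which can never create a Pareto-improvement obstruction since its situation is already optimal. So in all three problems we may delete every family with an all-zero requirement vector, solve the reduced instance, and then re-insert these families greedily as described; this preprocessing is clearly polynomial-time.

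After this cleanup, every remaining family~$\fa_i$ has $\sum_{\ser_k \in \SER} \reqf_i[k] \geq 1$, so the number of remaining families is at most $\sum_{\fa_i \in \FA}\sum_{\ser_k \in \SER}\reqf_i[k] = \sumreqs$; that is, $n \leq \sumreqs$. We then invoke \cref{prop:fpt-n}, which gives an algorithm running in time $f(n)\cdot \mathrm{poly}(|I|)$ for some computable~$f$; substituting $n \leq \sumreqs$ yields running time $f(\sumreqs)\cdot\mathrm{poly}(|I|)$, establishing fixed-parameter tractability with respect to~$\sumreqs$. Combining the polynomial-time preprocessing with this bounded-$n$ algorithm gives the claimed \FPT\ result for all three variants.

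I do not expect a genuine obstacle here — the only point requiring a little care is the correctness of the greedy re-insertion of zero-requirement families in the \pareto\ case, where one must check that an assignment that is Pareto-optimal on the reduced instance, extended by giving each zero-requirement family its top acceptable choice, is Pareto-optimal on the full instance (any Pareto-improvement would have to improve some family, but the zero-requirement families are already at their optimum and the non-zero families form a sub-instance on which the restriction is Pareto-optimal, and the extra families impose no new feasibility constraints). This is straightforward. The analogous checks for \feasible\ and \maxutil\ are immediate.
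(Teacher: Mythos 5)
Your proposal is correct and follows essentially the same route as the paper: discard families with all-zero requirement vectors (re-inserting them greedily at the end), observe that the remaining instance has $n \leq \sumreqs$, and invoke \cref{prop:fpt-n}. Your extra care about how the zero-requirement families are re-inserted in the \pareto\ and \maxutil\ cases is a welcome refinement of the paper's brief "assigned arbitrarily" remark, but it does not change the argument.
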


Another simple observation yields fixed-parameter tractability for parameter~$\sumutil$ in the case when there are no lower quotas. Indeed, in such a case we can discard all families with all-zero utility vectors, as they can be deleted from any assignment without changing its feasibility, acceptability, or total utility. Therefore, we may assume that $n \leq \sum_{\fa_i \in F} \sum_{\loc_j \in \LOC} \profit_i[j]=\sumutil$, leading to the following consequence of \cref{prop:fpt-n}.

\begin{corollary}
\maxutil-\RR\ is \FPT\ w.r.t.~$\sumutil$ if there are no lower quotas.
\end{corollary}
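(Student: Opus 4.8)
The plan is to reduce the parameter $\sumutil$ to the parameter $n$ (the number of families) by a trivial preprocessing step, and then invoke the fixed-parameter tractability result for~$n$ established in \cref{prop:fpt-n}. Since $\sumutil$ is meant to serve as a non-negative parameter, and since the bound I will derive requires it, I would work in the natural setting where all utilities are non-negative integers, so that $\sumutil = \sum_{\fa_i \in \FA, \loc_j \in \LOC} \profit_i[j]$ is a well-defined element of~$\NN$.

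First I would argue that families with an all-zero utility vector are irrelevant. Because there are no lower quotas, the empty assignment is feasible and acceptable, and more generally any family may be left unassigned without violating feasibility or acceptability. Hence, if a family~$\fa_i$ has $\profit_i[j]=0$ for every $\loc_j \in \LOC$, then deleting~$\fa_i$ yields an equivalent instance: given any feasible assignment for the reduced instance we leave such families unassigned, and given any feasible assignment for the original instance we un-assign them; in both directions feasibility, acceptability, and the total utility $\util(\cdot)$ are preserved. Thus we may delete all families whose utility vector is identically zero.

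Second I would bound the number of surviving families. Every remaining family~$\fa_i$ has some place~$\loc_j$ with $\profit_i[j]\geq 1$ (as utilities are non-negative integers and $\profit_i$ is not all-zero), so~$\fa_i$ contributes at least~$1$ to the sum~$\sumutil$, which is unchanged by the deletion. Consequently the number~$n'$ of remaining families satisfies $n' \leq \sumutil$. Applying the algorithm of \cref{prop:fpt-n}, which is \FPT\ with respect to the number of families, to this reduced instance then solves \maxutil-\RR\ in time that is \FPT\ with respect to~$\sumutil$.

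The only step requiring any care is the claim that all-zero families are genuinely removable, and this hinges entirely on the absence of lower quotas: with positive lower quotas such families could be forced into places in order to satisfy quotas, so they could no longer be discarded—indeed the problem becomes intractable in that regime (cf.\ \cref{prop:wtbinpacking}), which is exactly why the no-lower-quotas hypothesis is essential to the corollary.
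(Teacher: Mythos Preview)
Your proposal is correct and follows essentially the same approach as the paper: discard families with all-zero utility vectors (which is safe precisely because there are no lower quotas), observe that each remaining family contributes at least~$1$ to~$\sumutil$ so that $n \leq \sumutil$, and then invoke \cref{prop:fpt-n}. Your explicit remark that non-negative integer utilities are needed for the bound $n' \leq \sumutil$ is a useful clarification that the paper leaves implicit.
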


Finally, it is also not hard to see that setting $\sumcaps$ as the parameter also yields fixed-parameter tractability.

\begin{proposition}
\feasible-, \pareto-, and \maxutil-\RR\ are \FPT\ w.r.t.~$\sumcaps$.
\end{proposition}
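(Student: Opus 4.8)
The plan is to reduce to the FPT algorithm for parameter~$n$ (\cref{prop:fpt-n}) by observing that $\sumcaps$ bounds the number of families that can possibly be assigned in any feasible assignment, and that families which cannot be assigned can either be discarded or cause an immediate no-instance. First I would remove from the instance every family~$\fa_i$ whose requirement vector is the all-zero vector: such a family contributes nothing to any load vector, so it can be assigned to an arbitrary place (or left unassigned) without affecting feasibility; for \pareto-\RR\ we assign each such family to its most preferred acceptable place (or leave it unassigned if it finds no place acceptable), and for \maxutil-\RR\ we assign it to the place maximizing its utility if that utility is positive and leave it unassigned otherwise, adjusting the target~$\finalutil$ accordingly. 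After this cleanup, every remaining family~$\fa_i$ has $\sum_{\ser_k \in \SER} \reqf_i[k] \geq 1$.

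Next I would bound the number of remaining families. In any feasible assignment~$\asg$, the total requirement summed over all assigned families is $\sum_{\loc_j \in \LOC} \sum_{\ser_k \in \SER} \satur(\loc_j,\asg)[k] \leq \sum_{\loc_j \in \LOC,\ser_k \in \SER} \capl_j[k] = \sumcaps$. Since each remaining family contributes at least~$1$ to this sum, at most~$\sumcaps$ families can be assigned by~$\asg$. Now there are two cases. If the instance has \zerobound, then unassigned families cause no infeasibility, so any family beyond the first~$\sumcaps$ (in an arbitrary fixed ordering) is irrelevant in the sense that if a feasible assignment assigning at least one of them exists, then one exists that assigns only families among a bounded set — more carefully, we cannot simply delete families here, so instead I would bound $n$ directly: if $n > \sumcaps$ we still need all families present for \pareto-optimality/utility reasons, but we can observe that $n$ is \emph{not} bounded by $\sumcaps$ in general. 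To handle this cleanly, I would instead argue as follows: we may group the remaining families by their (requirement vector, preference list) type for \pareto-\RR, respectively (requirement vector, utility vector) type for \maxutil-\RR; since $\maxreq \leq \sumcaps$ and $\noser \leq \sumcaps$, the number of requirement vectors is at most $(\sumcaps+1)^{\sumcaps}$, and similarly the number of relevant preference/utility types is bounded by a function of $\sumcaps$ (for \maxutil-\RR\ one first argues that only pairs with positive utility matter, or reduces via \cref{rem:maxutil-vs-pareto}). For each type we keep only $\sumcaps$ representative families (the ones with, say, lexicographically smallest requirement, which suffices since at most $\sumcaps$ families are ever assigned), discarding the rest. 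This yields an equivalent instance with $n \leq f(\sumcaps)$ families.

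Finally, with the number of families bounded by a function of~$\sumcaps$, I would invoke \cref{prop:fpt-n}, which gives an FPT algorithm in~$n$, to solve each of \feasible-, \pareto-, and \maxutil-\RR\ on the reduced instance; combining the running time of the preprocessing (polynomial) with that of \cref{prop:fpt-n} (FPT in~$n$, hence FPT in~$\sumcaps$) yields the claim. The main obstacle I anticipate is the bookkeeping needed to argue that keeping $\sumcaps$ representatives per type preserves equivalence — one must check that whenever a feasible/Pareto-optimal/maximum-utility assignment uses a discarded family, it can be rerouted to use a retained one of the same type, using the fact that at most $\sumcaps$ families are assigned in total and that families of the same type are interchangeable with respect to feasibility and (for Pareto/utility) indistinguishable; for \maxutil-\RR\ one must additionally be careful that keeping representatives of smallest requirement does not lose utility, which is why sorting within each type by requirement (not utility) is the right choice, since utility is already fixed by the type.
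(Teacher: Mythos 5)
Your reduction to \cref{prop:fpt-n} hinges on the number of family types being bounded by a function of~$\sumcaps$, and for \maxutil-\RR\ this fails: a type in your sense is a pair (requirement vector, utility vector), and while the number of requirement vectors is indeed bounded (after discarding services that no place offers and families that no place can accommodate, $\noser,\maxreq \leq \sumcaps$), the utility vectors range over $\ZZ^m$ with arbitrarily many distinct values, so the number of types --- and hence the number of families left in your reduced instance --- is not bounded by any function of~$\sumcaps$. Neither of your suggested fixes closes this gap: the claim that ``only pairs with positive utility matter'' is false in the presence of lower quotas (a maximum-utility feasible assignment may be forced to use negative-utility pairs to satisfy them), and even without lower quotas it does not reduce the number of distinct utility \emph{values}; and \cref{rem:maxutil-vs-pareto} goes in the wrong direction --- it reduces \pareto-\RR\ to \maxutil-\RR, not conversely. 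Your remark about ``sorting within each type by requirement'' is vacuous, since within a type both the requirement and the utility vector are already fixed. A working repair is to keep, for every requirement vector~$\reqf$ and every place~$\loc_j$, the $\sumcaps$ families with requirement~$\reqf$ of highest utility for~$\loc_j$: since any feasible assignment assigns at most $\sumcaps$ families in total, a discarded family used at~$\loc_j$ can always be swapped for an unassigned kept family with the same requirement vector and no smaller utility. Note also that your preference-type count for \pareto-\RR\ implicitly needs $m \leq \sumcaps$, which requires deleting places with all-zero upper quotas --- a preprocessing step you never state.

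For comparison, the paper's proof avoids this bookkeeping entirely: after observing $\maxcap \leq \sumcaps$, discarding all-zero-quota places (so $m \leq \sumcaps$) and discarding services offered nowhere together with the families requiring them (so $\noser \leq \sumcaps$), one has $m+\noser+\maxcap \leq 3\sumcaps$, and \cref{xp:mt} (the Knapsack-style dynamic program, FPT in $m+\noser+\maxcap$) settles all three problems at once. Your kernelization-to-bounded-$n$ strategy can be made correct --- the exchange argument you sketch is sound for \feasible- and \pareto-\RR\ --- but it genuinely needs the \maxutil\ repair above, whereas the paper's route is shorter and uniform across the three problems.
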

\begin{proof}
By definition, we have $\maxcap \leq \sumcaps$. 
It is also clear we can discard all places with all-zero upper quotas, which means that we may assume $m \leq \sumcaps$. 
Finally, we can also discard all services for which every place has zero upper quota (removing also all families that require such a service), implying that we may suppose $t \leq \sumcaps$. 
Hence we have $m+t+\maxcap \leq 3\sumcaps$, and the result follows from \cref{xp:mt}.
\end{proof}

\subsection{Additonal Results}
\label{subsec:additonal}
In this section, we make some remarks on how existing results can be modified to show parameterized results regarding $\maxutilf, \finalutil$ and $\tieno$.

\begin{remark}
\label{rm:paretosatwithties}
\textit{\pareto-\RR\ for $ \noser = 1$ is \NPh\ even when $\maxreq = \maxcap = 2$ and no family has ties in its preferences.}
\end{remark}

\begin{proof}
We modify the proof of \cref{thm:satwithties}.
Observe that the total service requirements of the families are $R = 6|X|$, and the total upper quotas of the localities are $Q = 2|C| + 4|X|$.  Since every literal appears in two clauses, and each clause has three literals, there are $\frac{4|X|}{3}$ clauses. Thus $Q = 2 \frac{4|X|}{3} + 4|X| \geq 6|X| = R$. 
We create $Q - R$ many dummy families which each require one unit of service. They find all places acceptable, and rank them in an arbitrary order.
We also modify the preferences of the existing families so that they rank all the families they find acceptable in an arbitrary order. 
Now, $\tieno = 0$.
We set the lower quotas of the places equal to their upper quotas.

Now there is a feasible and acceptable assignment if and only if there is an assignment that assigns every family to a place it finds acceptable. Since the dummy families find every place acceptable, it is sufficient to look into finding an assignment that places the original families to a place they find acceptable. The proof of \cref{thm:satwithties} shows this is \NP-hard. Since a Pareto-optimal, feasible, and acceptable assignment must assign every family to a place they find acceptable, \pareto-\RR\ must be \NP-hard even when $ \noser = 1$, $\maxreq = \maxcap = 2$ and no family has ties in its preferences.
\end{proof}

\begin{remark} \label{rm:mtmaxrequmax}
\textit{If utilities are non-negative, \maxutil-\RR\ is \FPT\ w.r.t.\ $m + \noser + \maxreq + \maxutilf$. If there are no lower bounds, \maxutil-\RR\ is \FPT\ w.r.t.\ $m + \noser + \maxreq + \maxutilf$ regardless of the utilities.}
\end{remark}
\begin{proof}
If there are no lower bounds, we can set any negative utility to $-1$. There is never a reason to match a family to a place where it has negative utility.

We can observe the statement from the proof of \cref{thm:fptservmaxcaploc}. We know that the number of different utility vectors families may have is at most $(\maxutilf + 1)^m$ (or $(\maxutilf + 2)^m$ when there are no lower bounds) and the number of different requirement vectors is, as before, at most $(\maxreq + 1)^{\noser}$. Thus the number of different family types is at most $\maxutilf^m (\maxreq + 1)^{\noser}$ (or $(\maxutilf + 2)^m (\maxreq + 1)^{\noser} $ when there are no lower bounds), which is a function of the parameter.
Rest of the proof proceeds as in the proof of \cref{thm:fptservmaxcaploc}.
\end{proof}

\begin{remark} \label{rm:finalutilmaxreq}\textit{
\maxutil-\RR\ is \XP\ w.r.t. $\maxreq + \finalutil$ when $\noser = 1$ and the utilities are non-negative. 
}
\end{remark}

\begin{proof}
Similarly to the proof of \cref{xp:finalutil}, we iterate over all the size at most $\finalutil$ subsets of family-place pairs, and see if assigning them according to the pairs gives sufficient utility. If yes, we assign the families in the pairs to their places and update the upper and lower quotas accordingly. As we have already obtained sufficient utility, we can ignore the utilities of the families and use \cref{thm:onetboundedrmaxilp} to find a feasible assignment for the updated instance in \FPT-time w.r.t.~$\maxreq$.
\end{proof}
\end{appendices}
\fi
\end{document}
